\documentclass[12pt,oneside,reqno]{article}
\makeatletter
\newcommand{\singlespacing}{\let\CS=\@currsize\renewcommand{\baselinestretch}{1}\tiny\CS}
 \usepackage{a4wide}
 \usepackage[a4paper, total={6.4in, 9.5in}]{geometry}
\usepackage{epsfig}
\usepackage{lineno}
\usepackage{rotating}
\usepackage{booktabs}
\usepackage{enumitem}
\newlist{steps}{enumerate}{1}
\setlist[steps, 1]{label = Step \arabic*:}
\usepackage{comment}
\usepackage{multirow}
\usepackage{siunitx}
\usepackage{lscape}
\usepackage{rotating}
\usepackage[T1]{fontenc}
\usepackage[utf8]{inputenc}
\usepackage{amsmath}
\usepackage{amsthm}
\usepackage{amsfonts,graphicx}
\usepackage{authblk}
\usepackage{verbatim}
\usepackage[mathscr]{eucal}
\usepackage{booktabs}
\usepackage{siunitx}

\usepackage{mathptmx} 
\usepackage{adjustbox}
\usepackage{xcolor}
\usepackage{xspace}
\usepackage{enumitem}
\usepackage{subfig}
\usepackage[colorlinks=true,urlcolor=black,citecolor=black,linkcolor=black,bookmarks=true]{hyperref}
\usepackage[sort & compress,comma,authoryear]{natbib}
\usepackage{float}
\usepackage{epstopdf}
\usepackage[parfill]{parskip}

\newcommand{\RN}[1]{%
	\textup{\uppercase\expandafter{\romannumeral#1}}%
}

\newcommand{\be}{\begin{equation}}
\newcommand{\ee}{\end{equation}}
\newcommand{\beanno}{\begin{eqnarray*}}

\newcommand{\eeanno}{\end{eqnarray*}}
\newcommand{\bea}{\begin{eqnarray}}
\newcommand{\eea}{\end{eqnarray}}
\newcommand{\ba}{\begin{array}}
\newcommand{\ea}{\end{array}}

\newcommand{\bc}{\begin{center}}
\newcommand{\ec}{\end{center}}

\newcommand{\ncom}{\newcommand}

\ncom{\Th}{\Theta}
\ncom{\e}{\xi}
\ncom{\sgn}{{\rm sgn\,}}
\ncom{\integ}[4]{\int_{#1}^{#2}\,{#3}\,d{#4}}
\ncom{\vspan}[1]{{{\rm\,span}\{ #1 \}}}
\ncom{\dm}[1]{ {\displaystyle{#1} } }
\ncom{\ri}[1]{{#1} \index{#1}}
\newtheorem{theorem}{\bf Theorem}[section]

\newtheorem{corollary}{Corollary}[section]

\newtheoremstyle
{remarkstyle}
{}
{11pt}
{}
{}
{\bfseries}
{:}
{     }
{\thmname{#1} \thmnumber{#2} }
\theoremstyle{remarkstyle}

\renewcommand{\baselinestretch}{1.5}
\def \x {x}

\def \a {a}
\def \b {b}
\def \th {\theta}

\def \f {f}
\def \F {F}
\def \ss {n}
\def \i {i}
\def \fx {\phi(\x;\th)}

\def \pv {\Theta}
\allowdisplaybreaks
\begin{document}
	\title{\bf Copula-Based Bivariate Kumaraswamy–Teissier Distributions: Modeling Temperature–Rainfall Dependence and Compound Extremes}
 
\author{ Kamana Mishra$^1$, Tanmay Kayal$^1$ and Sarita Azad$^1$ \thanks{Corresponding Author Email: sarita@iitmandi.ac.in}}
\affil{\it $^1$School of Mathematical and Statistical Sciences \\ \it Indian Institute of Technology Mandi, India  \\ }
	\date{}
	\maketitle
	\vspace{-1.5em}
\begin{abstract}
\noindent This study proposes two novel bivariate distributions for jointly modeling temperature and rainfall by integrating Kumaraswamy–Teissier marginals with Clayton and Gumbel copula structures. To capture a wide range of dependence patterns, including both positive and negative associations, rotated copula variants (90°, 180°, and 270°) are incorporated along with their corresponding tail dependence characteristics. Model parameters are estimated using maximum likelihood and the inference functions for margins (IFM) approach, and their finite-sample performance is assessed through a comprehensive Monte Carlo simulation study. The proposed models are applied to monthly gridded temperature and rainfall data from the Northwest Himalayas, a region characterized by complex hydro-climatic variability. Comparative analysis demonstrates that the proposed framework outperforms several existing bivariate models and effectively captures lower-tail, upper-tail, and asymmetric dependence structures across summer and winter seasons. Based on the selected best-fitting copula models, univariate, joint, and conditional return periods are derived to quantify the risk of compound extremes. The results highlight the capability of the proposed approach to provide a more realistic representation of hydro-climatic dependence and offer a robust framework for assessing the risk of extreme temperature and rainfall events in mountainous regions.

\vspace{-1em}
\end{abstract}
\noindent {\bf Keywords}: {\it Clayton Bivariate Kumaraswamy Teissier distribution, Gumbel Bivariate Kumaraswamy Teissier distribution, Maximum Likelihood Estimation, Inference Functions of Margins, Monte Carlo simulation}

\section{Introduction} 
Bivariate analysis has been widely recognized as an effective framework for modeling hydrological and climatic extremes characterized by multiple interdependent variables \citep{shiau2003return, poonia2022bivariate}. Conventional approaches often rely on restrictive assumptions, such as identical marginal distributions or normality, which are rarely satisfied in real-world hydrological processes \citep{tosunouglu2017joint, zhang2007bivariate}. In contrast, copula-based methods offer a flexible alternative by allowing arbitrary marginal distributions while explicitly capturing the underlying dependence structure between variables \citep{Genest2007, zhang2006bivariate}. A key strength of copulas lies in their ability to model not only overall dependence but also tail dependence, which is crucial for accurately representing the joint occurrence of extreme events. This flexibility has led to their increasing adoption in a wide range of hydrological and climatic applications. In rainfall studies, copula-based frameworks have been extensively utilized to model the joint behaviour of characteristics such as intensity, depth, and duration \citep{nazeri2022application}. Similarly, in flood frequency analysis, copulas have proven effective in capturing the dependence among peak discharge, volume, and duration, thereby improving the estimation of joint flood risks \citep{razmkhah2022multivariate, berbesi2025flood, li2025identifying, xie2023assessment}. Beyond floods, these methods have also been successfully applied to drought and low-flow analysis, where the joint assessment of severity and occurrence provides a more comprehensive understanding of hydrological deficits \citep{deger2023univariate, avsaroglu2022assessment, terzi2026probabilistic}. By effectively capturing complex dependence structures, including tail dependence, they provide more reliable estimates of joint and conditional return periods \citep{li2013return, li2013bivariate, sahoo2020bivariate, yin2022definition}. 

Despite the extensive application of copula-based methods in hydrological studies, the joint analysis of temperature and rainfall remains relatively less explored, particularly in complex mountainous regions such as the Northwest Himalayas (NWH). The NWH region has experienced noticeable variability in both temperature and rainfall patterns in recent decades, characterized by increasing temperature trends and irregular precipitation behaviour \citep{yaduvanshi2021temperature, negi2019appraisal, mishra2024investigating, upadhyaya2023anomalous, mishra2026spatiotemporal}. Recent studies over the Kashmir valley and broader NWH region indicate that increasing temperatures are accompanied by a decline in overall rainfall in some areas, along with significant alterations in precipitation characteristics \citep{shafiq2019temperature}. Moreover, rising temperatures, particularly during winter and spring, have been linked to reduced snowfall, shrinking glacier mass, and enhanced hydrological stress, while also contributing to an increase in extreme rainfall events in certain regions \citep{suri2023rainfall}. 

Recent studies further reveal pronounced short-term spatiotemporal variability in rainfall, characterized by strong spatial dependence and rapidly evolving temporal structures, particularly in high-altitude regions \citep{sharma2025spatio}. Concurrently, long-term analyses indicate substantial seasonal shifts in precipitation driven by changes in atmospheric circulation and temperature, reinforcing the interdependence between thermal and precipitation regimes \citep{banerjee2023solid, jena2019weakening}. Large-scale assessments also identify the NWH region as increasingly vulnerable to both droughts and floods under changing climatic conditions \citep{jena2021observed}.
Capturing such variability is further complicated by observational limitations, as the reliability of rainfall data in mountainous terrain depends strongly on the spatial configuration of rain gauge networks \citep{suri2025optimal}. These challenges highlight the need for robust joint modeling frameworks, as univariate approaches often fail to capture dependence structures, particularly during extremes. From a statistical perspective, flexible probability distributions have been developed to better model extreme temperature and rainfall behaviour and their return levels \citep{poonia2022alpha, poonia2022projection}. In this context, \citep{poonia2023new} proposed a bivariate Exponentiated Teissier distribution using the Clayton copula for temperature over the NWH region and derived corresponding joint and conditional return periods. Extending such frameworks to include rainfall alongside temperature enables a more comprehensive representation of the hydro-climatic system. Accordingly, the present study proposes two new bivariate distributions for jointly modelling temperature and rainfall over the NWH region, along with their associated return periods.

Via copulas, \citep{sklar1973random} established the foundational link between multivariate distribution functions and their univariate margins. For two random variables \( Y_1 \) and \( Y_2 \) with cumulative distribution functions (CDF) \( F_{Y_1}(y_1) \) and \( F_{Y_2}(y_2) \), the joint distribution function \( F(y_1, y_2) \) can be expressed as:
\begin{eqnarray}
	F(y_1, y_2) &=& C(F_1(y_1), F_2(y_2)) 
\end{eqnarray}
where \( C: [0,1]^2 \rightarrow [0,1] \) is a bivariate copula function. Provided the requisite derivatives exist, the corresponding joint probability density function (PDF) is given by:
\begin{eqnarray}
	f(y_1, y_2) &=& c(F_1(y_1), F_2(y_2))f_1(y_1) f_2(y_2) 
\end{eqnarray}
where \( f_{Y_1} \) and \( f_{Y_2} \) are the marginal densities, and \( c(u, v) = \frac{\partial^2 C(u, v)}{\partial u \, \partial v} \) is the copula density.

The Clayton copula is a prominent member of the Archimedean family, extensively employed in dependence modeling due to its ability to capture lower tail dependence. Its bivariate CDF is defined for marginals \( \mathscr{U}, \mathscr{V} \in [0,1] \) as:
\begin{eqnarray}
	C(\mathscr{U}, \mathscr{V}) = (\mathscr{U}^{-\delta_1} + \mathscr{V}^{-\delta_1} -1)^{-1/\delta_1}
\end{eqnarray}
where the dependence parameter is constrained to $\delta_1 \in [-1, \infty) \setminus {0}$. The corresponding copula density function is given by:
\begin{eqnarray}
	c(\mathscr{U}, \mathscr{V}) = (\delta_1+1) (\mathscr{U} \mathscr{V})^{-1-\delta_1} (\mathscr{U}^{-\delta_1} + \mathscr{V}^{-\delta_1} -1)^{-2-1/\delta_1}
\end{eqnarray}
In contrast, the Gumbel copula, another widely used Archimedean copula, is particularly suited for modeling upper tail dependence and its CDF is defined as: 
\begin{align}
	C(\mathscr{U}, \mathscr{V}) = \exp \left( -\left\{ (-\log\mathscr{U})^{\delta_2} + (-\log\mathscr{V})^{\delta_2} \right\}^{1/{\delta_2}} \right)
\end{align}
where  the dependence parameter is restricted to $\delta_2 \in [1, \infty)$. The corresponding copula density function is expressed as:
\begin{align}
	c(\mathscr{U}, \mathscr{V}) =  \frac{\exp \left( -\left\{ (-\log\mathscr{U})^{\delta_2} + (-\log\mathscr{V})^{\delta_2} \right\}^{1/{\delta_2}} \right) \left[\left\{ (-\log\mathscr{U})^{\delta_2} + (-\log\mathscr{V})^{\delta_2} \right\}^{1/{\delta_2}} + {\delta_2} -1 \right] }{ \mathscr{U} \mathscr{V}\ (\log\mathscr{U} \log\mathscr{V})^{1-{\delta_2}}  \left\{ (-\log\mathscr{U})^{\delta_2} + (-\log\mathscr{V})^{\delta_2} \right\}^{2-1/{\delta_2} }}
\end{align}

The Kumaraswamy–Teissier distribution (KTD), introduced by \citep{mishra2026}, provides a flexible framework for modeling hydro-meteorological data. Its cumulative distribution function for $x>0$ is defined as:
\begin{eqnarray}\label{ktcdf}
	\F(\x) = 1-\left( 1-\left( 1-e^{\fx} \right)^\a \right)^\b,
\end{eqnarray}
where, $\fx =  \th \x - e^{\th \x} +1 $ and $\a,\b,\th>0$. The corresponding PDF is:
\begin{eqnarray}\label{ktpdf}
	\f(\x) = \a \b \th (e^{\th \x} -1) e^{\fx} \left(1-e^{\fx} \right)^{\a-1} \left(  1-\left(1-e^{\fx} \right)^{\a} \right)^{\b-1}.
\end{eqnarray}

In this study, two new bivariate distributions based on Clayton and Gumbel copulas with KTD marginals are developed in Section \ref{bktd}. To accommodate both positive and negative dependence structures, the corresponding rotated copulas along with their tail dependence properties are presented in Section \ref{tailsec}. Parameter estimation is carried out using maximum likelihood estimation and the inference functions for margins approach, as described in Section \ref{mlifm}. The accuracy of the estimators are further assessed through a Monte Carlo simulation study in Section \ref{simu}. Finally, the practical applicability of the proposed models is demonstrated through a real-life analysis of temperature and rainfall events over the NWH in Section \ref{rla}.

\section{Bivariate Kumaraswamy Teissier Distribution} \label{bktd}
\subsection{Clayton Bivariate Kumaraswamy Teissier (CBKT) Distribution}

A random vector $(X_1, X_2)$ is said to follow Clayton copula-based bivariate Kumaraswamy-Teissier Distribution with parameter vector $\pv_1 = (a_1, b_1, \th_1, a_2, b_2, \th_2, \delta_1)$ if its CDF is given by
\begin{eqnarray} \label{cbktcdf}
    F(x_1, x_2) = \left[\left\{1-\left( 1-\left( 1-e^{\phi(\x_1;\th_1)} \right)^{a_1} \right)^{b_1} \right\}^{-\delta_1} + \left\{1-\left( 1-\left( 1-e^{\phi(\x_2;\th_2)} \right)^{a_2} \right)^{b_2} \right\}^{-\delta_1} - 1 \right]^{-1/{\delta_1}}
\end{eqnarray}

The corresponding PDF of CBKT($\pv$) is obtained by performing mixed second order partial differentiation on CDF and given by:
\begin{eqnarray}  \label{cbktpdf}
    f(x_1, x_2) &=& a_1 b_1 \th_1 a_2 b_2 \th_2 (\delta_1 +1) (e^{\th_1 x_1}-1) (e^{\th_2 x_2}-1) e^{\phi(\x_1;\th_1)} e^{\phi(\x_2;\th_2)} \left( 1-e^{\phi(\x_1;\th_1)} \right)^{a_1-1} \nonumber \\ &&
    \left( 1-e^{\phi(\x_2;\th_2)} \right)^{a_2-1} \left( 1-\left( 1-e^{\phi(\x_1;\th_1)} \right)^{a_1} \right)^{b_1-1} \left( 1-\left( 1-e^{\phi(\x_2;\th_2)} \right)^{a_2} \right)^{b_2-1} \nonumber \\ && \left\{1-\left( 1-\left( 1-e^{\phi(\x_1;\th_1)} \right)^{a_1} \right)^{b_1} \right\}^{-\delta_1-1} \left\{1-\left( 1-\left( 1-e^{\phi(\x_2;\th_2)} \right)^{a_2} \right)^{b_2} \right\}^{-\delta_1-1} \nonumber \\ && \left[\left\{1-\left( 1-\left( 1-e^{\phi(\x_1;\th_1)} \right)^{a_1} \right)^{b_1} \right\}^{-\delta_1} + \left\{1-\left( 1-\left( 1-e^{\phi(\x_2;\th_2)} \right)^{a_2} \right)^{b_2} \right\}^{-\delta_1} - 1 \right]^{\frac{(-2 \delta_1 -1)}{\delta_1}}
\end{eqnarray}
Figure \ref{Figure 1} illustrates CDF \& PDF of CBKT for a specific set of parameters.
\begin{figure}[!ht]
	\centering
	\subfloat[]{\includegraphics[width=0.49\textwidth]{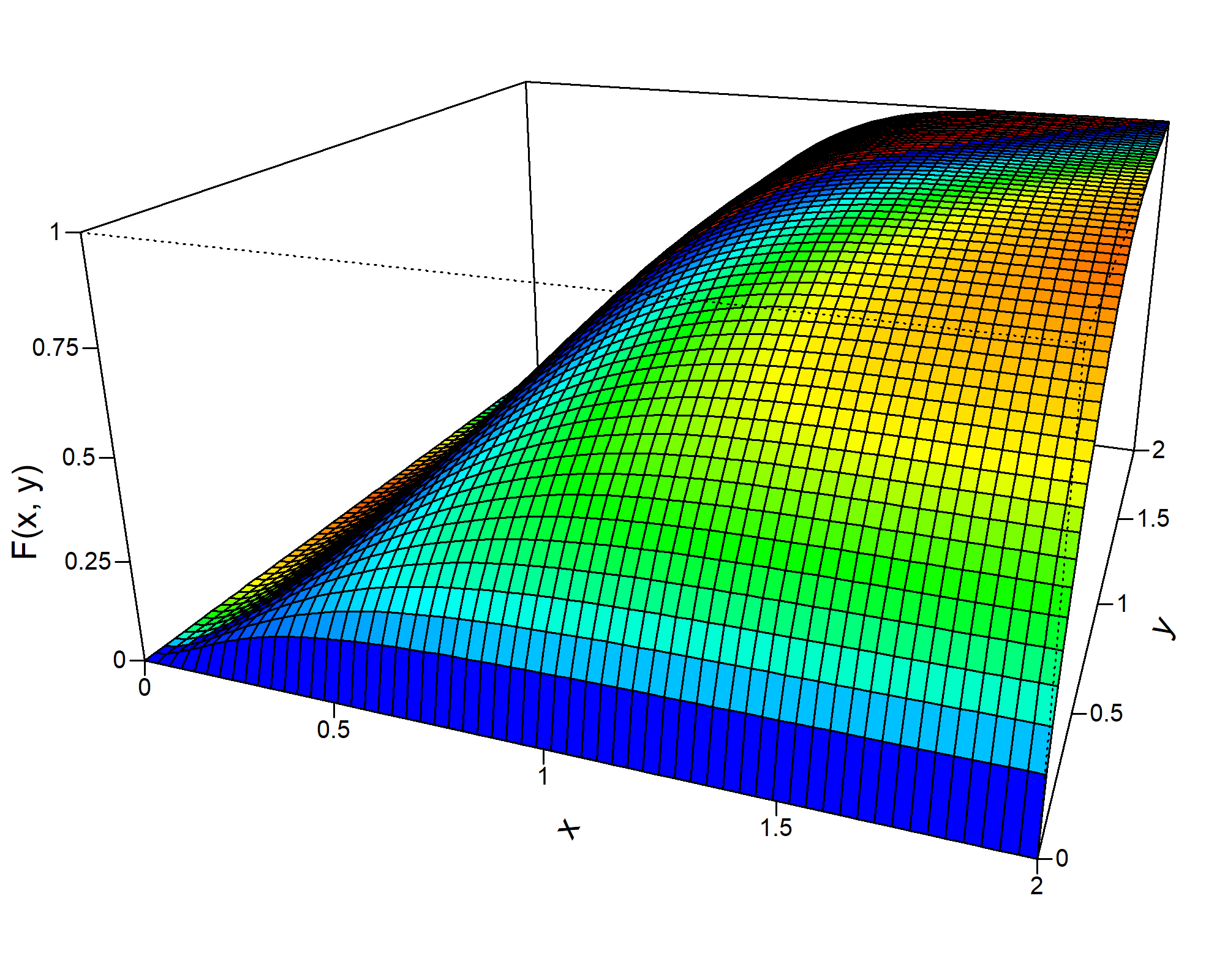}\label{fig1:f1}}
	\hfill
	\subfloat[]{\includegraphics[width=0.49\textwidth]{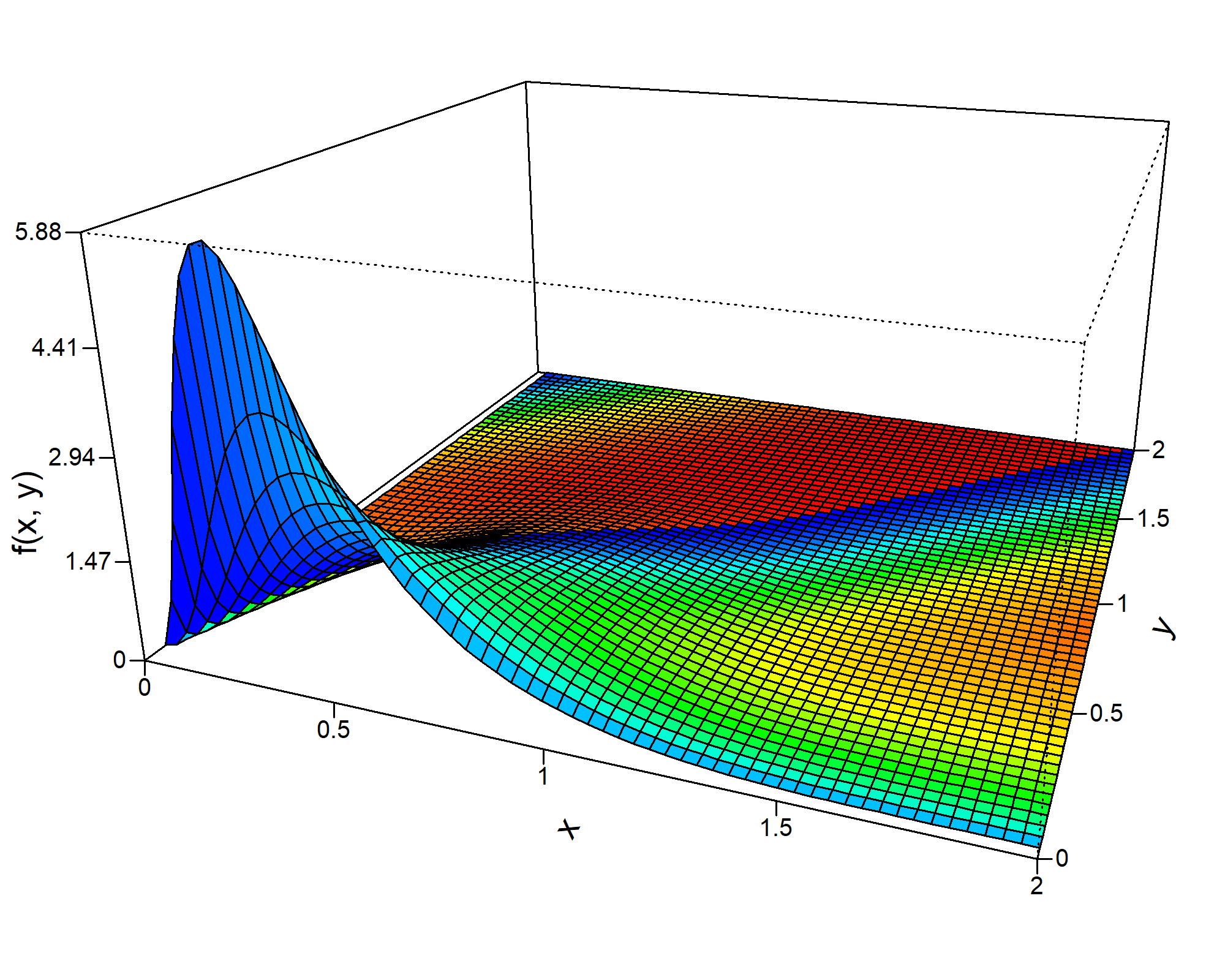}\label{fig1:f2}}
	\caption{CDF and PDF plots of the CBKT distribution for parameters $a_1 = 0.8$, $a_2 = 0.3$, $b_1 = 1.6$, $b_2 = 1.3$, $t_1 = 1.2$, $t_2 = 1.3$, and $\delta_1 = 1.2$}
	\label{Figure 1}
\end{figure}

\begin{theorem}
Let \( X_1, X_2 \sim  CBKT(a_1, b_1, \th_1, a_2, b_2, \th_2, \delta_1)  \). Then, marginal \(X_i \sim KTD(a_i, b_i, \th_i);\ i =1,2 \), i.e.,
\(F_{X_i} (x_i; a_i, b_i, \th_i ) = 1-\left( 1-\left( 1-e^{\phi(\x_i;\th_i)} \right)^{\a_i} \right)^{\b_i} \).
\end{theorem}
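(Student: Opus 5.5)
The plan is to obtain the marginal CDF of $X_1$ as the limit of the joint CDF (\ref{cbktcdf}) as $x_2 \to \infty$, i.e. $F_{X_1}(x_1)=\lim_{x_2\to\infty}F(x_1,x_2)$, and symmetrically for $X_2$ with $x_1\to\infty$. This is simply the copula boundary condition $C(u,1)=u$ expressed through the KTD marginals.

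For brevity write $G_i(x_i)=1-\left(1-\left(1-e^{\phi(x_i;\theta_i)}\right)^{a_i}\right)^{b_i}$, so that (\ref{cbktcdf}) reads
\[
F(x_1,x_2)=\left[G_1(x_1)^{-\delta_1}+G_2(x_2)^{-\delta_1}-1\right]^{-1/\delta_1}.
\]

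\textbf{Step 1.} Show that $G_2(x_2)\to 1$ as $x_2\to\infty$. Since $\theta_2>0$, we have $\phi(x_2;\theta_2)=\theta_2 x_2-e^{\theta_2 x_2}+1\to-\infty$, because the exponential dominates the linear term. Hence $e^{\phi}\to 0$, so $1-e^{\phi}\to1$, then $(1-e^{\phi})^{a_2}\to 1$, and finally $\left(1-(\cdot)^{a_2}\right)^{b_2}\to 0$ because $b_2>0$. Thus $G_2(x_2)\to1$.

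\textbf{Step 2.} Pass to the limit inside the bracket. The map $t\mapsto t^{-\delta_1}$ is continuous at $t=1$, so $G_2(x_2)^{-\delta_1}\to 1$ and the bracket tends to $G_1(x_1)^{-\delta_1}$. Raising to the power $-1/\delta_1$, again by continuity, gives $F_{X_1}(x_1)=G_1(x_1)$, which is exactly the KTD CDF (\ref{ktcdf}) with parameters $(a_1,b_1,\theta_1)$. The argument for $X_2$ is identical with the roles of the indices exchanged.

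\textbf{Main obstacle.} The only delicate point is the sign of $\delta_1$. For $\delta_1\in[-1,0)$ the exponent $-\delta_1$ is positive, and the bracket could in principle be non-positive. The Clayton copula is then properly defined as $\max\{\cdot,0\}^{-1/\delta_1}$. Near $G_2=1$, however, the bracket equals $G_1^{-\delta_1}>0$ whenever $G_1(x_1)>0$, so continuity still applies. At points where $G_1(x_1)=0$ both sides are $0$. A cleaner alternative that sidesteps this is to cite $C(u,1)=u$ for the Clayton copula directly.

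As a cross-check, one could instead integrate the density (\ref{cbktpdf}) over $x_2$. Substituting $v=G_2(x_2)$ reduces this to $\int_0^1 c(u,v)\,dv=1$, so the marginal density is $f_1(x_1)$ from (\ref{ktpdf}). This route is heavier, so the limit argument will be the proof.
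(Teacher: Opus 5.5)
Your proposal is correct and follows the paper's own proof: let $x_2\to\infty$ in the joint CDF, note that the KTD marginal CDF tends to $1$, so the bracket collapses to $G_1(x_1)^{-\delta_1}$ and the limit is the KTD CDF of $X_1$. Your explicit check that $\phi(x_2;\theta_2)\to-\infty$ and your remark on the case $\delta_1\in[-1,0)$ add rigor that the paper omits, but the argument is the same.
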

\begin{proof}
Since, \( 1-\left( 1-\left( 1-e^{\fx} \right)^\a \right)^\b \) is CDF of KTD. So, by using property of CDF, we get
\[\lim_{x \to \infty} 1-\left( 1-\left( 1-e^{\fx} \right)^\a \right)^\b  = 1 \] Now applying the limit on the joint CDF in eq. \eqref{cbktcdf} to obtain the marginals as follows:
\begin{align*}
F_{X_1} (x_1) &=  \lim_{x_2 \to \infty} F_{X_1,X_2} (x_1, x_2) \\ &= \lim_{x_2 \to \infty} \left[\left\{1-\left( 1-\left( 1-e^{\phi(\x_1;\th_1)} \right)^{a_1} \right)^{b_1} \right\}^{-\delta_1} + \left\{1-\left( 1-\left( 1-e^{\phi(\x_2;\th_2)} \right)^{a_2} \right)^{b_2} \right\}^{-\delta_1} - 1 \right]^{-1/{\delta_1}} \\
&= \left[\left\{1-\left( 1-\left( 1-e^{\phi(\x_1;\th_1)} \right)^{a_1} \right)^{b_1} \right\}^{-\delta_1} + 1 - 1 \right]^{-1/{\delta_1}}\\
&= 1-\left( 1-\left( 1-e^{\phi(\x_1;\th_1)} \right)^{a_1} \right)^{b_1}
\end{align*}
Similarly, \(F_{X_2} (x_2) = 1-\left( 1-\left( 1-e^{\phi(\x_2;\th_2)} \right)^{a_2} \right)^{b_2}\).
\end{proof}

\begin{theorem}
Let \( X_1, X_2 \sim  CBKT(a_1, b_1, \th_1, a_2, b_2, \th_2, \delta_1)  \). Then, the conditional PDF of \(X_1\) given \( X_2 = x_2 \) is
\begin{eqnarray}
    f (x_1| x_2)  &=& a_1 b_1 \th_1 (\delta_1 +1) (e^{\th_1 x_1}-1) e^{\phi(\x_1;\th_1)}\left( 1-e^{\phi(\x_1;\th_1)} \right)^{a_1-1}  \left( 1-\left( 1-e^{\phi(\x_1;\th_1)} \right)^{a_1} \right)^{b_1-1}\nonumber \\ &&
     \left\{1-\left( 1-\left( 1-e^{\phi(\x_1;\th_1)} \right)^{a_1} \right)^{b_1} \right\}^{-\delta_1-1} \left\{1-\left( 1-\left( 1-e^{\phi(\x_2;\th_2)} \right)^{a_2} \right)^{b_2} \right\}^{-\delta_1-1} \nonumber \\ && \left[\left\{1-\left( 1-\left( 1-e^{\phi(\x_1;\th_1)} \right)^{a_1} \right)^{b_1} \right\}^{-\delta_1} + \left\{1-\left( 1-\left( 1-e^{\phi(\x_2;\th_2)} \right)^{a_2} \right)^{b_2} \right\}^{-\delta_1} - 1 \right]^{\frac{(-2 \delta_1 -1)}{\delta_1}}
\end{eqnarray}
\end{theorem}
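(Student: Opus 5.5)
The plan is to use the definition of a conditional density, $f(x_1\mid x_2) = f(x_1,x_2)/f_{X_2}(x_2)$, where the joint density $f(x_1,x_2)$ is the CBKT density in eq. \eqref{cbktpdf}. By the preceding theorem, the marginal of $X_2$ is KTD$(a_2,b_2,\th_2)$. Its density $f_{X_2}(x_2)$ is therefore given by eq. \eqref{ktpdf} with $(a,b,\th,x)$ replaced by $(a_2,b_2,\th_2,x_2)$:
\[
f_{X_2}(x_2)= a_2 b_2 \th_2 (e^{\th_2 x_2}-1) e^{\phi(x_2;\th_2)}\left(1-e^{\phi(x_2;\th_2)}\right)^{a_2-1}\left(1-\left(1-e^{\phi(x_2;\th_2)}\right)^{a_2}\right)^{b_2-1}.
\]

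Equivalently, and as a structural check, I would write the joint density through Sklar's representation, $f(x_1,x_2)=c(F_1(x_1),F_2(x_2))\,f_1(x_1)f_2(x_2)$, with $c$ the Clayton density. The conditional density is then $c(F_1(x_1),F_2(x_2))\,f_1(x_1)$. The factor $f_2(x_2)$ cancels exactly, and no further integration is required.

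The key step is bookkeeping the cancellation in eq. \eqref{cbktpdf}. The factors $a_2 b_2\th_2$, $(e^{\th_2x_2}-1)$, $e^{\phi(x_2;\th_2)}$, $(1-e^{\phi(x_2;\th_2)})^{a_2-1}$ and $(1-(1-e^{\phi(x_2;\th_2)})^{a_2})^{b_2-1}$ all appear in the joint density, and they are precisely $f_{X_2}(x_2)$. What survives is:
\begin{itemize}
\item the $x_1$-marginal density factors;
\item $(\delta_1+1)$;
\item the two terms $\{F_i(x_i)\}^{-\delta_1-1}$;
\item the bracket raised to $(-2\delta_1-1)/\delta_1$.
\end{itemize}
These match the claimed expression.

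The only point needing care, which is the mild obstacle, is confirming that eq. \eqref{cbktpdf} is indeed $c(F_1,F_2)f_1f_2$. In particular, the exponent $-2-1/\delta_1=(-2\delta_1-1)/\delta_1$ must be checked. One must also note that the terms $\{F_i\}^{-\delta_1-1}$ are not KTD density factors and so must not be cancelled. I would also remark that the formula holds wherever $f_{X_2}(x_2)>0$, that is, for $x_2>0$.
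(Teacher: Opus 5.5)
Your proposal is correct and takes the route the paper implicitly relies on; the theorem is stated there without a written proof. Dividing the joint density in eq.~\eqref{cbktpdf} by the KTD$(a_2,b_2,\theta_2)$ density of $X_2$ from eq.~\eqref{ktpdf} (equivalently, writing $f(x_1\mid x_2)=c\bigl(F_1(x_1),F_2(x_2)\bigr)f_1(x_1)$) cancels exactly the $x_2$-factors you list and leaves the stated expression, with exponent $-2-1/\delta_1=(-2\delta_1-1)/\delta_1$. One caveat, inherited from the paper's eq.~\eqref{cbktpdf} rather than from your argument: for $\delta_1\in(-1,0)$ the formula holds only where $F_1(x_1)^{-\delta_1}+F_2(x_2)^{-\delta_1}-1>0$ (the density is zero elsewhere), and at $\delta_1=-1$ there is no density at all.
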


\begin{corollary}
    Let \( X_1, X_2 \sim  CBKT(a_1, b_1, \th_1, a_2, b_2, \th_2, \delta_1)  \). Then, the bivariate survival function can be written as 
    \begin{eqnarray} \label{cbktsf}
    \overline{F} (x_1, x_2) = 1-\left[\left\{1-\left( 1-\left( 1-e^{\phi(\x_1;\th_1)} \right)^{a_1} \right)^{b_1} \right\}^{-\delta_1} + \left\{1-\left( 1-\left( 1-e^{\phi(\x_2;\th_2)} \right)^{a_2} \right)^{b_2} \right\}^{-\delta_1} - 1 \right]^{-1/{\delta_1}}
\end{eqnarray}
\end{corollary}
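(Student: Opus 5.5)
The plan is to read $\overline{F}(x_1,x_2)$ in the sense the displayed formula supports, namely as the complement of the joint CDF, $\overline{F}(x_1,x_2)=1-F(x_1,x_2)=P(\{X_1>x_1\}\cup\{X_2>x_2\})$. With that reading the corollary follows at once from the defining CDF \eqref{cbktcdf}.

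First, write the event $\{X_1>x_1\}\cup\{X_2>x_2\}$ as the complement of $\{X_1\le x_1,\,X_2\le x_2\}$. Its probability is therefore $1-F(x_1,x_2)$. Substituting \eqref{cbktcdf} for $F(x_1,x_2)$ gives exactly \eqref{cbktsf}; no differentiation or limit is needed. For later use I would also record, by the first theorem of this section, that $F_{X_i}(x_i)=1-\left(1-\left(1-e^{\phi(x_i;\theta_i)}\right)^{a_i}\right)^{b_i}$. This identifies the two braces in \eqref{cbktsf} as $u=F_{X_1}(x_1)$ and $v=F_{X_2}(x_2)$. Hence \eqref{cbktsf} is simply $1-C(u,v)$ for the Clayton copula, which makes the statement transparent.

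The main obstacle is interpretive rather than computational. If $\overline{F}$ is meant as the joint survival function $P(X_1>x_1,X_2>x_2)$, the stated formula is not correct in general. Inclusion--exclusion gives
\begin{eqnarray*}
P(X_1>x_1,X_2>x_2)=1-F_{X_1}(x_1)-F_{X_2}(x_2)+F(x_1,x_2)=1-u-v+C(u,v).
\end{eqnarray*}
This differs from $1-C(u,v)$ by $2C(u,v)-u-v$, which is nonzero in general. For example, it is nonzero already under independence, $C(u,v)=uv$.

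My proof would therefore do two things. It would state explicitly the convention $\overline{F}=1-F$, which is the quantity underlying ``OR''-type joint return periods, and prove the corollary under that convention as above. It would also add a remark giving the correct ``AND'' survival function $1-u-v+C(u,v)$, with $u$ and $v$ replaced by the KTD marginal CDFs. That second expression is what one would use for the joint exceedance return period.
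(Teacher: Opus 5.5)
Your proposal is correct and takes the same route as the paper, which gives no separate argument: \eqref{cbktsf} is simply $1-F(x_1,x_2)$ with \eqref{cbktcdf} substituted, i.e.\ $1-C(u,v)$ with the KTD marginal CDFs as $u,v$. Your interpretive caveat matches the paper too. It adopts exactly the convention $\overline{F}=1-F$ here, and it records the genuine joint survival $1-u-v+C(u,v)$ separately as the ``joint reliability function'' \eqref{cbktrf} in the next corollary.
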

\begin{corollary}
    Let \( X_1, X_2 \sim  CBKT(a_1, b_1, \th_1, a_2, b_2, \th_2, \delta_1)  \). Then, the joint reliability function can be written as 
    \begin{eqnarray} \label{cbktrf}
    R (x_1, x_2) &=& 1- \left[1-\left( 1-\left( 1-e^{\phi(\x_1;\th_1)} \right)^{a_1} \right)^{b_1} \right] - \left[ 1-\left( 1-\left( 1-e^{\phi(\x_2;\th_2)} \right)^{a_2} \right)^{b_2} \right] +\nonumber \\ && \left[\left\{1-\left( 1-\left( 1-e^{\phi(\x_1;\th_1)} \right)^{a_1} \right)^{b_1} \right\}^{-\delta_1} + \left\{1-\left( 1-\left( 1-e^{\phi(\x_2;\th_2)} \right)^{a_2} \right)^{b_2} \right\}^{-\delta_1} - 1 \right]^{-1/{\delta_1}}
\end{eqnarray}
\end{corollary}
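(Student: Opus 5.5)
The plan is to obtain the joint reliability function $R(x_1,x_2)=P(X_1>x_1,\,X_2>x_2)$ directly from the joint CDF in eq.~\eqref{cbktcdf} and the marginal CDFs given by the first theorem of this section, using inclusion--exclusion. No differentiation is needed; the whole argument happens at the level of events and distribution functions.

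\textbf{Step 1: complement and inclusion--exclusion.} The event $\{X_1>x_1,\,X_2>x_2\}$ is the complement of $\{X_1\le x_1\}\cup\{X_2\le x_2\}$. Inclusion--exclusion then gives
\begin{align*}
R(x_1,x_2) &= 1-P(X_1\le x_1)-P(X_2\le x_2)+P(X_1\le x_1,\,X_2\le x_2)\\
&= 1-F_{X_1}(x_1)-F_{X_2}(x_2)+F(x_1,x_2).
\end{align*}

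\textbf{Step 2: substitute the known pieces.} For $F_{X_i}(x_i)$, $i=1,2$, I invoke the preceding theorem: the marginals of CBKT are $KTD(a_i,b_i,\th_i)$, with $F_{X_i}(x_i)=1-\left(1-\left(1-e^{\phi(x_i;\th_i)}\right)^{a_i}\right)^{b_i}$. These give the two bracketed terms subtracted in eq.~\eqref{cbktrf}. For $F(x_1,x_2)$ I insert the Clayton form from eq.~\eqref{cbktcdf}, which gives the final bracket raised to $-1/\delta_1$. Collecting the terms reproduces eq.~\eqref{cbktrf} exactly.

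\textbf{Step 3: sanity check.} As a check, I will confirm that $R$ is consistent at the boundaries. As $x_1\to0^+$, $\phi(x_1;\th_1)\to0$, so $F_{X_1}\to0$; the Clayton bracket then diverges for $\delta_1>0$, so $F\to0$. Hence $R\to1-F_{X_2}(x_2)$, which is the univariate survival function of $X_2$, as it should be. Symmetrically, as $x_2\to0^+$ one gets $R\to1-F_{X_1}(x_1)$.

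\textbf{Main obstacle.} There is no real analytic difficulty. The only delicate point is the dependence range $\delta_1\in[-1,0)$. There the quantity $u^{-\delta_1}+v^{-\delta_1}-1$ can become nonpositive, and the Clayton copula must be read as $\max\{\cdot,0\}^{-1/\delta_1}$. I would add a remark that the formula holds on the region where the bracket is positive, and that elsewhere the last term is replaced by $0$. Apart from this, the proof is a direct substitution.
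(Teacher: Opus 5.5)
Your proposal is correct and matches the paper's implicit derivation: the paper states this corollary without a written proof, and it follows from the joint CDF in eq.~\eqref{cbktcdf} and the KTD marginals through $R(x_1,x_2)=1-F_{X_1}(x_1)-F_{X_2}(x_2)+F(x_1,x_2)$, the same identity the paper later uses for $T_{\text{AND}}$. Your remark that for $\delta_1\in[-1,0)$ the Clayton term must be read as $\max\{\cdot,0\}^{-1/\delta_1}$ is a valid refinement that the paper omits.
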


\subsection{Gumbel Bivariate Kumaraswamy Teissier (GBKT) Distribution}
A random vector $(X_1, X_2)$ follows a bivariate Kumaraswamy-Teissier distribution with a Gumbel copula, denoted by $\text{Gum-BKTD}(\boldsymbol{\Theta})$, where $\boldsymbol{\Theta} = (a_1, b_1, \theta_1, a_2, b_2, \theta_2, \delta_2)$, if its joint CDF is defined as:
{\small
	\begin{equation}\label{gbktcdf}
		G(x_1,x_2)=
		\exp\!\left(
		-\left\{
		[-\log(1-(1-(1-e^{\phi(x_1;\theta_1)})^{a_1})^{b_1})]^{\delta_2}
		+
		[-\log(1-(1-(1-e^{\phi(x_2;\theta_2)})^{a_2})^{b_2})]^{\delta_2}
		\right\}^{1/\delta_2}
		\right)
\end{equation}}

The corresponding PDF of GBKT($\pv$) is obtained by performing mixed second order partial differentiation on CDF and expressed as:
{\small
\begin{align} \label{gbktpdf}
g(x_1, x_2) =\; 
& a_1 b_1 \th_1 a_2 b_2 \th_2 (e^{\th_1 x_1}-1) (e^{\th_2 x_2}-1) e^{\phi(\x_1;\th_1)} e^{\phi(\x_2;\th_2)} \left( 1-e^{\phi(\x_1;\th_1)} \right)^{a_1-1} \left( 1-e^{\phi(\x_2;\th_2)} \right)^{a_2-1} \notag \\
& \left( 1-\left( 1-e^{\phi(\x_1;\th_1)} \right)^{a_1} \right)^{b_1-1} \left( 1-\left( 1-e^{\phi(\x_2;\th_2)} \right)^{a_2} \right)^{b_2-1}   \notag \\
& \left\{
\left[ -\log \left( 1 - \left(1 - \left(1 - e^{\phi(x_1;\theta_1)} \right)^{a_1} \right)^{b_1} \right) \right]^{\delta_2} +
\left[ -\log \left( 1 - \left(1 - \left(1 - e^{\phi(x_2;\theta_2)} \right)^{a_2} \right)^{b_2} \right) \right]^{\delta_2}
\right\}^{-2 + 1/{\delta_2}} \notag \\
&   \frac{\left[
\left\{
\left[ -\log \left( 1 - \left(1 - \left(1 - e^{\phi(x_1;\theta_1)} \right)^{a_1} \right)^{b_1} \right) \right]^{\delta_2} +
\left[ -\log \left( 1 - \left(1 - \left(1 - e^{\phi(x_2;\theta_2)} \right)^{a_2} \right)^{b_2} \right) \right]^{\delta_2}
\right\}^{1/{\delta_2}} + {\delta_2} - 1
\right]}{
\left[
\log \left( 1 - \left(1 - \left(1 - e^{\phi(x_1;\theta_1)} \right)^{a_1} \right)^{b_1} \right)
\log \left( 1 - \left(1 - \left(1 - e^{\phi(x_2;\theta_2)} \right)^{a_2} \right)^{b_2} \right)
\right]^{ 1 - {\delta_2} }
} \notag \\
 &  \frac{
\exp\left( -\left\{ 
\left[ -\log \left( 1 - \left(1 - \left(1 - e^{\phi(x_1;\theta_1)} \right)^{a_1} \right)^{b_1} \right) \right]^{\delta_2} +
\left[ -\log \left( 1 - \left(1 - \left(1 - e^{\phi(x_2;\theta_2)} \right)^{a_2} \right)^{b_2} \right) \right]^{\delta_2}
\right\}^{1/{\delta_2}} \right)
}{
\left( 1 - \left(1 - \left(1 - e^{\phi(x_1;\theta_1)} \right)^{a_1} \right)^{b_1} \right)
\left( 1 - \left(1 - \left(1 - e^{\phi(x_2;\theta_2)} \right)^{a_2} \right)^{b_2} \right)
} \notag \\
\end{align}
}

The graphical representation for CDF \& PDF of GBKT is shown in Figure \ref{Figure 2}. 
\begin{figure}[!ht]
	\centering
	\subfloat[]{\includegraphics[width=0.49\textwidth]{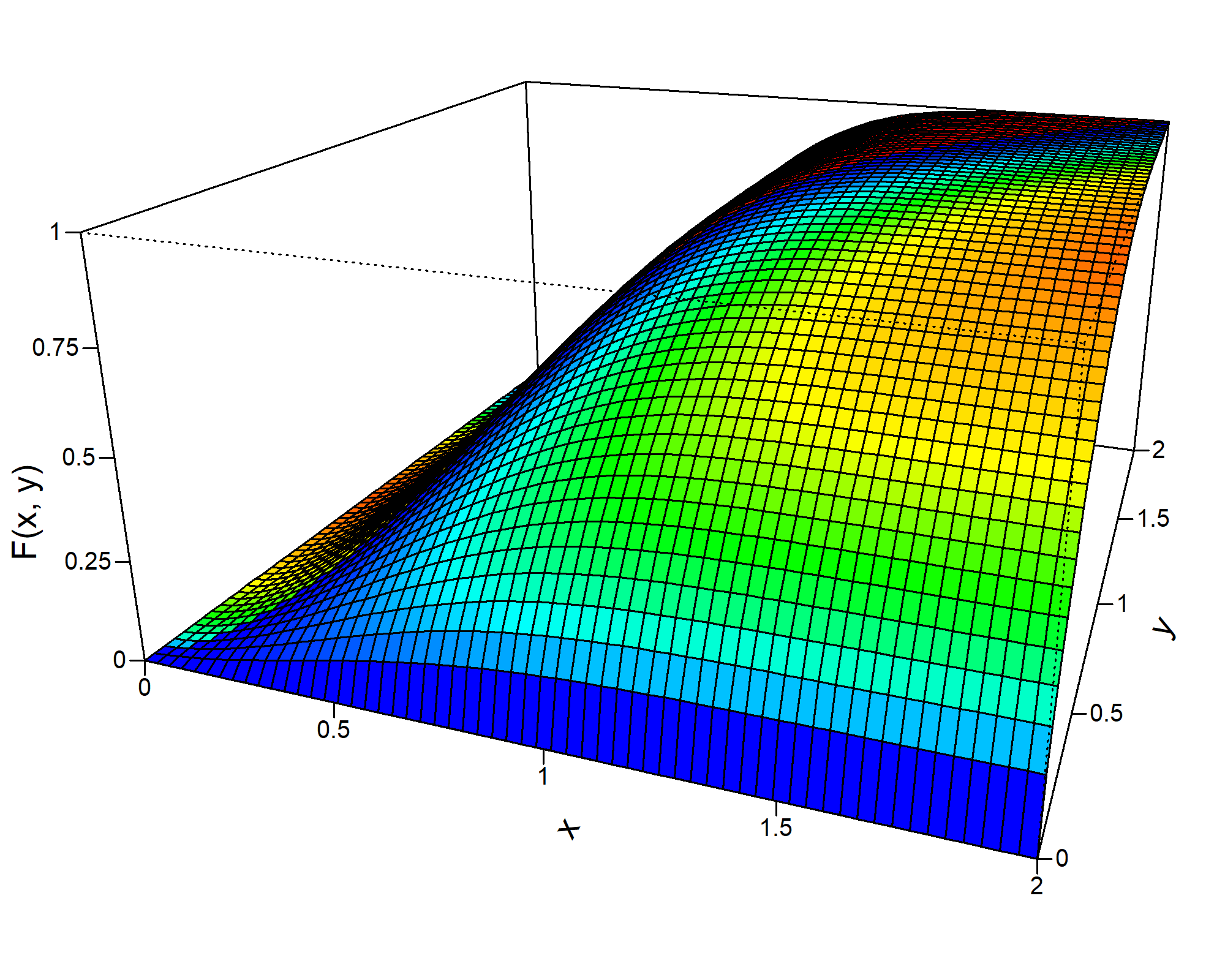}\label{fig2:f1}}
	\hfill
	\subfloat[]{\includegraphics[width=0.49\textwidth]{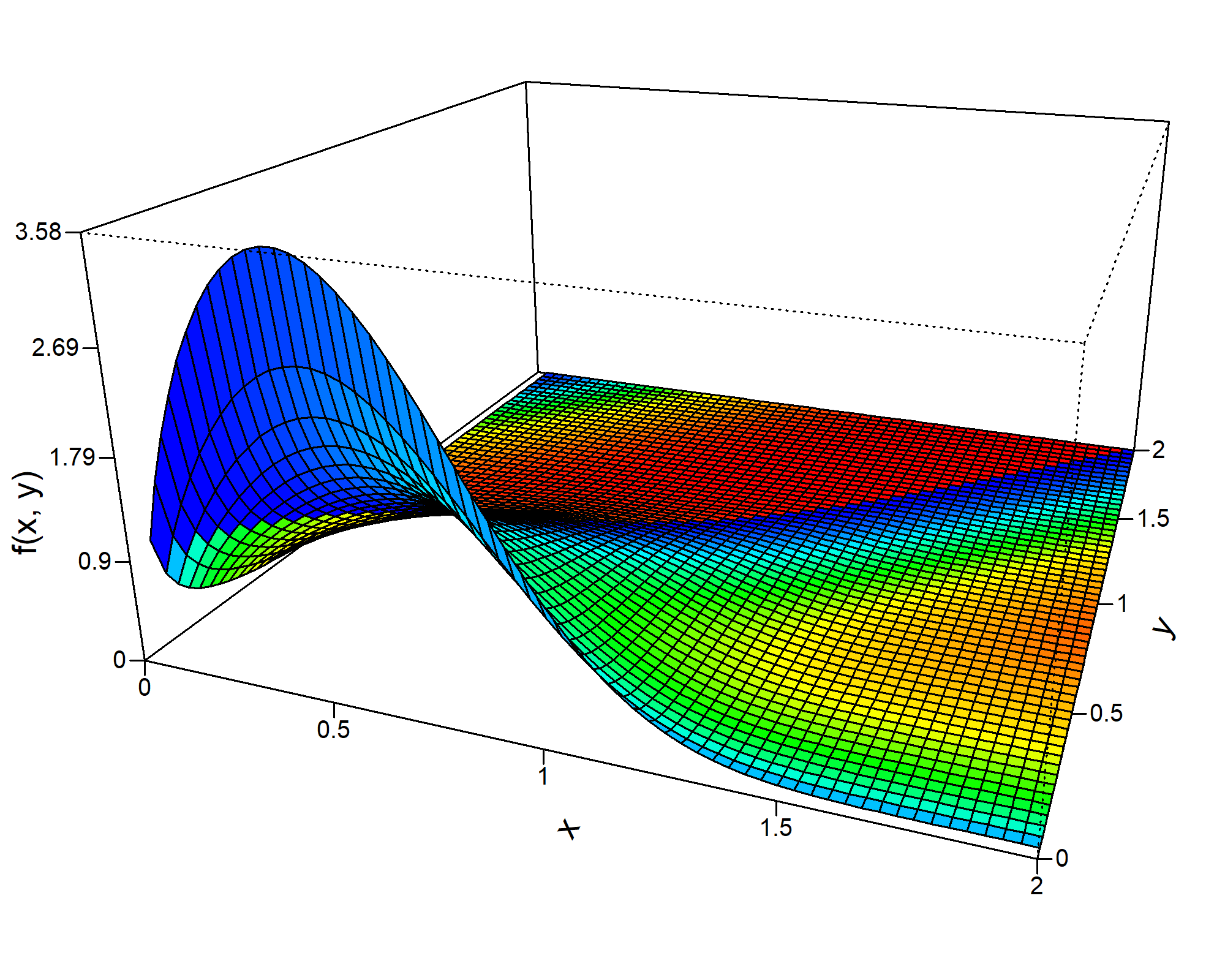}\label{fig2:f2}}
	\caption{CDF and PDF plots of the GBKT distribution for parameters $a_1 = 0.8$, $a_2 = 0.3$, $b_1 = 1.6$, $b_2 = 1.3$, $t_1 = 1.2$, $t_2 = 1.3$, and $\delta_2 = 1.2$}
	\label{Figure 2}
\end{figure}

\begin{theorem}
Let \( X_1, X_2 \sim  GBKT(a_1, b_1, \th_1, a_2, b_2, \th_2, {\delta_2})  \). Then, marginal \(X_i \sim KTD(a_i, b_i, \th_i);\ i =1,2 \), i.e.,
\(G_{X_i} (x_i; a_i, b_i, \th_i ) = 1-\left( 1-\left( 1-e^{\phi(\x_i;\th_i)} \right)^\a \right)^\b \).
\end{theorem}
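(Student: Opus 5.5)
The argument follows the proof of the corresponding result for the CBKT distribution: we recover each marginal as the limit of the joint CDF in eq. \eqref{gbktcdf} when the other argument tends to infinity. Write
\[
u_i(x_i)=1-\left(1-\left(1-e^{\phi(x_i;\theta_i)}\right)^{a_i}\right)^{b_i},\qquad i=1,2,
\]
for the KTD CDF from eq. \eqref{ktcdf}. Then $G(x_1,x_2)=\exp\left(-\left\{(-\log u_1)^{\delta_2}+(-\log u_2)^{\delta_2}\right\}^{1/\delta_2}\right)$, and
\[
G_{X_1}(x_1)=\lim_{x_2\to\infty}G(x_1,x_2).
\]

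First we check that $u_2(x_2)\to 1$ as $x_2\to\infty$. Since $\theta_2>0$, we have $\phi(x_2;\theta_2)=\theta_2x_2-e^{\theta_2x_2}+1\to-\infty$, so $e^{\phi}\to 0$. It follows that $(1-e^{\phi})^{a_2}\to 1$, hence $\left(1-(1-e^{\phi})^{a_2}\right)^{b_2}\to 0$ because $b_2>0$, and therefore $u_2\to 1$. This is simply the fact that KTD is a proper distribution, as already used for CBKT.

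Next, $-\log u_2\to 0^+$. Because $\delta_2\ge 1>0$, this gives $(-\log u_2)^{\delta_2}\to 0$. The map $t\mapsto\exp\left(-\left\{(-\log u_1)^{\delta_2}+t\right\}^{1/\delta_2}\right)$ is continuous at $t=0$, so the limit passes inside and
\[
G_{X_1}(x_1)=\exp\left(-\left\{(-\log u_1)^{\delta_2}\right\}^{1/\delta_2}\right)=\exp(\log u_1)=u_1(x_1).
\]
Here we use $-\log u_1\ge 0$, which holds since $u_1\in(0,1]$ for $x_1>0$. This nonnegativity is what makes $\left\{(-\log u_1)^{\delta_2}\right\}^{1/\delta_2}=-\log u_1$. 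By symmetry of the Gumbel generator the same argument gives $G_{X_2}(x_2)=u_2(x_2)$.

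The only step needing care is the exponent simplification together with the justification for passing the limit through the continuous outer functions; the rest is routine. The statement writes $a,b$ without subscripts, but these should be read as $a_i,b_i$.
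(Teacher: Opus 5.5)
Your proof is correct and follows the same route as the paper: let $x_2\to\infty$ in the joint Gumbel CDF so that $u_2\to 1$, making the second term $[-\log 1]^{\delta_2}=0$, and then simplify $\exp(-\{(-\log u_1)^{\delta_2}\}^{1/\delta_2})=u_1$. Two of your justifications are left implicit in the paper: the continuity argument for passing the limit inside, and the fact that $-\log u_1\ge 0$, which is needed for the exponent simplification.
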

\begin{proof}
Applying the limit on the joint CDF in eq. \eqref{gbktpdf} to obtain the marginals as follows:
\begin{align*}
	G_{X_1} (x_1) 
	&=  \lim_{x_2 \to \infty} G (x_1, x_2) \\ 
	&= \lim_{x_2 \to \infty} \exp\Bigg( -\Bigg\{ 
	\left[ -\log \left( 1 - \left(1 - \left(1 - e^{\phi(x_1;\theta_1)} \right)^{a_1} \right)^{b_1} \right) \right]^{\delta_2} \\
	&\qquad +
	\left[ -\log \left( 1 - \left(1 - \left(1 - e^{\phi(x_2;\theta_2)} \right)^{a_2} \right)^{b_2} \right) \right]^{\delta_2}
	\Bigg\}^{1/{\delta_2}} \Bigg) \\
	&= \exp\Bigg( -\Bigg\{ 
	\left[ -\log \left( 1 - \left(1 - \left(1 - e^{\phi(x_1;\theta_1)} \right)^{a_1} \right)^{b_1} \right) \right]^{\delta_2} +
	\left[ -\log (1) \right]^{\delta_2} \Bigg\}^{1/{\delta_2}} \Bigg)\\
	&= \exp\left( \log \left( 1 - \left(1 - \left(1 - e^{\phi(x_1;\theta_1)} \right)^{a_1} \right)^{b_1} \right) \right) \\
	&= 1-\left( 1-\left( 1-e^{\phi(x_1;\theta_1)} \right)^{a_1} \right)^{b_1}
\end{align*}
Similarly, \(G_{X_2} (x_2) = 1-\left( 1-\left( 1-e^{\phi(\x_2;\th_2)} \right)^{a_2} \right)^{b_2}\).
\end{proof}

\begin{theorem}
Let \( X_1, X_2 \sim  GBKT(a_1, b_1, \th_1, a_2, b_2, \th_2, {\delta_2})  \). Then, the conditional PDF of \(X_1\) given \( X_2 = x_2 \) is
{\small
\begin{align} 
g (x_1| x_2) =\; 
& a_1 b_1 \th_1 (e^{\th_1 x_1}-1)  e^{\phi(\x_1;\th_1)} \left( 1-e^{\phi(\x_1;\th_1)} \right)^{a_1-1}   \left( 1-\left( 1-e^{\phi(\x_1;\th_1)} \right)^{a_1} \right)^{b_1-1} \notag \\
& \left\{
\left[ -\log \left( 1 - \left(1 - \left(1 - e^{\phi(x_1;\theta_1)} \right)^{a_1} \right)^{b_1} \right) \right]^{\delta_2} +
\left[ -\log \left( 1 - \left(1 - \left(1 - e^{\phi(x_2;\theta_2)} \right)^{a_2} \right)^{b_2} \right) \right]^{\delta_2}
\right\}^{-2 + 1/{\delta_2}} \notag \\
&   \frac{\left[
\left\{
\left[ -\log \left( 1 - \left(1 - \left(1 - e^{\phi(x_1;\theta_1)} \right)^{a_1} \right)^{b_1} \right) \right]^{\delta_2} +
\left[ -\log \left( 1 - \left(1 - \left(1 - e^{\phi(x_2;\theta_2)} \right)^{a_2} \right)^{b_2} \right) \right]^{\delta_2}
\right\}^{1/{\delta_2}} + {\delta_2} - 1
\right]  }  {
\left[
\log \left( 1 - \left(1 - \left(1 - e^{\phi(x_1;\theta_1)} \right)^{a_1} \right)^{b_1} \right)
\log \left( 1 - \left(1 - \left(1 - e^{\phi(x_2;\theta_2)} \right)^{a_2} \right)^{b_2} \right)
\right]^{ 1 - {\delta_2} }
} \notag \\
 &  \frac{
\exp\left( -\left\{ 
\left[ -\log \left( 1 - \left(1 - \left(1 - e^{\phi(x_1;\theta_1)} \right)^{a_1} \right)^{b_1} \right) \right]^{\delta_2} +
\left[ -\log \left( 1 - \left(1 - \left(1 - e^{\phi(x_2;\theta_2)} \right)^{a_2} \right)^{b_2} \right) \right]^{\delta_2}
\right\}^{1/{\delta_2}} \right)
}{
\left( 1 - \left(1 - \left(1 - e^{\phi(x_1;\theta_1)} \right)^{a_1} \right)^{b_1} \right)
\left( 1 - \left(1 - \left(1 - e^{\phi(x_2;\theta_2)} \right)^{a_2} \right)^{b_2} \right)
} \notag \\
\end{align}
}
\end{theorem}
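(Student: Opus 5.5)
The plan is to use the definition of the conditional density, $g(x_1\mid x_2)=g(x_1,x_2)/g_{X_2}(x_2)$. The joint density $g(x_1,x_2)$ is the GBKT density in eq. \eqref{gbktpdf}. The marginal density of $X_2$ comes from the preceding theorem: it shows $X_2\sim KTD(a_2,b_2,\th_2)$, so $g_{X_2}(x_2)$ is the KTD density \eqref{ktpdf} with parameters $(a_2,b_2,\th_2)$. Since $g_{X_2}(x_2)>0$ on $x_2>0$, the ratio is well defined.

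Structurally, I would write the joint density through Sklar's representation,
\begin{equation*}
g(x_1,x_2)=c\bigl(G_{X_1}(x_1),G_{X_2}(x_2)\bigr)\, f_1(x_1)\, f_2(x_2),
\end{equation*}
where $c$ is the Gumbel copula density with parameter $\delta_2$. Dividing by $f_2(x_2)$ then gives
\begin{equation*}
g(x_1\mid x_2)=c\bigl(G_{X_1}(x_1),G_{X_2}(x_2)\bigr) f_1(x_1).
\end{equation*}

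Concretely, in \eqref{gbktpdf} I would group the factors into three pieces. The first is the $x_1$-marginal factor $a_1 b_1 \th_1(e^{\th_1x_1}-1)e^{\phi(x_1;\th_1)}(1-e^{\phi(x_1;\th_1)})^{a_1-1}(1-(1-e^{\phi(x_1;\th_1)})^{a_1})^{b_1-1}$. The second is the identical $x_2$-marginal factor. The third is everything else, which is the Gumbel copula density evaluated at $\mathscr{U}=G_{X_1}(x_1)$ and $\mathscr{V}=G_{X_2}(x_2)$. The $x_2$-marginal factor is exactly \eqref{ktpdf} at $(a_2,b_2,\th_2)$, so it cancels against the denominator. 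What remains is the displayed expression: the $x_1$-marginal factor times the copula part. That copula part consists of the $\{\cdot\}^{-2+1/\delta_2}$ term, the bracket $[\{\cdot\}^{1/\delta_2}+\delta_2-1]$, the $[\log\cdot\log]^{1-\delta_2}$ denominator, and the exponential over $\mathscr{U}\mathscr{V}$, all of which are retained.

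I expect no step to be a real obstacle; the work is bookkeeping. The one point to check carefully is that every factor depending only on $x_2$, apart from the KTD density of $X_2$, really belongs to the copula density and must not be cancelled. In particular, the $1/\mathscr{V}$ and $\log\mathscr{V}$ terms have to remain. I would verify this by matching \eqref{gbktpdf} term by term against the Gumbel density formula with $\mathscr{U},\mathscr{V}$ replaced by the KTD CDFs.
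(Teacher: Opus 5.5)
Your proposal is correct and matches the derivation behind the paper's statement, which the paper gives without a written proof. Dividing the joint density \eqref{gbktpdf} by the KTD density of $X_2$ (from the preceding marginal theorem) leaves the $x_1$ marginal factor times the Gumbel copula density evaluated at $(G_{X_1}(x_1), G_{X_2}(x_2))$, and your check that the $1/\mathscr{V}$ and $\log\mathscr{V}$ factors belong to the copula density, not to $f_2$, is the right one: the paper's Gumbel density and \eqref{gbktpdf} do agree term by term.
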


\begin{corollary}
    Let \( X_1, X_2 \sim  GBKT(a_1, b_1, \th_1, a_2, b_2, \th_2, {\delta_2})  \). Then, the bivariate survival function can be written as 
\begin{align} \label{gbktsf}
     \overline{G} (x_1, x_2) &= 1- \exp\Bigg( -\Bigg\{ 
     \left[ -\log \left( 1 - \left(1 - \left(1 - e^{\phi(x_1;\theta_1)} \right)^{a_1} \right)^{b_1} \right) \right]^{\delta_2} \notag \\ 
     &\qquad +
     \left[ -\log \left( 1 - \left(1 - \left(1 - e^{\phi(x_2;\theta_2)} \right)^{a_2} \right)^{b_2} \right) \right]^{\delta_2}
     \Bigg\}^{1/{\delta_2}} \Bigg)
\end{align}
\end{corollary}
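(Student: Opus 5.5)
The plan is to follow exactly the convention the paper already adopted for the CBKT case in eq. \eqref{cbktsf}. There the \emph{bivariate survival function} is taken to be the complement of the joint distribution function, $\overline{F}(x_1,x_2)=1-F(x_1,x_2)=P\bigl(\{X_1>x_1\}\cup\{X_2>x_2\}\bigr)$. The quantity $P(X_1>x_1,X_2>x_2)$ is instead called the \emph{joint reliability function}, as in eq. \eqref{cbktrf}. With this definition the corollary is a direct consequence of the GBKT joint CDF, and no new computation beyond substitution is needed.

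Concretely, I will proceed in two steps. First, recall that by the definition of the GBKT distribution the joint CDF $G(x_1,x_2)$ is given by eq. \eqref{gbktcdf}. This is the Gumbel copula $C(\mathscr{U},\mathscr{V})$ evaluated at the KTD marginal CDFs $\mathscr{U}=G_{X_1}(x_1)$ and $\mathscr{V}=G_{X_2}(x_2)$, which were identified as the marginals in the preceding theorem. Second, write
\[
\overline{G}(x_1,x_2)=1-G(x_1,x_2)=1-C\bigl(G_{X_1}(x_1),G_{X_2}(x_2)\bigr),
\]
and substitute $G_{X_i}(x_i)=1-\bigl(1-(1-e^{\phi(x_i;\theta_i)})^{a_i}\bigr)^{b_i}$ into the Gumbel form $\exp\bigl(-\{(-\log\mathscr{U})^{\delta_2}+(-\log\mathscr{V})^{\delta_2}\}^{1/\delta_2}\bigr)$. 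This yields eq. \eqref{gbktsf} verbatim.

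The only point needing care, and the step I regard as the main (if minor) obstacle, is checking that the expression is well defined on the whole support $x_1,x_2>0$. Since $\phi(x;\theta)=\theta x-e^{\theta x}+1<0$ for $x>0$, we have $0<e^{\phi}<1$, and hence each marginal CDF lies strictly in $(0,1)$. Therefore $-\log G_{X_i}(x_i)>0$, and raising it to the power $\delta_2\ge 1$ is legitimate. It follows that $0<\overline{G}(x_1,x_2)<1$. Finally, I will note for consistency that the boundary behaviour is correct: $\overline{G}\to 1$ as both $x_i\to 0^+$, and $\overline{G}\to 0$ as both $x_i\to\infty$ (using the limiting argument of the GBKT marginal theorem). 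This mirrors the CBKT corollary, and an analogous reliability expression $1-G_{X_1}-G_{X_2}+G$ can be written in the same way.
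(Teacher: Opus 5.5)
Your proposal is correct and matches the paper. The paper states this corollary without a separate argument because it is simply $\overline{G}=1-G$ with $G$ the GBKT joint CDF, under the same nonstandard convention as the CBKT corollary, which you correctly identified. Your extra check that each marginal CDF lies in $(0,1)$, so that the logarithms and powers are well defined, is a harmless addition that the paper omits.
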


\begin{corollary}
    Let \( X_1, X_2 \sim  GBKT(a_1, b_1, \th_1, a_2, b_2, \th_2, \delta_2)  \). Then, the joint reliability function can be written as 
{\small
\begin{align} \label{gbktrf}
    R (x_1, x_2) =\; 
& \exp\left( -\left\{ 
\left[ -\log \left( 1 - \left(1 - \left(1 - e^{\phi(x_1;\theta_1)} \right)^{a_1} \right)^{b_1} \right) \right]^{\delta_2} +
\left[ -\log \left( 1 - \left(1 - \left(1 - e^{\phi(x_2;\theta_2)} \right)^{a_2} \right)^{b_2} \right) \right]^{\delta_2}
\right\}^{1/{\delta_2}} \right) +  \notag \\ & 1- \left[1-\left( 1-\left( 1-e^{\phi(\x_1;\th_1)} \right)^{a_1} \right)^{b_1} \right] - \left[ 1-\left( 1-\left( 1-e^{\phi(\x_2;\th_2)} \right)^{a_2} \right)^{b_2} \right] 
\end{align}
} 
\end{corollary}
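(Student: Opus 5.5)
The plan is to identify the joint reliability function with the joint exceedance probability $R(x_1,x_2)=P(X_1>x_1,\,X_2>x_2)$ and to express it entirely through the joint CDF $G$ in \eqref{gbktcdf} and the two marginal CDFs. The main tool is inclusion--exclusion on the complementary events $\{X_1\le x_1\}$ and $\{X_2\le x_2\}$. Writing
\[
P(X_1>x_1,\,X_2>x_2)=1-P(X_1\le x_1)-P(X_2\le x_2)+P(X_1\le x_1,\,X_2\le x_2)
\]
gives the identity $R(x_1,x_2)=1-G_{X_1}(x_1)-G_{X_2}(x_2)+G(x_1,x_2)$, which holds for any bivariate distribution.

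Next I substitute the three ingredients. For the marginals I invoke the preceding theorem on the marginals of the GBKT distribution: each $X_i\sim KTD(a_i,b_i,\th_i)$, so
\[
G_{X_i}(x_i)=1-\left(1-\left(1-e^{\phi(x_i;\th_i)}\right)^{a_i}\right)^{b_i},\qquad i=1,2.
\]
These supply the two bracketed terms subtracted in \eqref{gbktrf}. For the joint term I insert \eqref{gbktcdf} directly; this is the exponential of minus the $\delta_2$-norm of the transformed margins $-\log\bigl(1-(1-(1-e^{\phi(x_i;\th_i)})^{a_i})^{b_i}\bigr)$. Collecting the four terms and reordering them yields exactly the displayed expression.

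There is no real obstacle, since the argument is a single identity followed by substitution. The only points needing care are bookkeeping ones. First, the object in \eqref{gbktrf} is $P(X_1>x_1,X_2>x_2)$, not the quantity $1-G(x_1,x_2)$ called the ``survival function'' in the previous corollary \eqref{gbktsf}; that quantity is $P(X_1>x_1 \text{ or } X_2>x_2)$, and I would state this distinction explicitly. Second, the sign of the joint term must be $+$. As a sanity check I would verify the boundary behaviour: letting $x_2\to 0^+$, where $\phi(x_2;\th_2)\to 0$ so $G_{X_2}\to 0$ and $G(x_1,x_2)\to 0$, gives $R\to 1-G_{X_1}(x_1)$, as it should. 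The same argument, with the Clayton joint CDF \eqref{cbktcdf} in place of $G$, reproduces \eqref{cbktrf}.
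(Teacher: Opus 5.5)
Your proposal is correct and follows the same route as the paper, which states this corollary without proof: it is the identity $R(x_1,x_2)=1-G_{X_1}(x_1)-G_{X_2}(x_2)+G(x_1,x_2)$, with the KTD marginals from the preceding theorem and the Gumbel joint CDF \eqref{gbktcdf} substituted in. Your remark that $1-G(x_1,x_2)$ in \eqref{gbktsf} is really $P(X_1>x_1 \text{ or } X_2>x_2)$, and not the joint exceedance probability, is also accurate.
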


\section{Rotated Copulas and Tail Dependence}\label{tailsec}
The Gumbel and Clayton copulas are among the most widely used Archimedean copulas for modeling dependence between random variables. However, these copulas are restricted to positive dependence and cannot capture negative correlation structures. To overcome this limitation and achieve greater flexibility in modeling both positive and negative associations, rotated versions of the Gumbel and Clayton copulas are introduced. Three types of rotations are employed, namely $90^{\circ}$, $180^{\circ}$, and $270^{\circ}$ which are obtained by reflecting the copula. These rotated versions are mathematically defined as:
\begin{eqnarray}
C_{00}(\mathscr{U}, \mathscr{V}) &=&  C(\mathscr{U}, \mathscr{V}), \\ \label{rot}
C_{10}(\mathscr{U}, \mathscr{V}) &=&  \mathscr{V} - C(1 - \mathscr{U}, \mathscr{V}), \\ \label{rot1}
C_{11}(\mathscr{U}, \mathscr{V}) &=&  \mathscr{U} + \mathscr{V} - 1 + C(1 - \mathscr{U}, 1 - \mathscr{V}), \\ \label{rot2}
C_{01}(\mathscr{U}, \mathscr{V}) &=&  \mathscr{U} - C(\mathscr{U}, 1 - \mathscr{V}),
\end{eqnarray}
where $C_{00}$ denotes the original copula, and $C_{10}$, $C_{11}$, and $C_{01}$ correspond to the $90^{\circ}$, $180^{\circ}$, and $270^{\circ}$ rotations, respectively. These transformations allow the same copula family to describe negative or mixed dependence patterns.

\subsection*{Tail Dependence}
Tail dependence measures the degree of association between extreme values of two random variables. Unlike correlation, which quantifies overall linear dependence, tail dependence focuses on the co-occurrence of extremes — i.e., whether large (or small) values in one variable are likely to coincide with large (or small) values in another. For a copula $C$, the coefficients of tail dependence are defined as:
\begin{equation}
\begin{aligned}
\lambda_{00} &= \lim_{\nu \to 0} \frac{C(\nu, \nu)}{\nu}, \quad
\lambda_{11} = \lim_{\nu \to 0} \frac{2\nu - 1 + C(1 - \nu, 1 - \nu)}{\nu},\\[5pt]
\lambda_{10} &= \lim_{\nu \to 0} \frac{\nu - C(1 - \nu, \nu)}{\nu}, \quad
\lambda_{01} = \lim_{\nu \to 0} \frac{\nu - C(\nu, 1 - \nu)}{\nu}.
\end{aligned}
\end{equation}
\begin{figure}[!ht]
	\centering
	\includegraphics[width=1\linewidth]{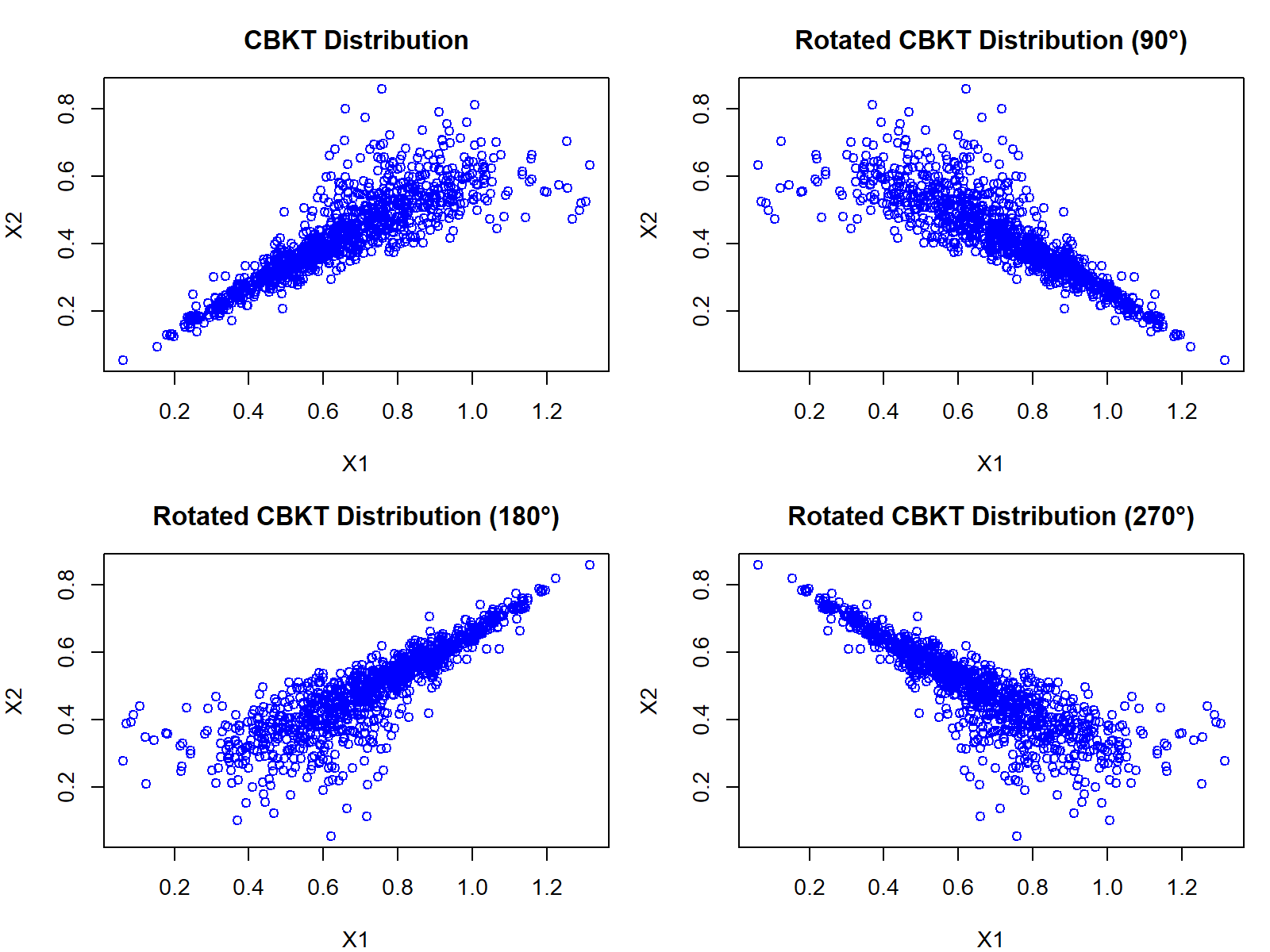} 
	\caption{Scatter plots of simulated data from CBKT model with parameters $a_1 = 1.8$, $a_2 = 1.9$, $b_1 = 1.7$, $b_2 = 1.3$, $t_1 = 1.6$, $t_2 = 2.7$, and $\delta_1 = 5$, shown under four copula rotations (0\textdegree, 90\textdegree, 180\textdegree, and 270\textdegree)}
	\label{Figure 7}
\end{figure}
Here, $\lambda_{00}$ and $\lambda_{11}$ represent the lower and upper tail dependence coefficients respectively, while $\lambda_{10}$ and $\lambda_{01}$ correspond to cross-tail dependencies arising from rotated copulas. For instance, the Clayton copula exhibits lower-tail dependence given by $\lambda_L = 2^{-1/\delta_1}$, while the Gumbel copula exhibits upper-tail dependence expressed as $\lambda_U = 2 - 2^{1/\delta_2}$.

\begin{figure}[!ht]
\centering
   \includegraphics[width=1\linewidth]{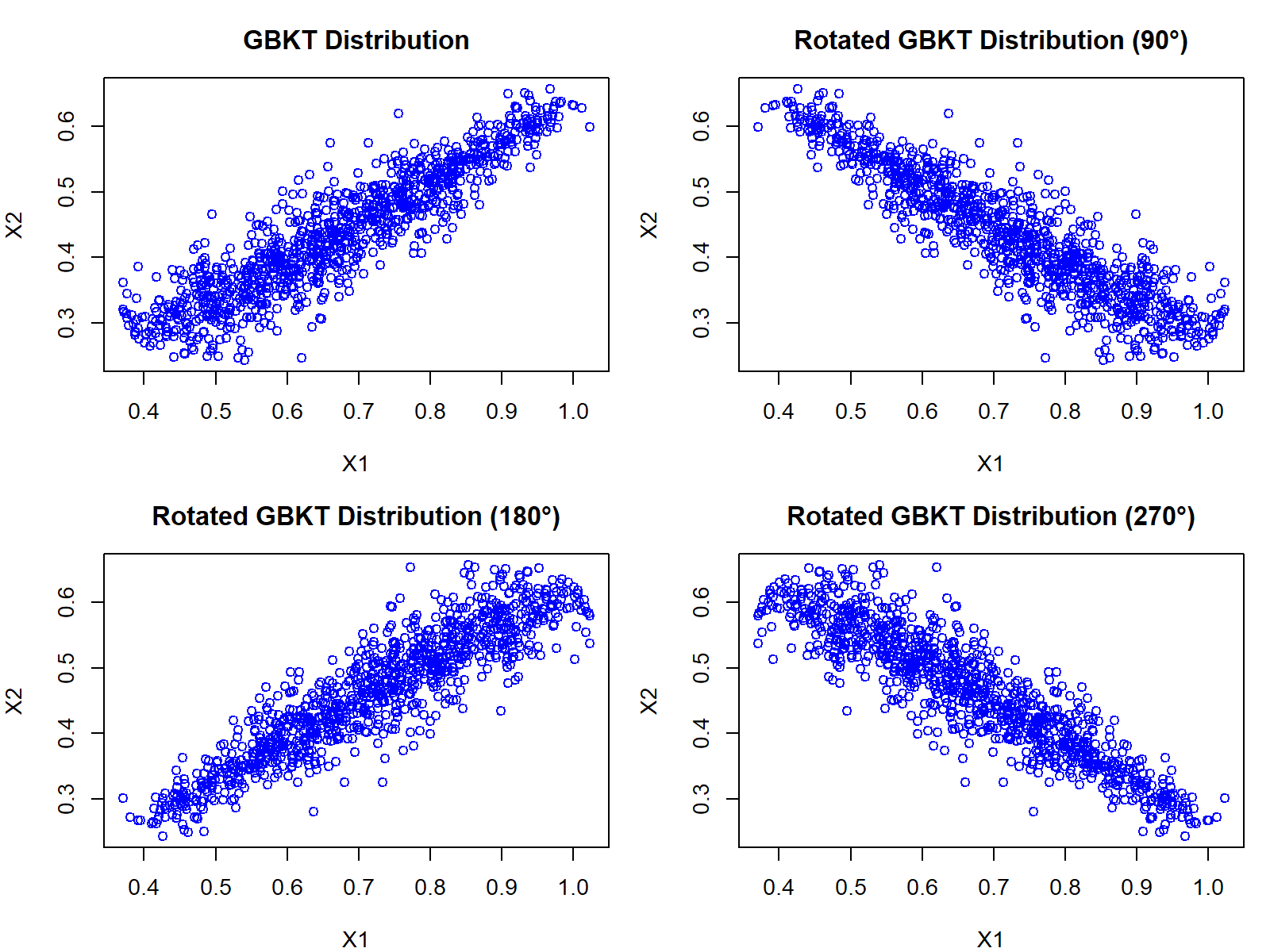} 
\caption{Scatter plots of simulated data from GBKT model with parameters $a_1 = 1.8$, $a_2 = 1.9$, $b_1 = 1.7$, $b_2 = 1.3$, $t_1 = 1.6$, $t_2 = 2.7$, and $\delta_2 = 5$, shown under four copula rotations (0\textdegree, 90\textdegree, 180\textdegree, and 270\textdegree)}
\label{Figure 8}
\end{figure}

As illustrated in Figure~\ref{Figure 7} \& \ref{Figure 8}, the rotation of copulas modifies the region of dependence they capture. The $90^{\circ}$ rotation corresponds to dependence between the lower tail of one variable and the upper tail of the other, while the $270^{\circ}$ rotation represents the opposite configuration. Specifically, the $90^{\circ}$ rotated Gumbel copula (GBKT90) models scenarios where the lower extremes of the first variable are associated with the upper extremes of the second variable, whereas the $90^{\circ}$ rotated Clayton copula (CBKT90) captures the reverse dependence structure. The $180^{\circ}$ rotation, Gumbel copula captures lower and Clayton copula captures upper tail dependence, effectively reversing the direction of dependence. These rotations extend the flexibility of copula-based models, enabling a comprehensive representation of positive, negative, and cross-tail dependencies.

\section{Parameter Estimation}\label{mlifm}
This section presents two primary estimation methodologies for parameter estimation of CBKT and GBKT distributions: Maximum Likelihood Estimation and Inference Functions for Margins.

\subsection{Maximum Likelihood Estimation}

Let $\{(X_{1i}, X_{2i})\}_{i=1}^n$ represent a random sample from the CBKT($\pv_1$) distribution. The log-likelihood function $L$, derived from PDF in equation \eqref{cbktpdf}, is expressed as:

\begin{eqnarray} \label{mlec}
\mathcal{L}(\pv_1) &=& n \log(\a_1 \b_1 \th_1 \a_2 \b_2 \th_2) + n \log(\delta_1 +1) + 
\sum_{\i=1}^{\ss} \log(e^{\th_1 \x_{1i}} -1) +  \sum_{\i=1}^{\ss} \log(e^{\th_2 \x_{2i}} -1) + \sum_{\i=1}^{\ss} \phi(\x_{1i}; \th_1) \nonumber \\ && + \sum_{\i=1}^{\ss} \phi(\x_{2i}; \th_2) + (\a_1-1)\sum_{\i=1}^{\ss}\log y_{1i} + (\a_2-1)\sum_{\i=1}^{\ss}\log y_{2i}  + (\b_1-1)\sum_{\i=1}^{\ss} \log\left(1- y_{1i} ^{\a_1} \right) \nonumber \\ && + (\b_2-1)\sum_{\i=1}^{\ss} \log\left(1- y_{2i} ^{\a_2} \right) - (\delta_1 + 1)\left( \sum_{\i=1}^{\ss} \log \left( 1 - \left(1- y_{1i} ^{a_1} \right) ^{\b_1} \right) + \sum_{\i=1}^{\ss} \log \left( 1 - \left(1- y_{2i} ^{a_2} \right) ^{\b_2} \right) \right) \nonumber \\ && - \frac{2 \delta_1 + 1}{\delta_1} \sum_{\i=1}^{\ss} \log \left( \left( 1 - \left(1- y_{1i} ^{a_1} \right) ^{\b_1} \right)^{-\delta_1} + \left( 1 - \left(1- y_{2i} ^{a_2} \right) ^{\b_2} \right)^{-\delta_1} - 1 \right)
\end{eqnarray}
where, $ 1-e^{\phi(\x_{1i}; \th_1)} = y_{1i}$ and where, $ 1-e^{\phi(\x_{2i}; \th_2)} = y_{2i}$.

Furthermore, suppose $\{(X_{1i}, X_{2i})\}_{i=1}^n$ represent a random sample from the GBKT($\pv_2$) distribution. The log-likelihood function $L$, derived from PDF in equation \eqref{gbktpdf}, is given by:
\begin{eqnarray} \label{mleg}
\mathcal{L}(\pv_2) &=& n \log(\a_1 \b_1 \th_1 \a_2 \b_2 \th_2) + 
\sum_{\i=1}^{\ss} \log(e^{\th_1 \x_{1i}} -1) +  \sum_{\i=1}^{\ss} \log(e^{\th_2 \x_{2i}} -1) + \sum_{\i=1}^{\ss} \phi(\x_{1i}; \th_1) + \sum_{\i=1}^{\ss} \phi(\x_{2i}; \th_2)  \nonumber \\ && + (\a_1-1)\sum_{\i=1}^{\ss}\log y_{1i} + (\a_2-1)\sum_{\i=1}^{\ss}\log y_{2i}  + (\b_1-1)\sum_{\i=1}^{\ss} \log\left(1- y_{1i} ^{\a_1} \right) + (\b_2-1)\sum_{\i=1}^{\ss} \log\left(1- y_{2i} ^{\a_2} \right) \nonumber \\ &&  - (2-1/{\delta_2}) \left( \sum_{\i=1}^{\ss} \log \left[ \left\{-\log \left( 1 - \left(1- y_{1i} ^{a_1} \right) ^{\b_1} \right) \right\}^{\delta_2} + \left\{-\log \left( 1 - \left(1- y_{2i} ^{a_2} \right) ^{\b_2} \right) \right\}^{\delta_2} \right] \right) \nonumber \\ &&+ \sum_{\i=1}^{\ss} \log \left(\left[ \left\{-\log \left( 1 - \left(1- y_{1i} ^{a_1} \right) ^{\b_1} \right) \right\}^{\delta_2} + \left\{-\log \left( 1 - \left(1- y_{2i} ^{a_2} \right) ^{\b_2} \right) \right\}^{\delta_2} \right]^{1/{\delta_2}} + {\delta_2} -1 \right) \nonumber \\ && - (1-{\delta_2}) \left( \sum_{\i=1}^{\ss} \log \left[ \left\{\log \left( 1 - \left(1- y_{1i} ^{a_1} \right) ^{\b_1} \right) \right\}  \left\{\log \left( 1 - \left(1- y_{2i} ^{a_2} \right) ^{\b_2} \right) \right\} \right] \right)  \nonumber \\ && -  \sum_{\i=1}^{\ss} \log \left[ \left( 1 - \left(1- y_{1i} ^{a_1} \right) ^{\b_1} \right)  \left( 1 - \left(1- y_{2i} ^{a_2} \right) ^{\b_2} \right) \right] \nonumber \\ && -  \sum_{\i=1}^{\ss} \left[ \left\{-\log \left( 1 - \left(1- y_{1i} ^{a_1} \right) ^{\b_1} \right) \right\}^{\delta_2} + \left\{-\log \left( 1 - \left(1- y_{2i} ^{a_2} \right) ^{\b_2} \right) \right\}^{\delta_2} \right]^{1/{\delta_2}} 
\end{eqnarray}

For the CBKT distribution, the parameter vector $\pv_1 = (a_1, b_1, \th_1, a_2, b_2, \th_2, \delta_1)$ is estimated by maximizing the likelihood function presented in equation \eqref{mlec}. Similarly, parameter estimation for the GBKT distribution involves optimizing the likelihood function in equation \eqref{mleg} to obtain estimates for the parameter vector $\pv_2 = (a_1, b_1, \th_1, a_2, b_2, \th_2, \delta_2)$. The analytical complexity inherent in both likelihood functions $\mathcal{L}(\pv_1)$ and $\mathcal{L}(\pv_2)$, combined with the high-dimensional parameter space comprising seven parameters each, presents substantial theoretical challenges. Specifically, the intricate functional forms preclude straightforward analytical proofs regarding the existence, uniqueness, and asymptotic properties of the maximum likelihood estimators. The absence of tractable closed-form solutions further necessitates the application of sophisticated numerical optimization techniques to solve the corresponding systems of seven nonlinear estimating equations detailed in Appendix A \& B. Computational implementation was achieved using the \texttt{optim()} function within the R software with a suitable set of initial values.

\subsection{Inference functions for margins (IFM)}

The Inference Functions for Margins (IFM) method employs a two-stage estimation procedure for parameter determination. This approach first estimates the marginal distributions independently, then subsequently estimates the copula dependence structure using the marginal parameter estimates. For comprehensive details on the IFM methodology, refer to \citep{joe1996estimation}. \\
\textbf{Step 1: Marginal Distribution Estimation}\\
Given the KTD marginal distributions within the CBKT and GBKT frameworks, the log-likelihood functions for the marginal distributions, denoted as $L_j$ for $j = 1, 2$, are formulated as follows:
\begin{eqnarray} \label{MLE}
	\mathcal{L}(\pv) &=& \ss \log(\a_j \b_j \th_j) + \sum_{\i=1}^{\ss} \log(e^{\phi(\x_{ji}; \th_j)} -1) +  \sum_{\i=1}^{\ss} \phi(\x_{ji}; \th_j) + (\a_j-1)\sum_{\i=1}^{\ss}\log \left(1-e^{\phi(\x_{ji}; \th_j)} \right) \nonumber \\ &&  + (\b_j-1)\sum_{\i=1}^{\ss} \log\left(1- \left(1-e^{\phi(\x_{ji}; \th_j)} \right) ^{\a_j} \right)  \ ; \ j=1,2
\end{eqnarray}
The maximum likelihood estimates for the marginal parameters $(a_1, b_1, \theta_1)$ and $(a_2, b_2, \theta_2)$ are obtained by independently maximizing the log-likelihood functions $\mathcal{L}(\pv)$.\\
\textbf{Step 2: Copula Dependence Parameter Estimation}\\
Building upon the marginal parameter estimates obtained in Step 1, the copula dependence parameters for both models are subsequently estimated. For the CBKT model, the dependence parameter $\delta_1$ is estimated by maximizing the copula log-likelihood function specified in equation \eqref{ifmc}. 
\begin{eqnarray} \label{ifmc}
\mathcal{L}(\pv_1) &=& n \log(\hat{a_1} \hat{b_1} \hat{\th_1} \hat{a_2} \hat{b_2} \hat{\th_2}) + n \log(\delta_1 +1) + 
\sum_{\i=1}^{\ss} \log(e^{\hat{\th_1} \x_{1i}} -1) +  \sum_{\i=1}^{\ss} \log(e^{\hat{\th_2} \x_{2i}} -1) + \sum_{\i=1}^{\ss} \phi(\x_{1i}; \hat{\th_1}) \nonumber \\ && + \sum_{\i=1}^{\ss} \phi(\x_{2i}; \hat{\th_2}) + (\hat{a_1}-1)\sum_{\i=1}^{\ss}\log y_{1i} + (\hat{a_2}-1)\sum_{\i=1}^{\ss}\log y_{2i}  + (\hat{b_1}-1)\sum_{\i=1}^{\ss} \log\left(1- y_{1i} ^{\hat{a_1}} \right) \nonumber \\ && + (\hat{b_2}-1)\sum_{\i=1}^{\ss} \log\left(1- y_{2i} ^{\hat{a_2}} \right) - (\delta_1 + 1)\left( \sum_{\i=1}^{\ss} \log \left( 1 - \left(1- y_{1i} ^{\hat{a_1}} \right) ^{\hat{b_1}} \right) + \sum_{\i=1}^{\ss} \log \left( 1 - \left(1- y_{2i} ^{\hat{a_2}} \right) ^{\hat{b_2}} \right) \right) \nonumber \\ && - \frac{2 \delta_1 + 1}{\delta_1} \sum_{\i=1}^{\ss} \log \left( \left( 1 - \left(1- y_{1i} ^{\hat{a_1}} \right) ^{\hat{b_1}} \right)^{-\delta_1} + \left( 1 - \left(1- y_{2i} ^{\hat{a_2}} \right) ^{\hat{b_2}} \right)^{-\delta_1} - 1 \right)
\end{eqnarray}

Correspondingly, the GBKT model's dependence parameter $\delta_2$ is derived through optimization of the copula log-likelihood function presented in equation \eqref{ifmg}.
\begin{eqnarray} \label{ifmg}
\mathcal{L}(\pv_2) &=& n \log(\hat{a_1} \hat{b_1} \hat{\th_1} \hat{a_2} \hat{b_2} \hat{\th_2}) + 
\sum_{\i=1}^{\ss} \log(e^{\hat{\th_1} \x_{1i}} -1) +  \sum_{\i=1}^{\ss} \log(e^{\hat{\th_2} \x_{2i}} -1) + \sum_{\i=1}^{\ss} \phi(\x_{1i}; \hat{\th_1}) + \sum_{\i=1}^{\ss} \phi(\x_{2i}; \hat{\th_2})  \nonumber \\ && + (\hat{a_1}-1)\sum_{\i=1}^{\ss}\log y_{1i} + (\hat{a_2}-1)\sum_{\i=1}^{\ss}\log y_{2i}  + (\hat{b_1}-1)\sum_{\i=1}^{\ss} \log\left(1- y_{1i} ^{\hat{a_1}} \right) + (\hat{b_2}-1)\sum_{\i=1}^{\ss} \log\left(1- y_{2i} ^{\hat{a_2}} \right) \nonumber \\ &&  - (2-1/{\delta_2}) \left( \sum_{\i=1}^{\ss} \log \left[ \left\{-\log \left( 1 - \left(1- y_{1i} ^{\hat{a_1}} \right) ^{\hat{b_1}} \right) \right\}^{\delta_2} + \left\{-\log \left( 1 - \left(1- y_{2i} ^{\hat{a_2}} \right) ^{\hat{b_2}} \right) \right\}^{\delta_2} \right] \right) \nonumber \\ &&+ \sum_{\i=1}^{\ss} \log \left(\left[ \left\{-\log \left( 1 - \left(1- y_{1i} ^{\hat{a_1}} \right) ^{\hat{b_1}} \right) \right\}^{\delta_2} + \left\{-\log \left( 1 - \left(1- y_{2i} ^{\hat{a_2}} \right) ^{\hat{b_2}} \right) \right\}^{\delta_2} \right]^{1/{\delta_2}} + {\delta_2} -1 \right) \nonumber \\ && - (1-{\delta_2}) \left( \sum_{\i=1}^{\ss} \log \left[ \left\{\log \left( 1 - \left(1- y_{1i} ^{\hat{a_1}} \right) ^{\hat{b_1}} \right) \right\}  \left\{\log \left( 1 - \left(1- y_{2i} ^{\hat{a_2}} \right) ^{\hat{b_2}} \right) \right\} \right] \right)  \nonumber \\ && -  \sum_{\i=1}^{\ss} \log \left[ \left( 1 - \left(1- y_{1i} ^{\hat{a_1}} \right) ^{\hat{b_1}} \right)  \left( 1 - \left(1- y_{2i} ^{\hat{a_2}} \right) ^{\hat{b_2}} \right) \right] \nonumber \\ && -  \sum_{\i=1}^{\ss} \left[ \left\{-\log \left( 1 - \left(1- y_{1i} ^{\hat{a_1}} \right) ^{\hat{b_1}} \right) \right\}^{\delta_2} + \left\{-\log \left( 1 - \left(1- y_{2i} ^{\hat{a_2}} \right) ^{\hat{b_2}} \right) \right\}^{\delta_2} \right]^{1/{\delta_2}} 
\end{eqnarray}
Consistent with the marginal estimation procedure, both dependence parameters $\delta_1$ and $\delta_2$ are computed numerically using nonlinear optimization algorithms.

\section{Simulation Study}\label{simu}
In this section, we present a Monte Carlo simulation study to estimate parameters of the proposed copula-based Kumaraswamy–Teissier distributions (CBKT and GBKT). The primary objectives are twofold: (i) to evaluate the accuracy of the Maximum Likelihood Estimator (MLE) and the Inference Function for Margins (IFM) in estimating the model parameters, and (ii) to investigate the behavior of these estimators across different sample sizes. A total of 1,000 random samples are generated with sizes $n=50, 100, 200,$ and $300$. For the simulation design, random variates from the CBKT and GBKT models are generated using the conditional distribution approach described by \citep{nelsen2006introduction}. The procedure for data generation is identical for both copula models, differing only in the functional forms of the copula. The simulation steps can be summarized as follows: 

\begin{table}[!ht]
\centering

\caption{\small Mean, Bias and RMSE of parameter estimates of CBKT under MLE and IFM}
\label{table1}
\resizebox{1.1\textwidth}{!}
{%
\begin{tabular}{@{}lccccccccccccccc@{}}
\toprule
\toprule
 & n & \multicolumn{7}{c}{MLE} & \multicolumn{7}{c}{IFM}  \\ 
\cmidrule(lr){3-9} \cmidrule(lr){10-16} 
{} & {} & $\hat{a_1}$ & $\hat{b_1}$ & $\hat{\theta_1}$ & $\hat{a_2}$ & $\hat{b_2}$ & $\hat{\theta_2}$ & $\hat{\delta_1}$ &  
$\hat{a_1}$ & $\hat{b_1}$ & $\hat{\theta_1}$ & $\hat{a_2}$ & $\hat{b_2}$ & $\hat{\theta_2}$ & $\hat{\delta_1}$ \\
\midrule
\multicolumn{16}{c}{True Values: $a_1=1.5$, $b_1=1.8$, $\theta_1=1.4$, $a_2=1.6$, $b_2=1.9$, $\theta_2=1.3$, $\delta_1=1.2$}  \\
\midrule

\multirow{4}{*}{Mean} 
& 50 & 1.5543 & 1.8399 & 1.4893 & 1.6566 & 1.9051 & 1.3867 & 1.2663 
 & 1.5627 & 1.8638 & 1.4843 & 1.6613 & 1.9062 & 1.3870 & 1.2394 \\ 
& 100 & 1.5338 & 1.8520 & 1.4745 & 1.6340 & 1.8715 & 1.3853 & 1.2343 
 & 1.5366 & 1.8143 & 1.4848 & 1.6387 & 1.8692 & 1.3873 & 1.2195 \\ 
& 200 & 1.5159 & 1.8469 & 1.4688 & 1.6184 & 1.8705 & 1.3790 & 1.2208 
 & 1.5155 & 1.8528 & 1.4704 & 1.6205 & 1.8583 & 1.3840 & 1.2136 \\ 
& 300 & 1.5090 & 1.8781 & 1.4572 & 1.6113 & 1.9117 & 1.3664 & 1.2178 
 & 1.5085 & 1.8892 & 1.4579 & 1.6122 & 1.9134 & 1.3670 & 1.2131 \\ 
\midrule

\multirow{4}{*}{Bias} 
& 50 & 0.0543 & 0.0399 & 0.0893 & 0.0566 & 0.0051 & 0.0867 & 0.0663 
 & 0.0627 & 0.0638 & 0.0843 & 0.0613 & 0.0062 & 0.0870 & 0.0394 \\ 
& 100 & 0.0338 & 0.0520 & 0.0745 & 0.0340 & 0.0285 & 0.0853 & 0.0343 
 & 0.0366 & 0.0143 & 0.0848 & 0.0387 & 0.0308 & 0.0873 & 0.0195 \\ 
& 200 & 0.0159 & 0.0469 & 0.0688 & 0.0184 & 0.0295 & 0.0790 & 0.0208 
 & 0.0155 & 0.0528 & 0.0704 & 0.0205 & 0.0417 & 0.0840 & 0.0136 \\ 
& 300 & 0.0090 & 0.0781 & 0.0572 & 0.0113 & 0.0117 & 0.0664 & 0.0178 
 & 0.0085 & 0.0892 & 0.0579 & 0.0122 & 0.0134 & 0.0670 & 0.0131 \\ 
\midrule

\multirow{4}{*}{RMSE} 
& 50 & 0.2419 & 0.9451 & 0.2631 & 0.2477 & 0.9428 & 0.2350 & 0.3502 
 & 0.2550 & 0.9477 & 0.2608 & 0.2557 & 0.9439 & 0.2378 & 0.3362 \\ 
& 100 & 0.1822 & 0.9323 & 0.2530 & 0.1873 & 0.9311 & 0.2329 & 0.2622 
 & 0.1881 & 0.9248 & 0.2565 & 0.1928 & 0.9351 & 0.2340 & 0.2568 \\ 
& 200 & 0.1297 & 0.9119 & 0.2443 & 0.1390 & 0.9182 & 0.2274 & 0.1869 
 & 0.1330 & 0.9314 & 0.2496 & 0.1425 & 0.9274 & 0.2311 & 0.1855 \\ 
& 300 & 0.1062 & 0.9122 & 0.2383 & 0.1131 & 0.9129 & 0.2223 & 0.1579 
 & 0.1095 & 0.9354 & 0.2435 & 0.1148 & 0.9216 & 0.2230 & 0.1570 \\ 
\midrule
\midrule
\multicolumn{16}{c}{True Values: $a_1=1.2$, $b_1=2.6$, $\theta_1=0.8$, $a_2=1.1$, $b_2=2.3$, $\theta_2=0.7$, $\delta_1=1.9$}  \\
\midrule
\multirow{4}{*}{Mean} 
& 50  & 1.2430 & 2.5496 & 0.8266 & 1.1392 & 2.2558 & 0.7262 & 1.9326  
      & 1.2441 & 2.5260 & 0.8295 & 1.1401 & 2.2399 & 0.7277 & 1.8945 \\ 
& 100 & 1.2198 & 2.5818 & 0.8147 & 1.1156 & 2.2697 & 0.7171 & 1.9304  
      & 1.2218 & 2.5741 & 0.8162 & 1.1184 & 2.2638 & 0.7186 & 1.9029 \\ 
& 200 & 1.2067 & 2.5773 & 0.8118 & 1.1033 & 2.2619 & 0.7142 & 1.9290  
      & 1.2089 & 2.5777 & 0.8126 & 1.1056 & 2.2771 & 0.7129 & 1.9122 \\ 
& 300 & 1.2027 & 2.5787 & 0.8099 & 1.1005 & 2.2980 & 0.7088 & 1.9262  
      & 1.2046 & 2.5789 & 0.8106 & 1.1012 & 2.2848 & 0.7106 & 1.9149 \\ 
\midrule

\multirow{4}{*}{Bias} 
& 50  & 0.0430 & 0.0504 & 0.0266 & 0.0392 & 0.0442 & 0.0262 & 0.0326 
      & 0.0441 & 0.0740 & 0.0295 & 0.0401 & 0.0601 & 0.0277 & 0.0055 \\ 
& 100 & 0.0198 & 0.0182 & 0.0147 & 0.0156 & 0.0303 & 0.0171 & 0.0304 
      & 0.0218 & 0.0259 & 0.0162 & 0.0184 & 0.0362 & 0.0186 & 0.0029 \\ 
& 200 & 0.0067 & 0.0227 & 0.0118 & 0.0033 & 0.0381 & 0.0142 & 0.0290 
      & 0.0089 & 0.0223 & 0.0126 & 0.0056 & 0.0229 & 0.0129 & 0.0122 \\ 
& 300 & 0.0027 & 0.0213 & 0.0099 & 0.0005 & 0.0020 & 0.0088 & 0.0262 
      & 0.0046 & 0.0211 & 0.0106 & 0.0012 & 0.0152 & 0.0106 & 0.0149 \\ 
\midrule

\multirow{4}{*}{RMSE} 
& 50  & 0.1652 & 0.4771 & 0.0860 & 0.1532 & 0.4858 & 0.0838 & 0.3614  
      & 0.1843 & 0.4847 & 0.0922 & 0.1741 & 0.4890 & 0.0848 & 0.3570 \\ 
& 100 & 0.1178 & 0.4743 & 0.0683 & 0.1076 & 0.4819 & 0.0701 & 0.3039  
      & 0.1275 & 0.4817 & 0.0713 & 0.1177 & 0.4886 & 0.0727 & 0.3004 \\ 
& 200 & 0.0832 & 0.4702 & 0.0617 & 0.0766 & 0.4785 & 0.0639 & 0.2351  
      & 0.0887 & 0.4820 & 0.0640 & 0.0814 & 0.4829 & 0.0639 & 0.2334 \\ 
& 300 & 0.0697 & 0.4622 & 0.0585 & 0.0639 & 0.4751 & 0.0607 & 0.2063  
      & 0.0739 & 0.4745 & 0.0603 & 0.0665 & 0.4827 & 0.0611 & 0.2059 \\ 

\bottomrule

\bottomrule
\end{tabular}%
}
\end{table}

\begin{table}[!ht]
\centering

\caption{\small Mean, Bias and RMSE of parameter estimates of GBKT under MLE and IFM}
\label{table2}
\resizebox{1.1\textwidth}{!}{%
\begin{tabular}{@{}lccccccccccccccc@{}}
\toprule
\toprule
 & n & \multicolumn{7}{c}{MLE} & \multicolumn{7}{c}{IFM}  \\ 
\cmidrule(lr){3-9} \cmidrule(lr){10-16} 
{} & {} & $\hat{a_1}$ & $\hat{b_1}$ & $\hat{\theta_1}$ & $\hat{a_2}$ & $\hat{b_2}$ & $\hat{\theta_2}$ & $\hat{\delta_2}$ &  
$\hat{a_1}$ & $\hat{b_1}$ & $\hat{\theta_1}$ & $\hat{a_2}$ & $\hat{b_2}$ & $\hat{\theta_2}$ & $\hat{\delta_2}$ \\
\midrule
\midrule
\multicolumn{16}{c}{True Values: $a_1=2.4$, $b_1=1.7$, $\theta_1=1.8$, $a_2=2.3$, $b_2=1.8$, $\theta_2=1.3$, $\delta_2=2.5$}  \\
\midrule

\multirow{4}{*}{Mean} 
& 50  & 1.8385 & 2.4002 & 2.6743 & 1.8605 & 2.1975 & 2.0410 & 1.5319 
      & 2.7725 & 1.6629 & 1.9860 & 2.6564 & 1.6944 & 1.4424 & 2.1276 \\ 
& 100 & 1.8985 & 2.3888 & 2.7755 & 1.9413 & 2.1307 & 2.1476 & 1.2361 
      & 2.7667 & 1.5332 & 2.0231 & 2.6066 & 1.6127 & 1.4515 & 2.1323 \\ 
& 200 & 2.0484 & 2.3010 & 2.7750 & 2.0940 & 2.0211 & 2.1515 & 1.2223 
      & 2.7435 & 1.4600 & 2.0391 & 2.5790 & 1.5792 & 1.4556 & 2.1353 \\ 
& 300 & 2.0951 & 2.2411 & 2.7721 & 2.1351 & 1.9954 & 2.1422 & 1.2117 
      & 2.7333 & 1.4414 & 2.0417 & 2.5691 & 1.5583 & 1.4588 & 2.1362 \\ 
\midrule

\multirow{4}{*}{Bias} 
& 50  & 0.5615 & 0.7002 & 0.8743 & 0.4395 & 0.3975 & 0.7410 & 0.9681 
      & 0.3725 & 0.0371 & 0.1860 & 0.3564 & 0.1056 & 0.1424 & 0.3724 \\ 
& 100 & 0.5015 & 0.6888 & 0.9755 & 0.3587 & 0.3307 & 0.8476 & 1.2639 
      & 0.3667 & 0.1668 & 0.2231 & 0.3066 & 0.1873 & 0.1515 & 0.3677 \\ 
& 200 & 0.3516 & 0.6010 & 0.9750 & 0.2060 & 0.2211 & 0.8515 & 1.2777 
      & 0.3435 & 0.2400 & 0.2391 & 0.2790 & 0.2208 & 0.1556 & 0.3647 \\ 
& 300 & 0.3049 & 0.5411 & 0.9721 & 0.1649 & 0.1954 & 0.8422 & 1.2883 
      & 0.3333 & 0.2586 & 0.2417 & 0.2691 & 0.2417 & 0.1588 & 0.3638 \\ 
\midrule

\multirow{4}{*}{RMSE} 
& 50  & 0.8004 & 0.8536 & 0.9897 & 0.7643 & 0.6630 & 0.8620 & 1.6572 
      & 0.5826 & 0.9387 & 0.3823 & 0.5657 & 0.9519 & 0.2699 & 0.3753 \\ 
& 100 & 0.6630 & 0.7489 & 0.9869 & 0.6182 & 0.5059 & 0.8634 & 1.3290 
      & 0.5304 & 0.9397 & 0.4000 & 0.4757 & 0.9436 & 0.2723 & 0.3688 \\ 
& 200 & 0.5320 & 0.6505 & 0.9806 & 0.4742 & 0.3900 & 0.8612 & 1.3043 
      & 0.4690 & 0.9178 & 0.4027 & 0.4019 & 0.9410 & 0.2723 & 0.3650 \\ 
& 300 & 0.4717 & 0.5823 & 0.9740 & 0.4296 & 0.3541 & 0.8481 & 1.2889 
      & 0.4460 & 0.9157 & 0.4010 & 0.3685 & 0.9391 & 0.2722 & 0.3640 \\ 
\midrule
\midrule
\multicolumn{16}{c}{True Values: $a_1=1.2$, $b_1=2.6$, $\theta_1=0.8$, $a_2=1.1$, $b_2=2.3$, $\theta_2=0.7$, $\delta_2=1.9$}  \\
\midrule
\multirow{4}{*}{Mean} 
& 50  & 1.1144 & 2.3645 & 0.7989 & 1.0197 & 2.1200 & 0.6794 & 2.7348 
     & 1.2449 & 2.5287 & 0.8297 & 1.1396 & 2.2485 & 0.7269 & 1.8197 \\ 
& 100 & 1.0355 & 2.3835 & 0.8348 & 0.9388 & 2.1191 & 0.7179 & 2.4623 
     & 1.2242 & 2.5737 & 0.8175 & 1.1161 & 2.2459 & 0.7199 & 1.8228 \\ 
& 200 & 0.9558 & 2.3662 & 0.8803 & 0.8621 & 2.0991 & 0.7641 & 2.1663 
     & 1.2117 & 2.5555 & 0.8165 & 1.1055 & 2.2352 & 0.7185 & 1.8259 \\ 
& 300 & 0.9027 & 2.3196 & 0.9204 & 0.8137 & 2.1009 & 0.7966 & 1.9535 
     & 1.2079 & 2.5597 & 0.8144 & 1.1028 & 2.2474 & 0.7160 & 1.8266 \\ 
\midrule

\multirow{4}{*}{Bias} 
& 50  & 0.0856 & 0.2355 & 0.0011 & 0.0803 & 0.1800 & 0.0206 & 0.8348 
     & 0.0449 & 0.0713 & 0.0297 & 0.0396 & 0.0515 & 0.0269 & 0.0803 \\ 
& 100 & 0.1645 & 0.2165 & 0.0348 & 0.1612 & 0.1809 & 0.0179 & 0.5623 
     & 0.0242 & 0.0263 & 0.0175 & 0.0161 & 0.0541 & 0.0199 & 0.0772 \\ 
& 200 & 0.2442 & 0.2338 & 0.0803 & 0.2379 & 0.2009 & 0.0641 & 0.2663 
     & 0.0117 & 0.0445 & 0.0165 & 0.0055 & 0.0648 & 0.0185 & 0.0741 \\ 
& 300 & 0.2973 & 0.2804 & 0.1204 & 0.2863 & 0.1991 & 0.0966 & 0.0535 
     & 0.0079 & 0.0403 & 0.0144 & 0.0028 & 0.0526 & 0.0160 & 0.0734 \\ 
\midrule

\multirow{4}{*}{RMSE} 
& 50  & 0.4231 & 0.4814 & 0.3515 & 0.4169 & 0.4830 & 0.3388 & 2.3263 
     & 0.1826 & 0.4860 & 0.0893 & 0.1713 & 0.4915 & 0.0826 & 0.0927 \\ 
& 100 & 0.4340 & 0.4754 & 0.3283 & 0.4307 & 0.4731 & 0.3163 & 2.1190 
     & 0.1285 & 0.4858 & 0.0710 & 0.1196 & 0.4843 & 0.0706 & 0.0829 \\ 
& 200 & 0.4453 & 0.4697 & 0.3060 & 0.4391 & 0.4619 & 0.2942 & 1.8684 
     & 0.0881 & 0.4786 & 0.0642 & 0.0792 & 0.4792 & 0.0637 & 0.0769 \\ 
& 300 & 0.4512 & 0.4623 & 0.2941 & 0.4408 & 0.4476 & 0.2767 & 1.6653 
     & 0.0742 & 0.4738 & 0.0598 & 0.0663 & 0.4764 & 0.0604 & 0.0752 \\ 

\bottomrule
\bottomrule
\end{tabular}%
}
\end{table}

\begin{steps}
    \item Fix the values of all the parameters. 
    \item Generate two independent random variables, $\mathscr{U}$ and $\mathscr{R}$, from the uniform distribution $U(0,1)$. 
    \item For the Clayton copula, compute $\mathscr{V}$ as
    \begin{eqnarray} \label{cda}
   \mathscr{V} = \left[ \mathscr{U}^{-\delta_1}  \left( \mathscr{R}^{{-\delta_1}/({\delta_1 +1})} -1  \right) + 1 \right]^{-1/{\delta_1}} .
\end{eqnarray}
    For the Gumbel copula, $\mathscr{V}$ is obtained by
    \begin{eqnarray} \label{gda}
   \mathscr{V} = \exp \left[- \left\{ (\delta_2-1) ~W_0 \left( \frac{ (\mathscr{R} \mathscr{U})^{1/{1-\delta_2}} (-\log \mathscr{U})}{(\delta_2-1)}  \right) - (-\log \mathscr{U})^{\delta_2} \right\}^{1/{\delta_2}} \right],
\end{eqnarray}
    where $W_0(\cdot)$ denotes the principal branch of the Lambert-W function.
    \item The marginal samples from the Kumaraswamy–Teissier distribution are obtained using its quantile function:
    \begin{eqnarray} \label{kda}
x_1 = \frac{1}{\theta_1} \ln \left( -W_{-1} \left( \frac{ \left(1-(1-p)^{1/{b_1}} \right)^{1/{a_1}}-1}{e} \right) \right),
\end{eqnarray}
\begin{eqnarray} \label{kdb}
x_2 = \frac{1}{\theta_2} \ln \left( -W_{-1} \left( \frac{ \left(1-(1-p)^{1/{b_2}} \right)^{1/{a_2}}-1}{e} \right) \right).
\end{eqnarray}
    \item The resulting bivariate observation is $(x_1,x_2)$. Steps (ii)–(iv) are repeated until the desired sample size is obtained.
\end{steps}
For empirical assessment, random samples are drawn from CBKT$(\Theta)$ under two different parameter settings:  
$(a_1=1.5, b_1=1.8, \theta_1=1.4, a_2=1.6, b_2=1.9, \theta_2=1.3, \delta_1=1.2)$ and $(a_1=1.2, b_1=2.6, \theta_1=0.8, a_2=1.1, b_2=2.3, \theta_2=0.7, \delta_1=1.9)$. For each case, the mean, absolute bias, and root mean squared error (RMSE) of the parameter estimates obtained via MLE and IFM are reported in Table~\ref{table1}. Similarly, simulations are performed for GBKT$(\Theta)$ with parameter sets $(a_1=2.4, b_1=1.7, \theta_1=1.8, a_2=2.3, b_2=1.8, \theta_2=1.3, \delta_2=2.5)$ and $(a_1=1.2, b_1=2.6, \theta_1=0.8, a_2=1.1, b_2=2.3, \theta_2=0.7, \delta_2=1.9)$. The results are summarized in Table~\ref{table2}. All computations are carried out in the R statistical software. The findings from both simulation tables indicate that the sample means of the parameter estimates converge to their true values as the sample size increases. Moreover, both bias and RMSE decrease with larger sample sizes, confirming the consistency of the MLE \& IFM estimators. 


\section{Real life application} \label{rla}

\subsection{Data analysis and pre-processing}
We utilize gridded monthly rainfall and temperature data with a spatial resolution of 0.5$^{\circ} \times$ 0.625$^{\circ}$, covering 101 grid points across the Northwest Himalaya (NWH) region, encompassing Jammu and Kashmir (JK), Himachal Pradesh (HP), and Uttarakhand (UK). The dataset is obtained from the National Aeronautics and Space Administration (NASA) database (\url{https://power.larc.nasa.gov/data-access-viewer/}) for the period 1984–2024. For seasonal analysis, the monthly data are further classified into two major climatological seasons: summer, comprising June–September (JJAS), and winter, comprising December–February (DJF). This seasonal segregation allows for a clearer understanding of the distinct hydro-climatic characteristics prevailing during the monsoon-dominated summer period and the western-disturbance-influenced winter period in the NWH region.
Figures~\ref{Figure 3} and~\ref{Figure 4} present the spatial distribution of key summary statistics for temperature and rainfall during the summer and winter seasons, respectively, including the mean, skewness, and kurtosis across the study region. 
\begin{figure}[ht]
	\begin{minipage}[t]{0.33\textwidth}
		\centering
		\includegraphics[width=1\textwidth]{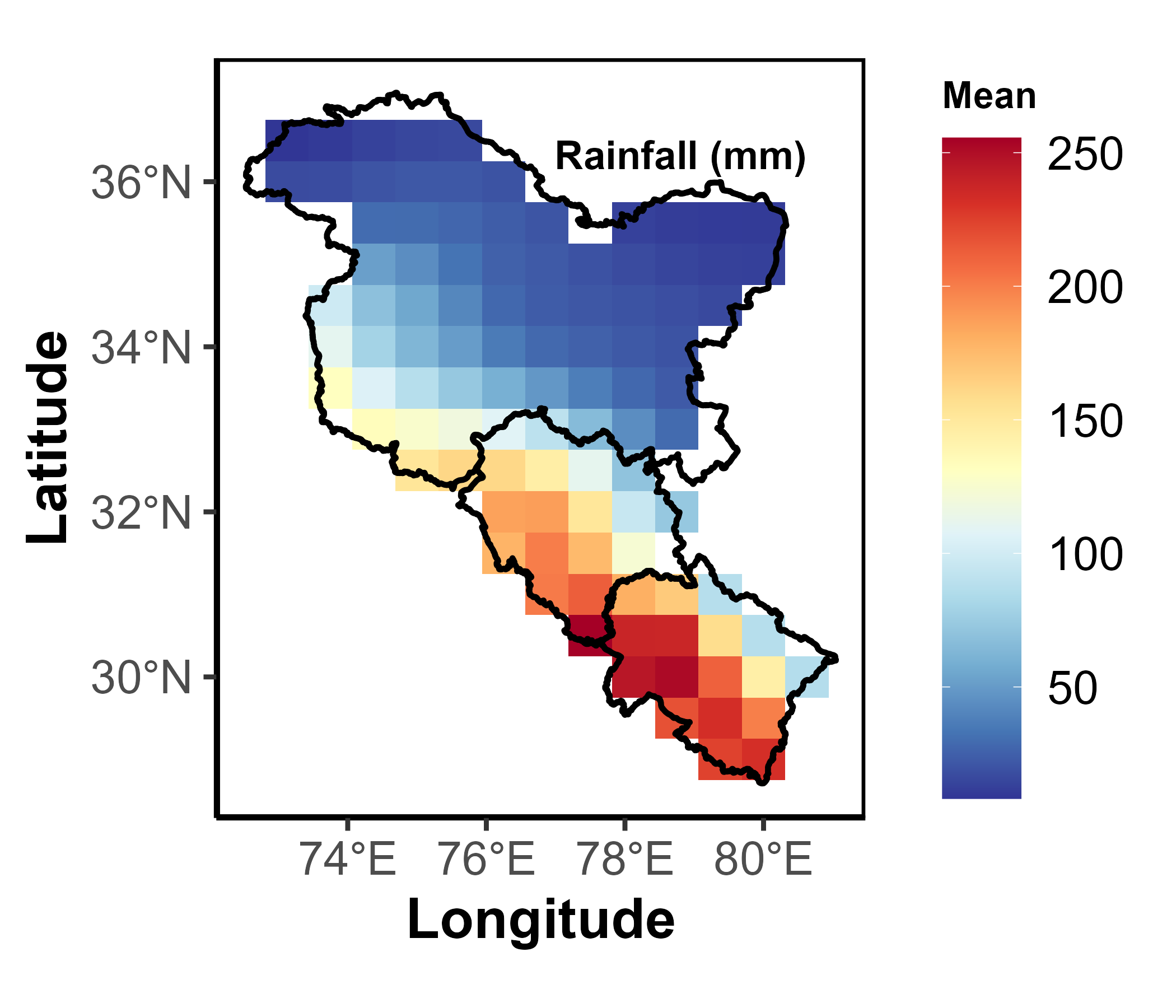}
		\caption*{(a)}
	\end{minipage}\hfill
	\begin{minipage}[t]{0.33\textwidth}
		\centering
		\includegraphics[width=1\textwidth]{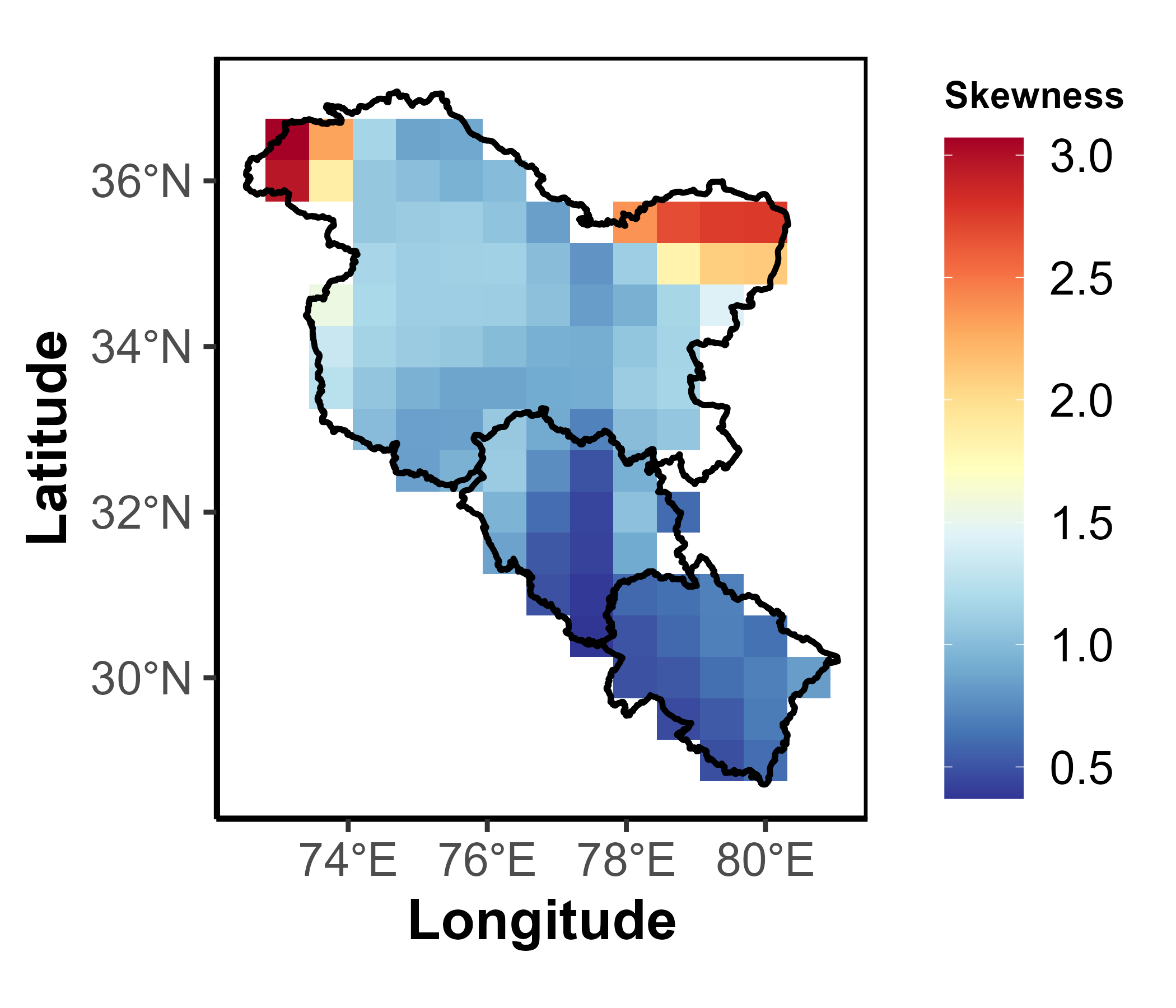}
		\caption*{(b)}
	\end{minipage}\hfill
	\begin{minipage}[t]{0.33\textwidth}
		\centering
		\includegraphics[width=1\textwidth]{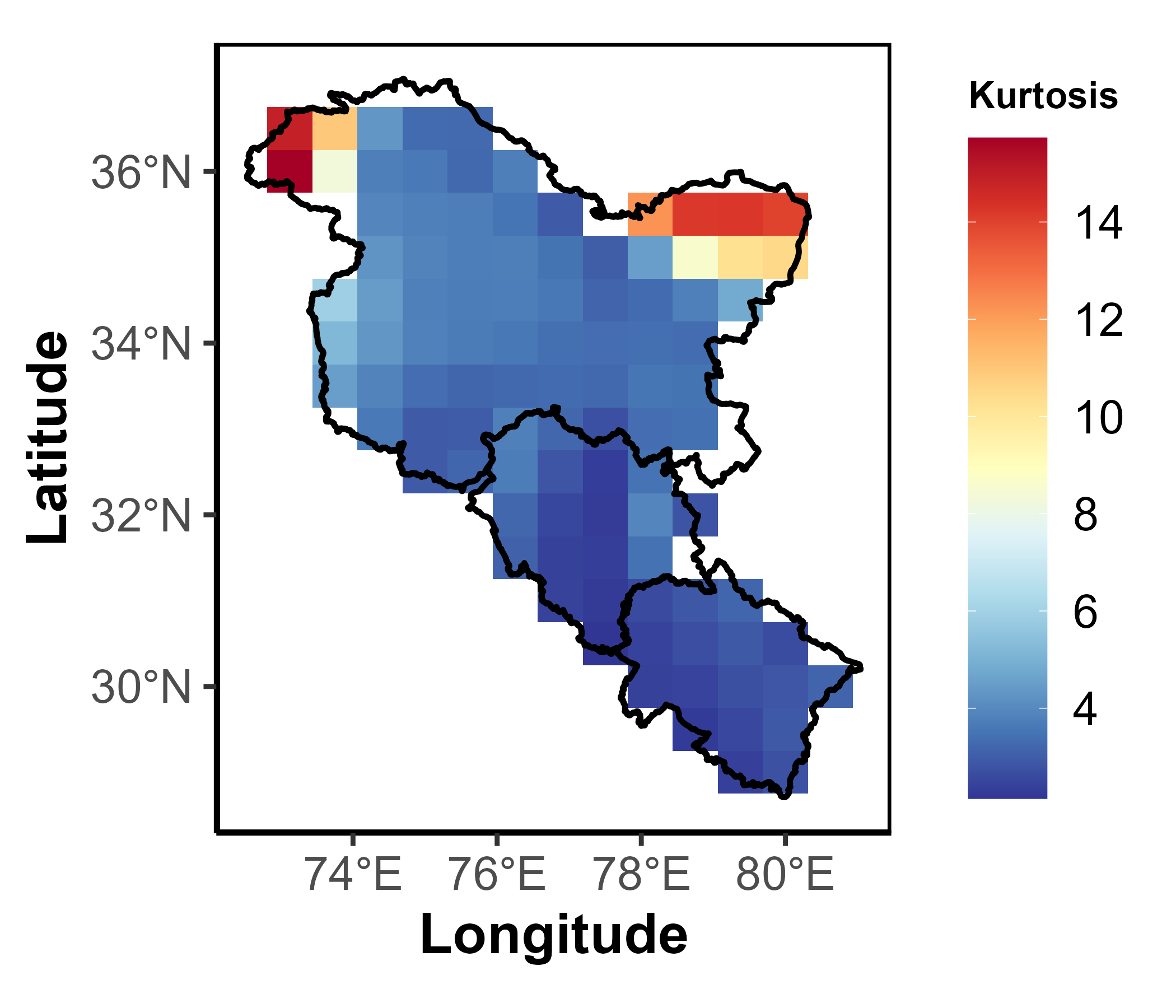}
		\caption*{(c)}
	\end{minipage}
	\medskip
	\begin{minipage}[t]{0.33\textwidth}
		\centering
		\includegraphics[width=1\textwidth]{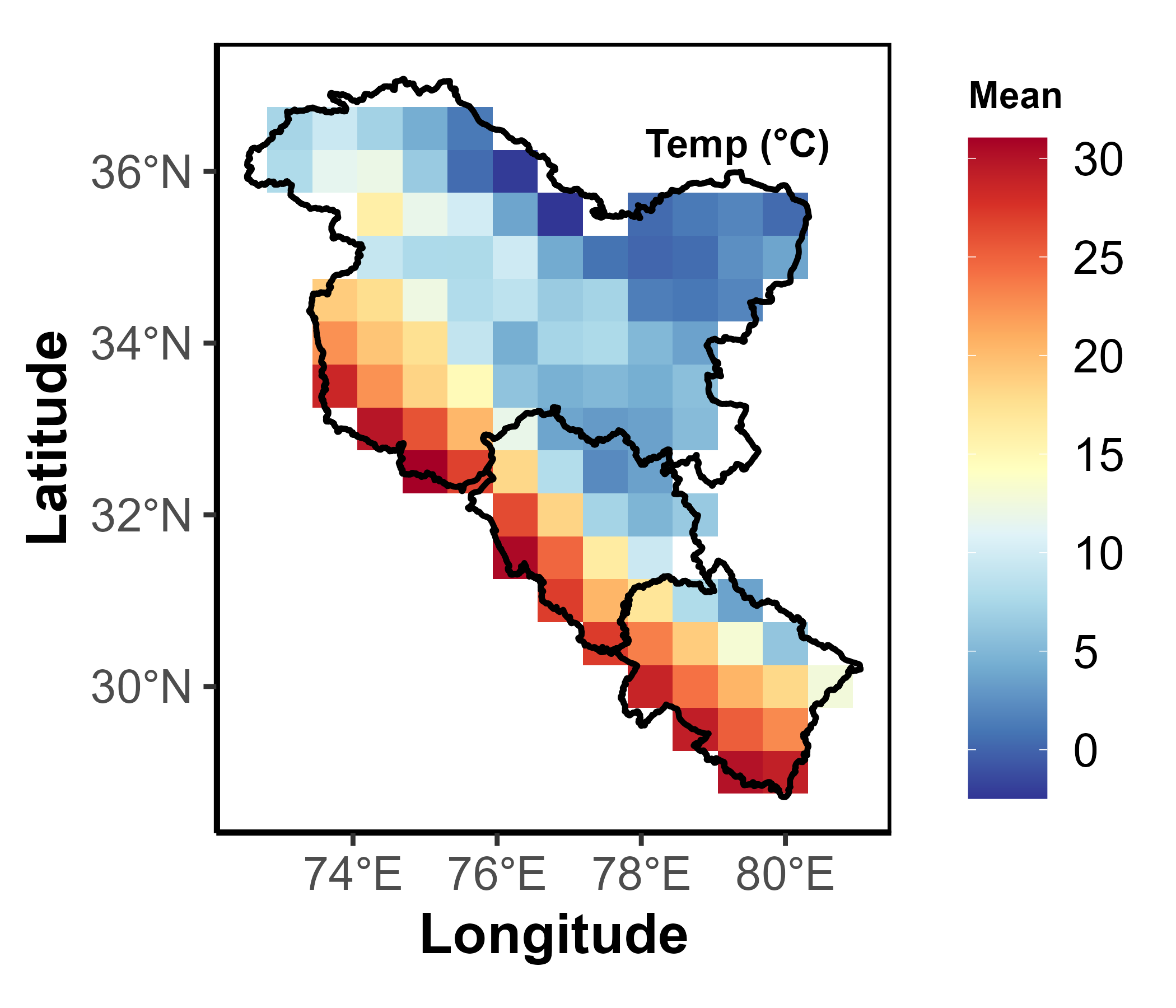}
		\caption*{(d)}
	\end{minipage}\hfill
	\begin{minipage}[t]{0.33\textwidth}
		\centering
		\includegraphics[width=1\textwidth]{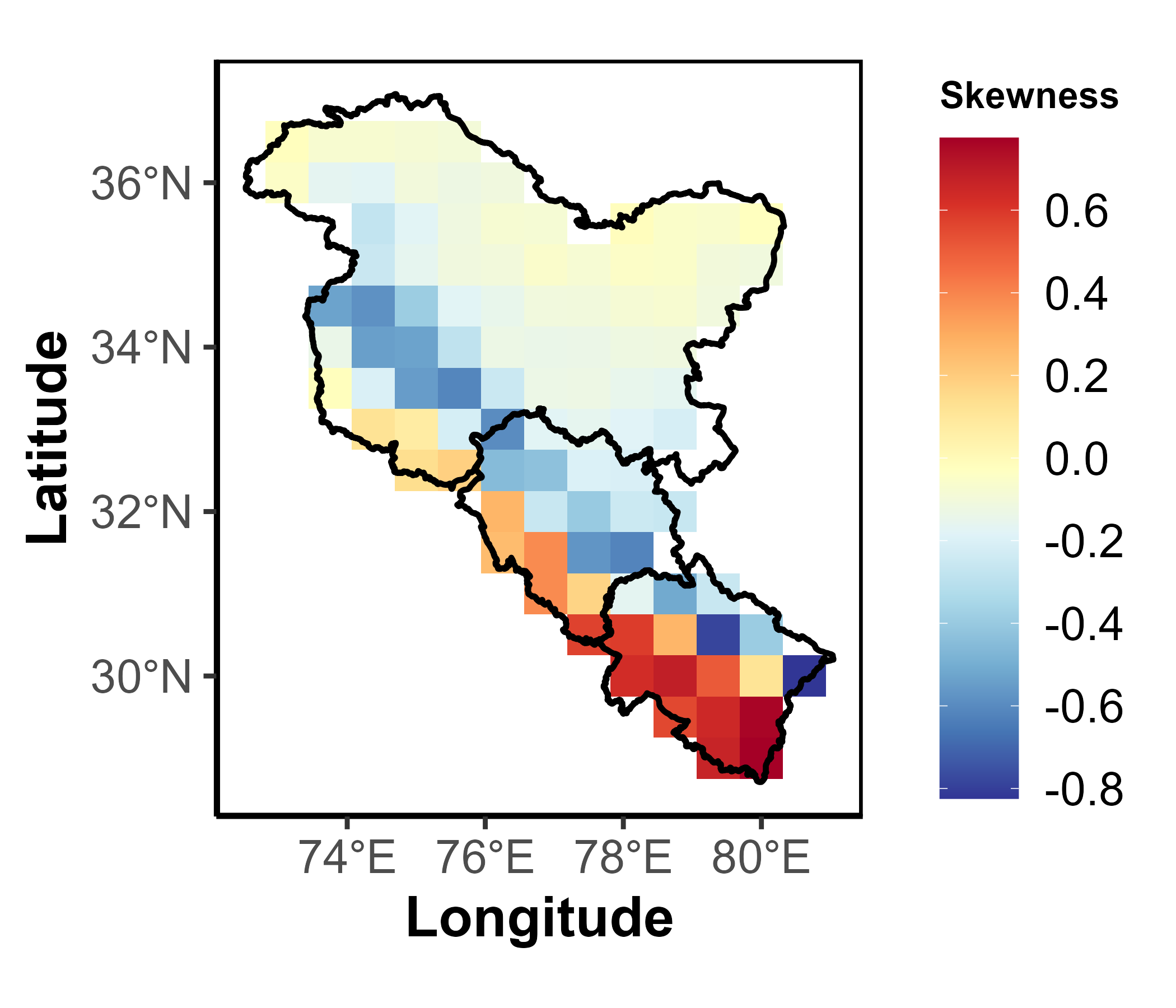}
		\caption*{(e)}
	\end{minipage}\hfill
\begin{minipage}[t]{0.33\textwidth}
	\centering
	\includegraphics[width=1\textwidth]{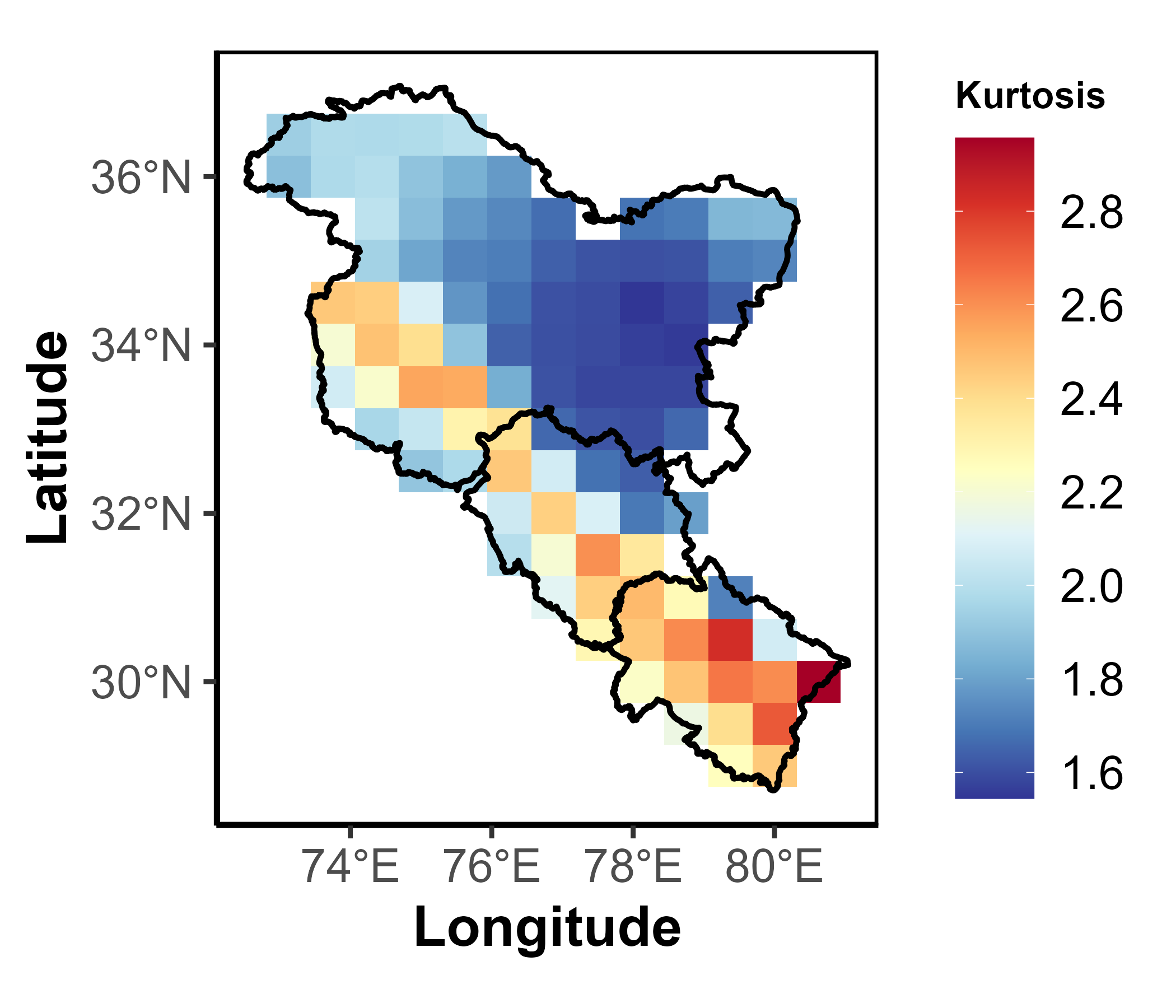}
	\caption*{(f)}
\end{minipage}
	\caption{Descriptive statistics of summer rainfall and temperature across the NWH Himalaya.}
	\label{Figure 3}
\end{figure}

\begin{figure}[ht]
	\begin{minipage}[t]{0.33\textwidth}
		\centering
		\includegraphics[width=1\textwidth]{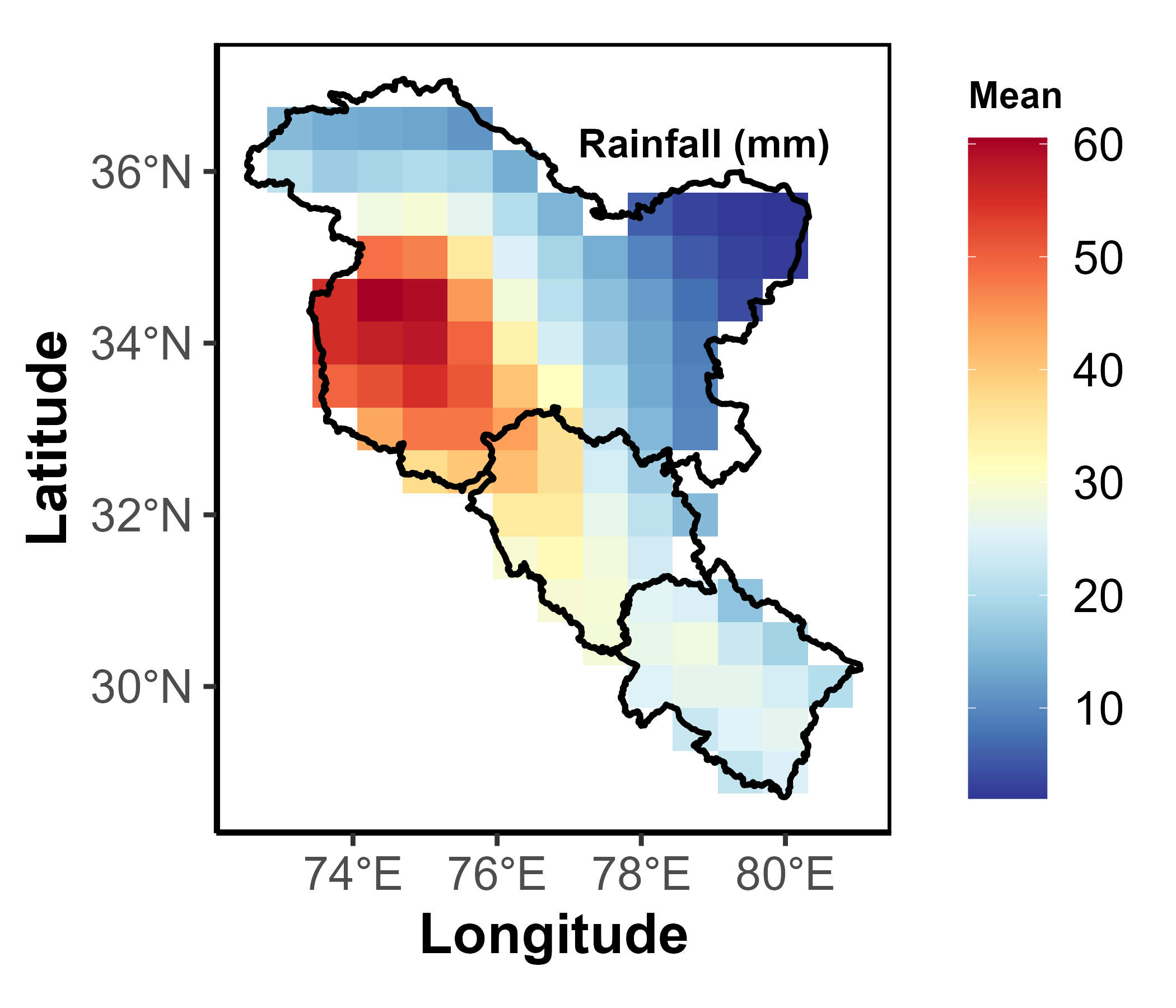}
		\caption*{(a)}
	\end{minipage}\hfill
	\begin{minipage}[t]{0.33\textwidth}
		\centering
		\includegraphics[width=1\textwidth]{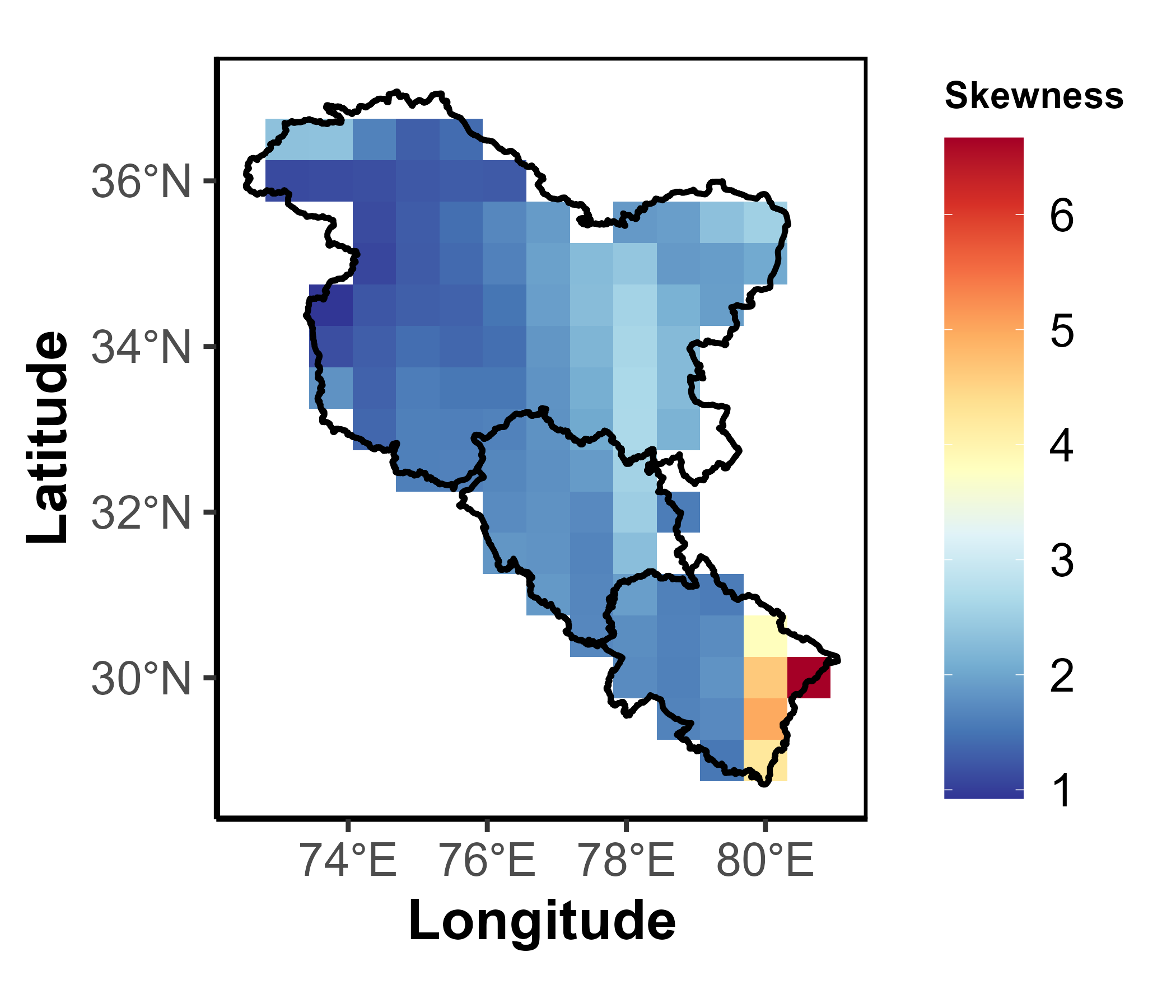}
		\caption*{(b)}
	\end{minipage}\hfill
	\begin{minipage}[t]{0.33\textwidth}
		\centering
		\includegraphics[width=1\textwidth]{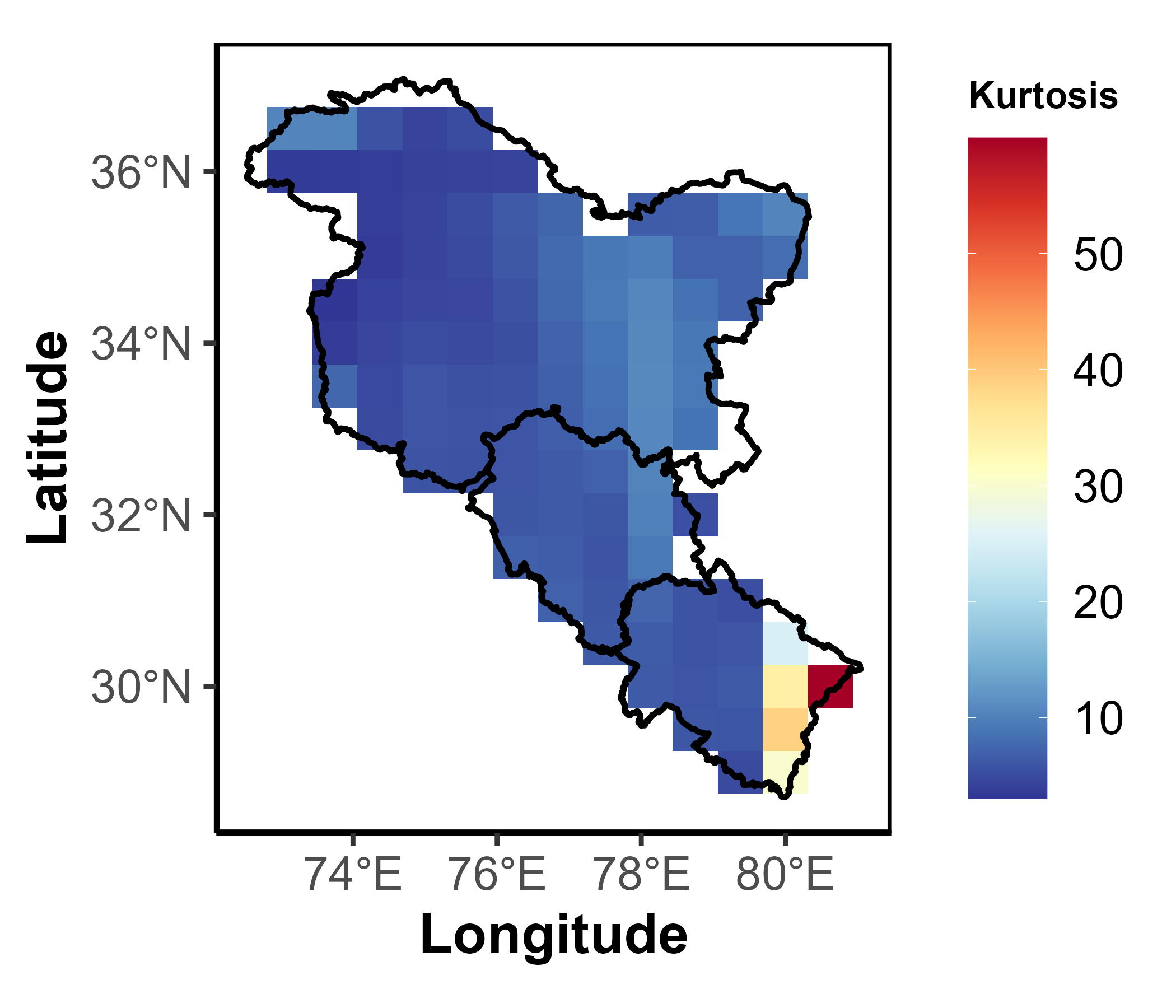}
		\caption*{(c)}
	\end{minipage}
	\medskip
	\begin{minipage}[t]{0.33\textwidth}
		\centering
		\includegraphics[width=1\textwidth]{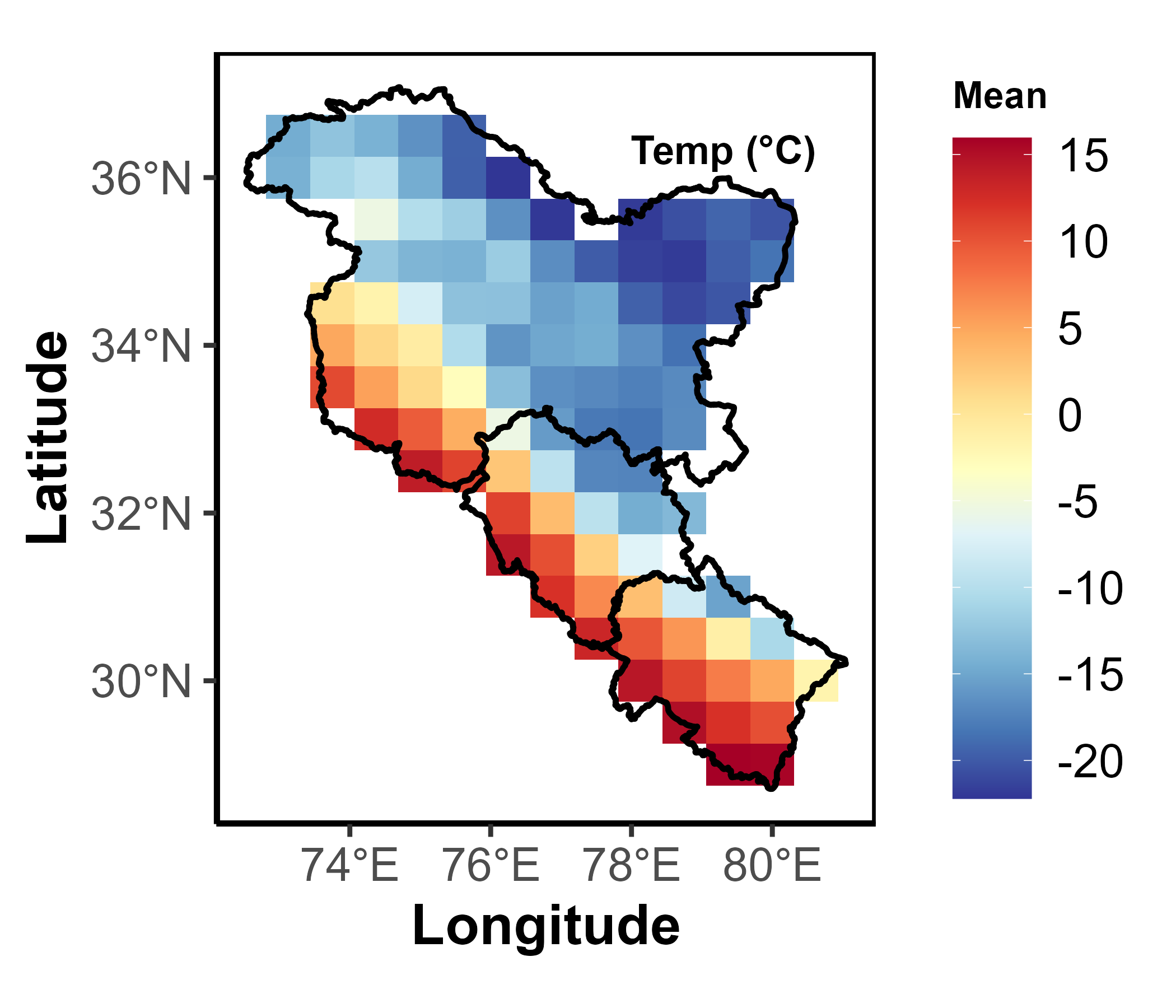}
		\caption*{(d)}
	\end{minipage}\hfill
	\begin{minipage}[t]{0.33\textwidth}
		\centering
		\includegraphics[width=1\textwidth]{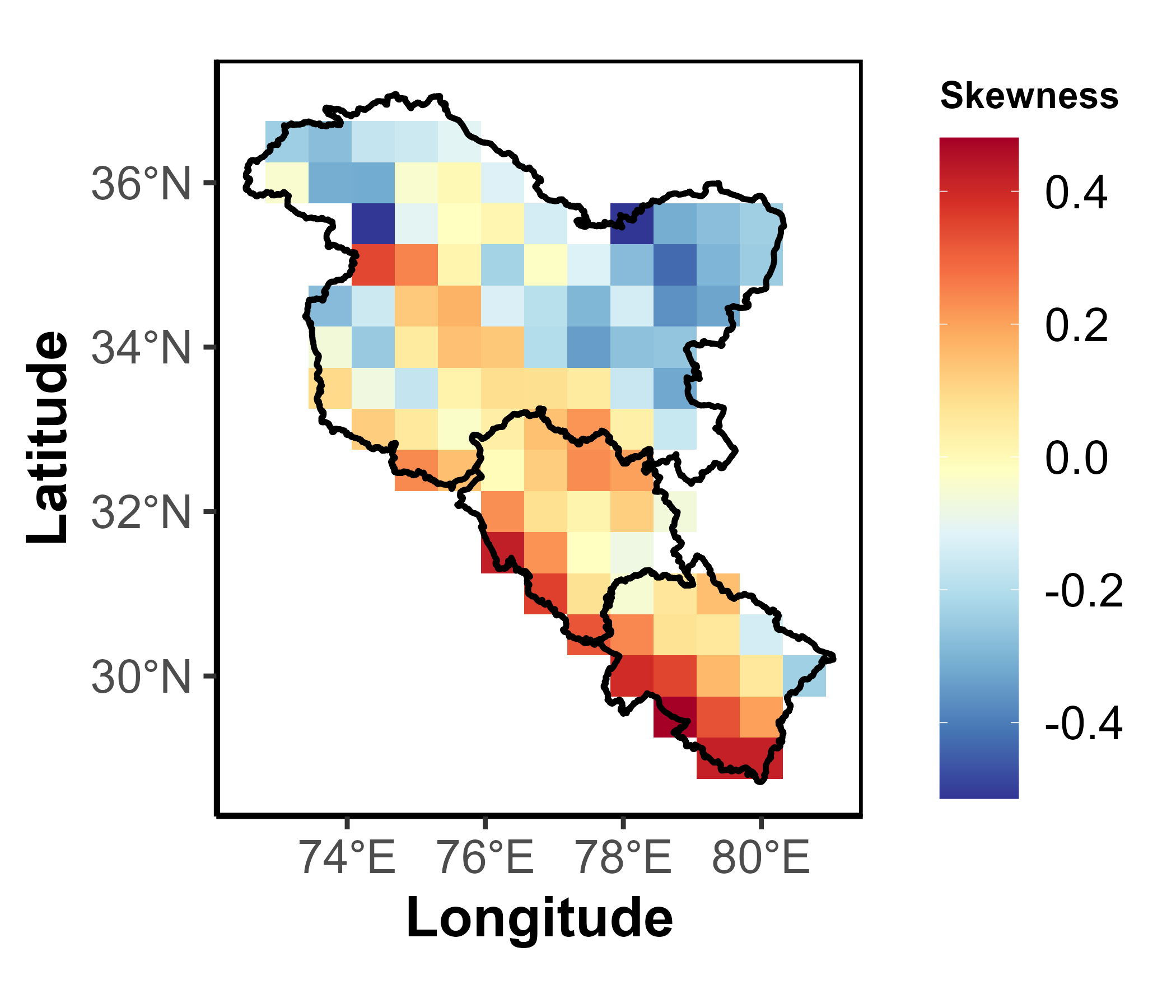}
		\caption*{(e)}
	\end{minipage}\hfill
\begin{minipage}[t]{0.33\textwidth}
	\centering
	\includegraphics[width=1\textwidth]{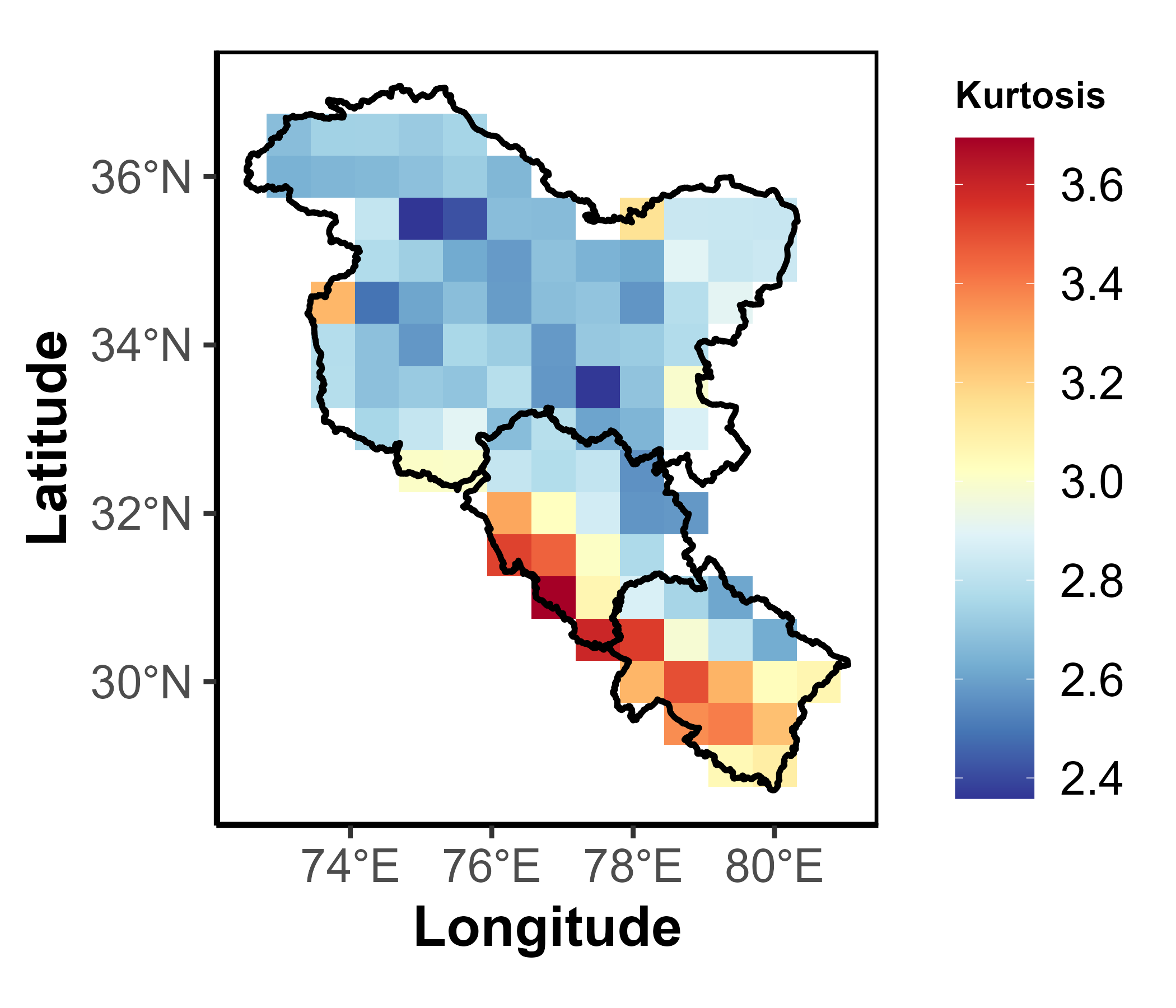}
	\caption*{(f)}
\end{minipage}
	\caption{Descriptive statistics of winter rainfall and temperature across the NWH Himalaya.}
	\label{Figure 4}
\end{figure}
During summer (JJAS), both rainfall and temperature exhibit higher mean values in the lower-Himalayan (LH) regions compared to the Upper Himalayan (UH) areas, reflecting stronger monsoonal influence and relatively warmer conditions in the foothills and mid-altitudes. In contrast, the winter (DJF) rainfall pattern shifts, with higher precipitation observed mainly over the Upper Himalayan (UH) region, particularly in JK, likely due to western disturbances. However, the spatial pattern of temperature remains elevation-dependent in both seasons, with lower temperatures in higher altitudes and comparatively warmer conditions in lower regions. The higher-order moments further highlight the distributional characteristics of these variables. Rainfall shows predominantly positive skewness across both seasons, indicating right-skewed distributions influenced by occasional heavy rainfall events, along with generally leptokurtic behavior suggesting heavier tails. Temperature skewness exhibits a contrasting spatial structure: negative skewness in upper regions implies the occurrence of extreme low-temperature events, while positive skewness in lower regions reflects relatively higher temperature extremes. Kurtosis values for summer temperature are mostly below three, indicating a platykurtic distribution, whereas winter temperatures show kurtosis exceeding three in several LH grids, suggesting greater extremal variability during the colder season.
\begin{figure}[ht]
	\begin{minipage}[t]{0.33\textwidth}
		\centering
		\includegraphics[width=1\textwidth]{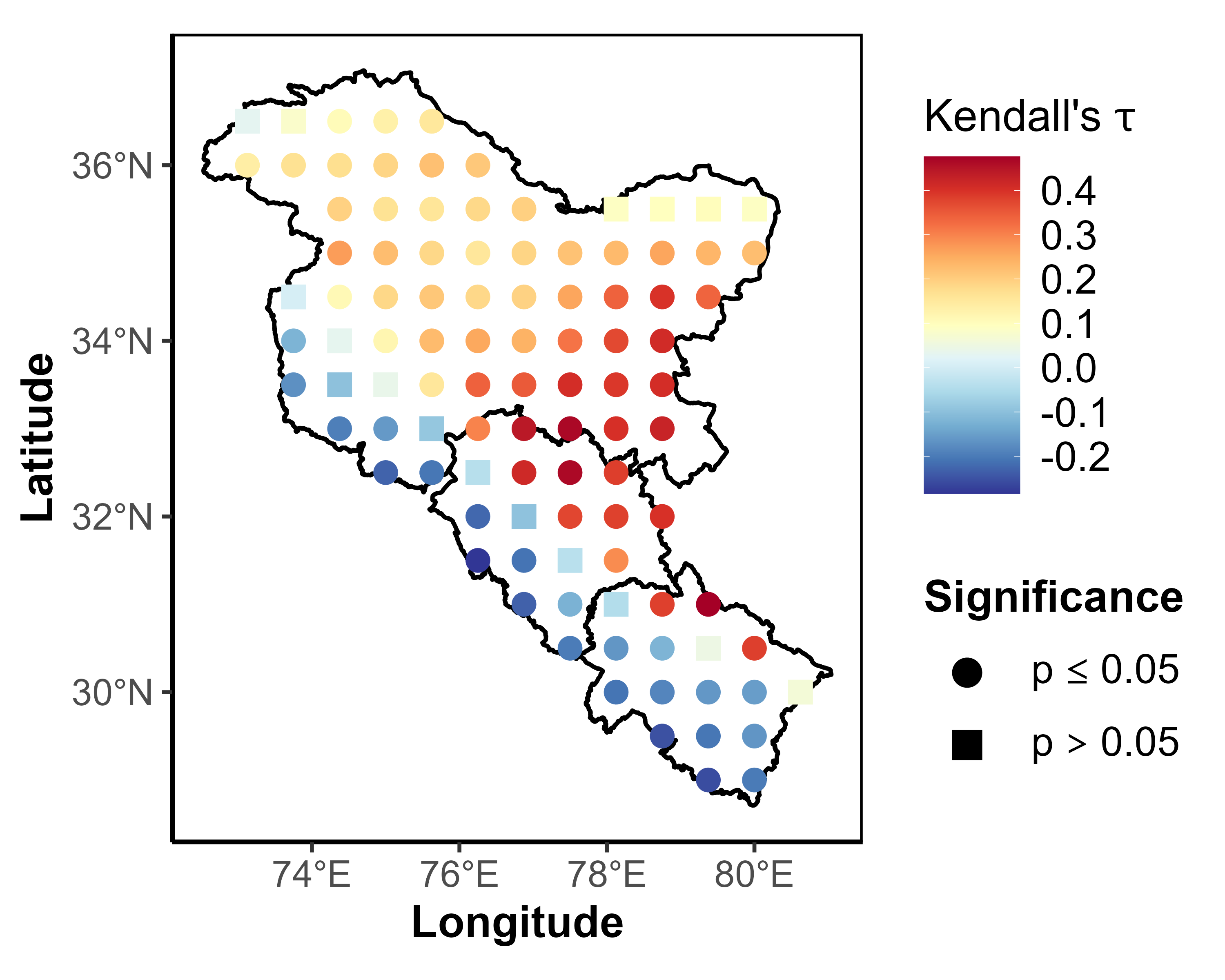}
		\caption*{(a)} 
	\end{minipage}\hfill
	\begin{minipage}[t]{0.33\textwidth}
		\centering
		\includegraphics[width=1\textwidth]{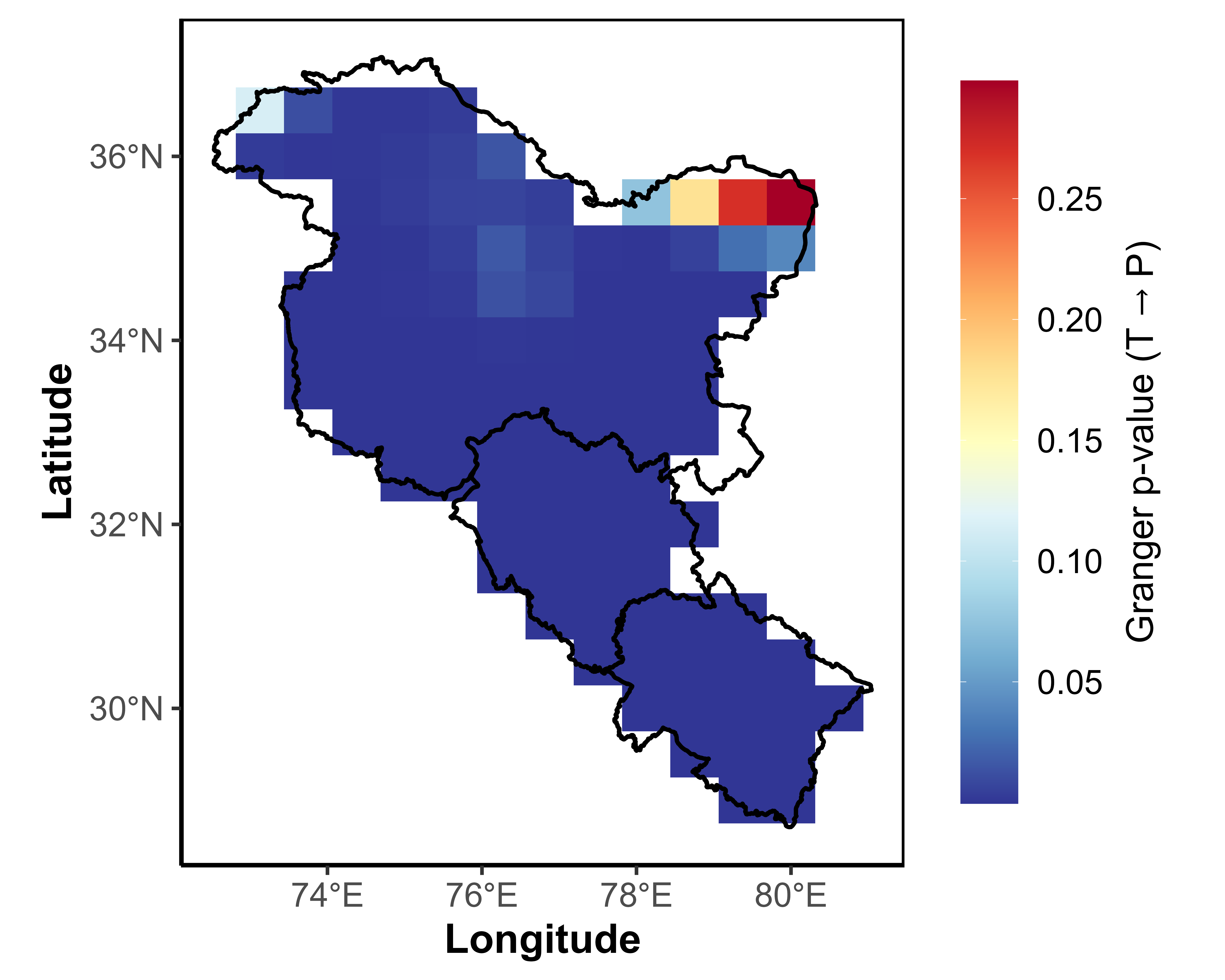}
		\caption*{(b)}
	\end{minipage}\hfill
	\begin{minipage}[t]{0.33\textwidth}
		\centering
		\includegraphics[width=1\textwidth]{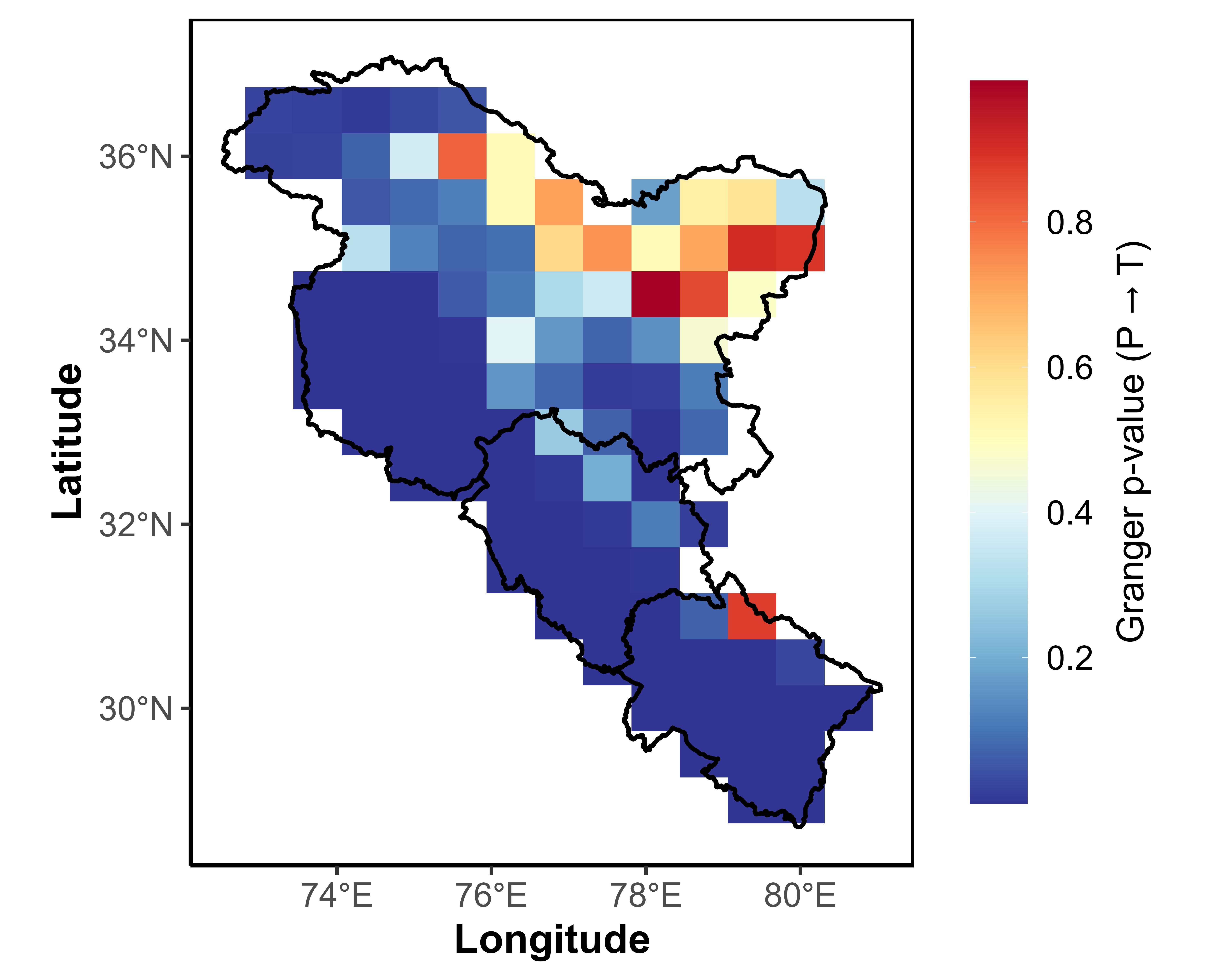}
		\caption*{(c)} 
	\end{minipage}
	\caption{Spatial distribution of (a) Kendall’s correlation coefficient between summer temperature and rainfall along with the corresponding p-values, (b) Granger causality test p-values for temperature influencing rainfall ($T \rightarrow P$), and (c) Granger causality test p-values for rainfall influencing temperature ($P \rightarrow T$) across the NWH region.}
	\label{Figure 5}
\end{figure}

\begin{figure}[ht]
	\begin{minipage}[t]{0.33\textwidth}
		\centering
		\includegraphics[width=1\textwidth]{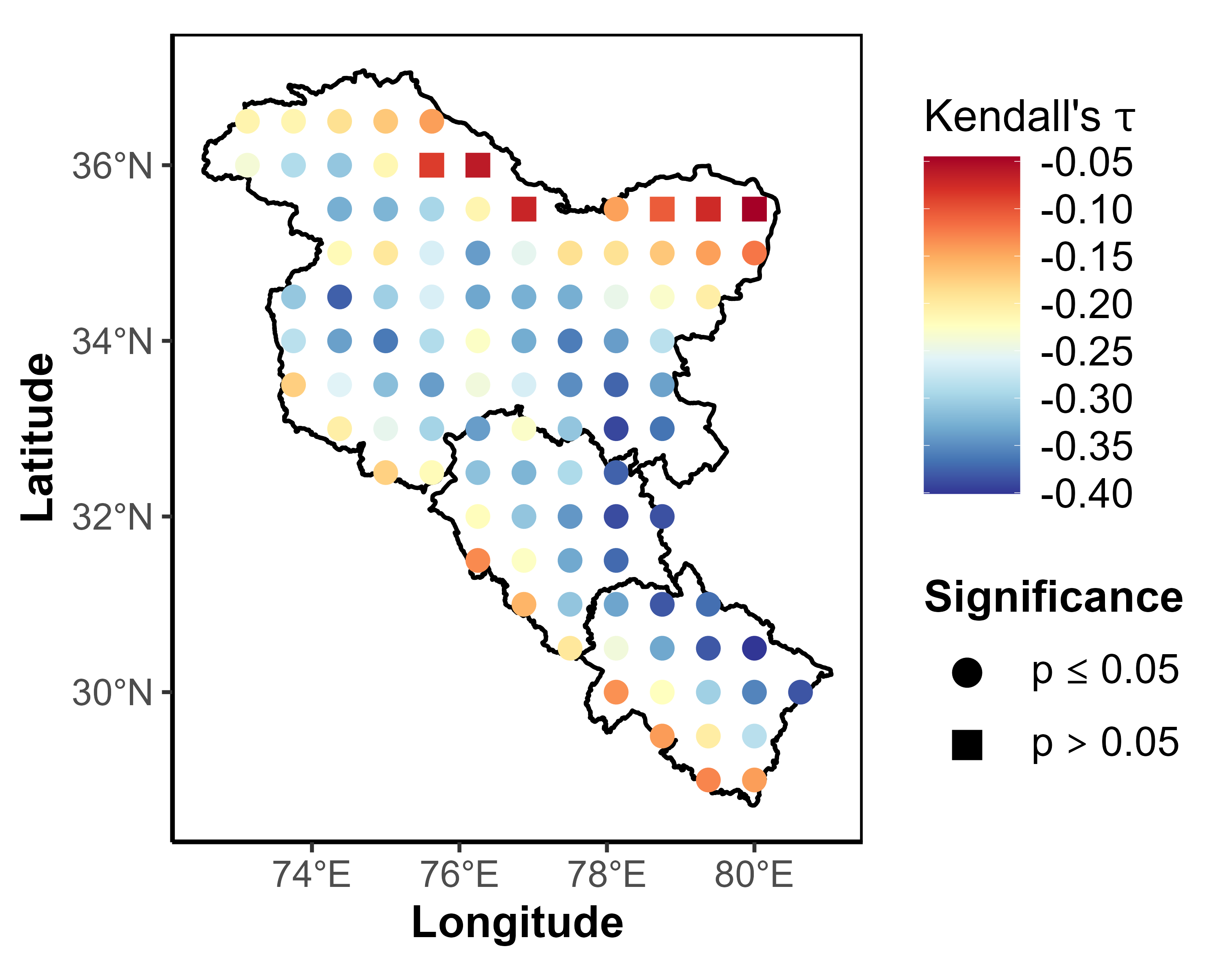}
		\caption*{(a)} 
	\end{minipage}\hfill
	\begin{minipage}[t]{0.33\textwidth}
		\centering
		\includegraphics[width=1\textwidth]{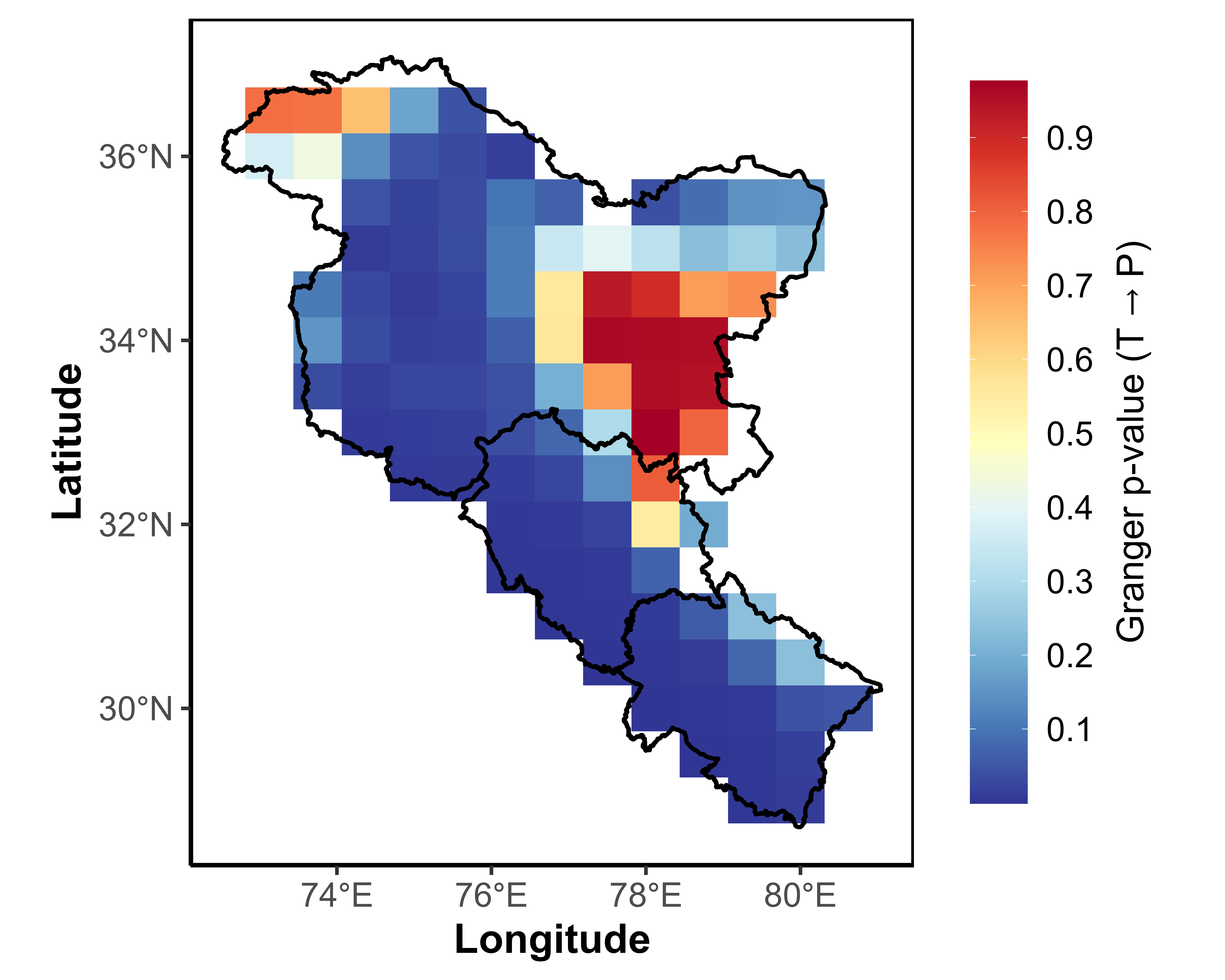}
		\caption*{(b)}
	\end{minipage}\hfill
	\begin{minipage}[t]{0.33\textwidth}
		\centering
		\includegraphics[width=1\textwidth]{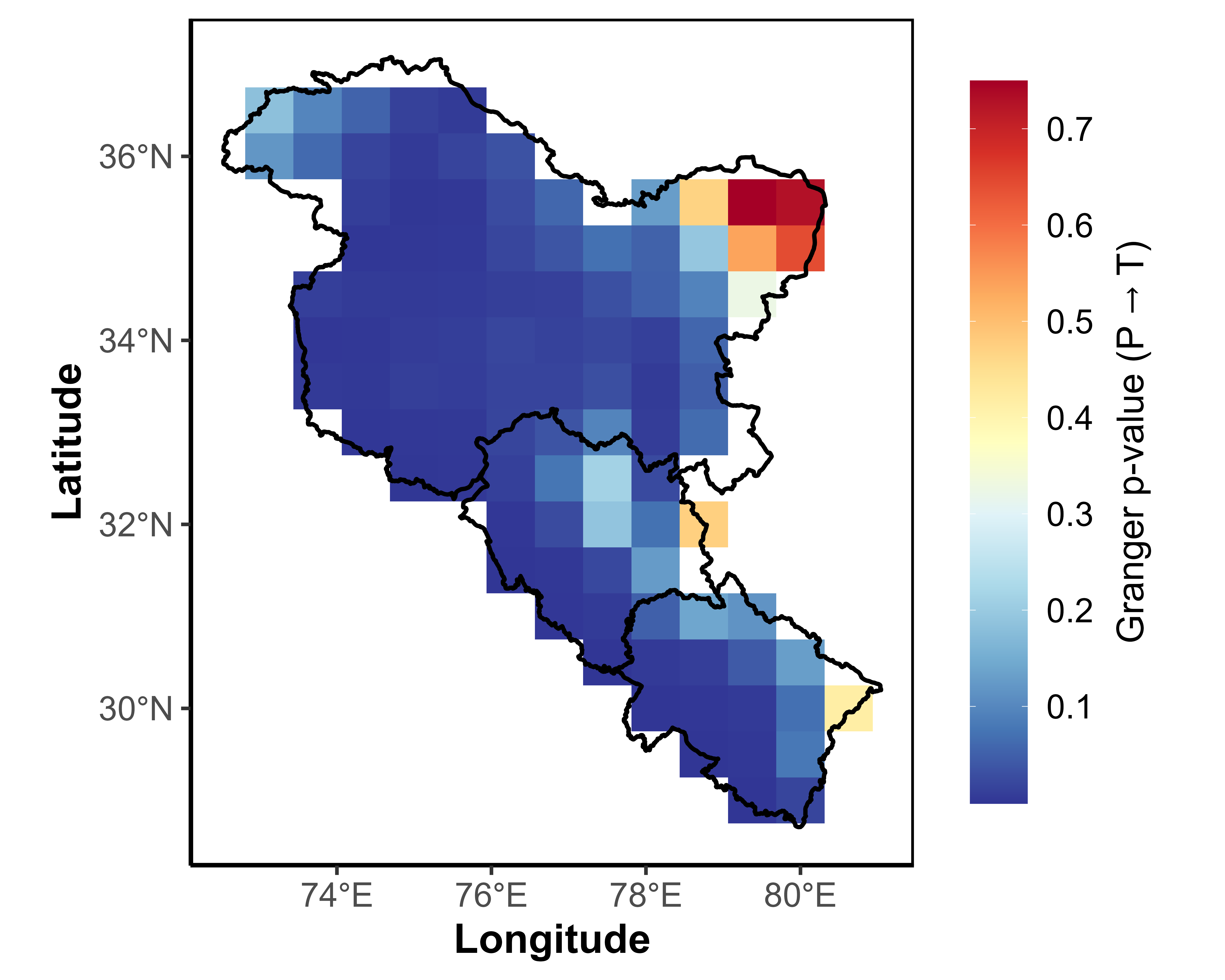}
		\caption*{(c)} 
	\end{minipage}
	\caption{Spatial distribution of (a) Kendall’s correlation coefficient between winter temperature and rainfall along with the corresponding p-values, (b) Granger causality test p-values for temperature influencing rainfall ($T \rightarrow P$), and (c) Granger causality test p-values for rainfall influencing temperature ($P \rightarrow T$) across the NWH region.}
	\label{Figure 9}
\end{figure}
Figures~\ref{Figure 5}(a) and~\ref{Figure 9}(a) present the spatial distribution of Kendall’s rank correlation between temperature and rainfall for the summer (JJAS) and winter (DJF) seasons respectively, along with their associated p-values. During summer, the LH region exhibits predominantly negative correlation, whereas the Middle Himalayan (MH) and UH regions show positive association between the two variables. Approximately 83\% of the grids demonstrate statistically significant dependence at the 5\% level. In contrast, the winter season is characterized by a consistently negative correlation across the entire NWH region, with nearly 94\% of the grids showing statistically significant association. To further investigate the directional relationship between the variables, the Granger causality test is employed. This test examines whether past values of one time series provide statistically significant predictive information about another series beyond its own history. Figures~\ref{Figure 5}(b)–(c) and~\ref{Figure 9}(b)–(c) display the spatial distribution of Granger causality p-values for the directions $T \rightarrow P$ and $P \rightarrow T$ during summer and winter, respectively. In summer, temperature significantly Granger-causes rainfall in about 95\% of the grids, with limited exceptions in parts of eastern JK, whereas rainfall significantly influences temperature in approximately 55\% of the grids, primarily over the western part of NWH region. The winter pattern shows a reversal in directional dominance: rainfall exerts a stronger predictive influence on temperature, with about 68\% of grids showing significance, while temperature significantly predicts rainfall in nearly 50\% of the region, mainly over the western sector. Overall, the results highlight pronounced seasonal asymmetry in both dependence structure and directional predictability between temperature and rainfall across the NWH Himalaya.

To model joint behaviour of temperature and rainfall, copula-based bivariate models are employed. Prior to model fitting, appropriate data transformation and scaling procedures are implemented to summer and winter datasets to ensure compatibility with model assumptions and to facilitate numerical optimization. The pre-processing steps are summarized as follows:
\begin{itemize}
	\item Since the proposed models are defined for strictly positive values, the rainfall data are shifted by adding a small constant (0.01) to all observations to avoid zero values.
	\item Similarly, temperature values, which are negative in several months across many grids, are shifted by adding $|\min(T)| + 0.01$ to each grid-specific series, ensuring that all transformed temperature observations are strictly positive.
	\item For numerical stability and improved optimization performance, both temperature and rainfall series are scaled by normalizing each observation with respect to the corresponding grid-specific maximum value.
\end{itemize}

\subsection{Joint Modeling of Temperature and Rainfall}
Following pre-processing, the choice of copula structure is guided by the sign of dependence. For grids exhibiting positive correlation between temperature and rainfall, the CBKT and GBKT copula models are fitted. In contrast, for grids displaying negative dependence, rotated versions of the CBKT and GBKT copulas (specifically the $90^\circ$ rotations) are employed, as defined in equation~\eqref{rot}, to adequately capture inverse dependence structures.  
For comparative assessment, several existing bivariate distributions are also considered, including the Clayton Bivariate Exponentiated Teissier (CBET) distribution~\citep{poonia2023new}, which is a special case of the CBKT model, the Clayton Bivariate Rayleigh (CBR) distribution~\citep{el2019bivariate}, the Bivariate Generalized Exponential (BGE) distribution~\citep{mirhosseini2015new}, and the Bivariate Kumaraswamy Exponential (BKE) distribution~\citep{bakouch2019bivariate}. 

The adequacy of selected bivariate copula models for joint temperature--rainfall analysis is evaluated using and Cram\'er--von Mises (CVM) goodness-of-fit tests. Let $C_{\text{emp}}$ denote the empirical copula
and $C_{\Theta}$ denote the fitted copula with parameter $\Theta$ then, CVM statistics $S_n$ is obtained via 
\[
S_n = \sum_{k=1}^{n} \left[ C_{\text{emp}}(u_k) - C_{\Theta}(u_k) \right]^2,
\]
which quantifies the integrated squared difference between the empirical and fitted copulas. The null hypothesis states that the empirical copula belongs to the specified copula family. The significance of the statistics is assessed using parametric bootstrap procedures at the 5\% significance level. Models with $p$-values greater than 0.05 are considered to adequately represent the dependence structure. 

Furthermore, model selection is performed using the maximum log-likelihood value $(\mathcal{L})$, the Akaike Information Criterion (AIC), and the Bayesian Information Criterion (BIC). The AIC and BIC are computed as \(\text{AIC} = 2k - 2l(\hat{\Theta}), ~\text{BIC} = k \log(n) - 2l(\hat{\Theta}),\) where $l(\hat{\Theta})$ denotes the maximized log-likelihood obtained via MLE, $k$ represents the number of estimated parameters, and $n$ is the sample size. In general, models with larger log-likelihood values, smaller AIC and BIC values are preferred, as they indicate better goodness-of-fit while penalizing model complexity. 

\begin{table}[ht]
	\caption{The ML estimates, $\mathcal{L}$, AIC, and BIC of the grid with coordinates (79.375, 31)}
	\label{table3}
	\centering
	\resizebox{1.1\textwidth}{!}{%
		\begin{tabular}{lcccc}
			\hline
			Model & $\mathcal{L}$ & AIC & BIC & Parameter Estimates \\ 
			\hline
			CBKT & 208.3576 & -402.7152 &  -381.0161  & $(\hat{\alpha}_1, \hat{\alpha}_2, \hat{\beta}_1, \hat{\beta}_2, \hat{\theta}_1, \hat{\theta}_2, \hat{\delta}_1) = (3.5810, 2553.61, 0.5419, 0.8147, 20.8419, 0.6021, 1.2349)$ \\[3pt]
			GBKT & 208.3866 & -402.7732 & -381.0741 & $(\hat{\alpha}_1, \hat{\alpha}_2, \hat{\beta}_1, \hat{\beta}_2, \hat{\theta}_1, \hat{\theta}_2, \hat{\delta}_2) = (3.5810, 2553.61, 0.5419, 0.8147, 20.8419, 0.6021, 1.75698)$ \\[3pt]
			CBET & 198.1452 & -386.2903 & -370.791  & $(\hat{\alpha}_1, \hat{\alpha}_2, \hat{\theta}_1, \hat{\theta}_2, \hat{\delta}) = (6.2444, 0.6141, 2.0264, 2.5539, 1.3274)$ \\[3pt]
			BKE &  7.92049 & -7.8409 &  4.55849  & $(\hat{\alpha}_1, \hat{\alpha}_2, \hat{\alpha}_3, \hat{\beta}) = (8.9685, 30.2394,  0.3259,  3.2387)$ \\[3pt]
			BGE & -93.41532 & 192.8306 & 202.1302   & $(\hat{\lambda}_1, \hat{\lambda}_2, \hat{\alpha}) = (1.2865, 3.2493, 0.9990)$ \\[3pt]
			CBR & 40.86746 & -75.7349 & -66.4353  & $(\hat{\lambda}_1, \hat{\lambda}_2, \hat{\theta}) = (1.2397, 2.8734, 0.2519)$ \\[3pt]
			\hline
		\end{tabular}
	}
\end{table}
The results presented in Table~\ref{table3} summarize the maximum likelihood estimates and model selection criteria for the grid located at $(79.375, 31)$ for summer temperature and rainfall. Among the competing models, the GBKT distribution attains the highest log-likelihood value along with the smallest AIC and BIC values, indicating that it provides the best fit for the joint temperature–rainfall data at this location according to the adopted selection criteria. The CBKT model yields comparable parameter estimates; however, its log-likelihood is smaller and its AIC and BIC values are bit larger relative to CBKT. The remaining competing models (CBET, BKE, BGE, and CBR) produce substantially lower log-likelihood values and larger information criteria which provide a comparatively weaker fit. Based on these measures, the GBKT model is selected as the most appropriate specification for this grid.

\begin{figure}[ht]
	\begin{minipage}[t]{0.49\textwidth}
		\centering
		\includegraphics[width=1\textwidth]{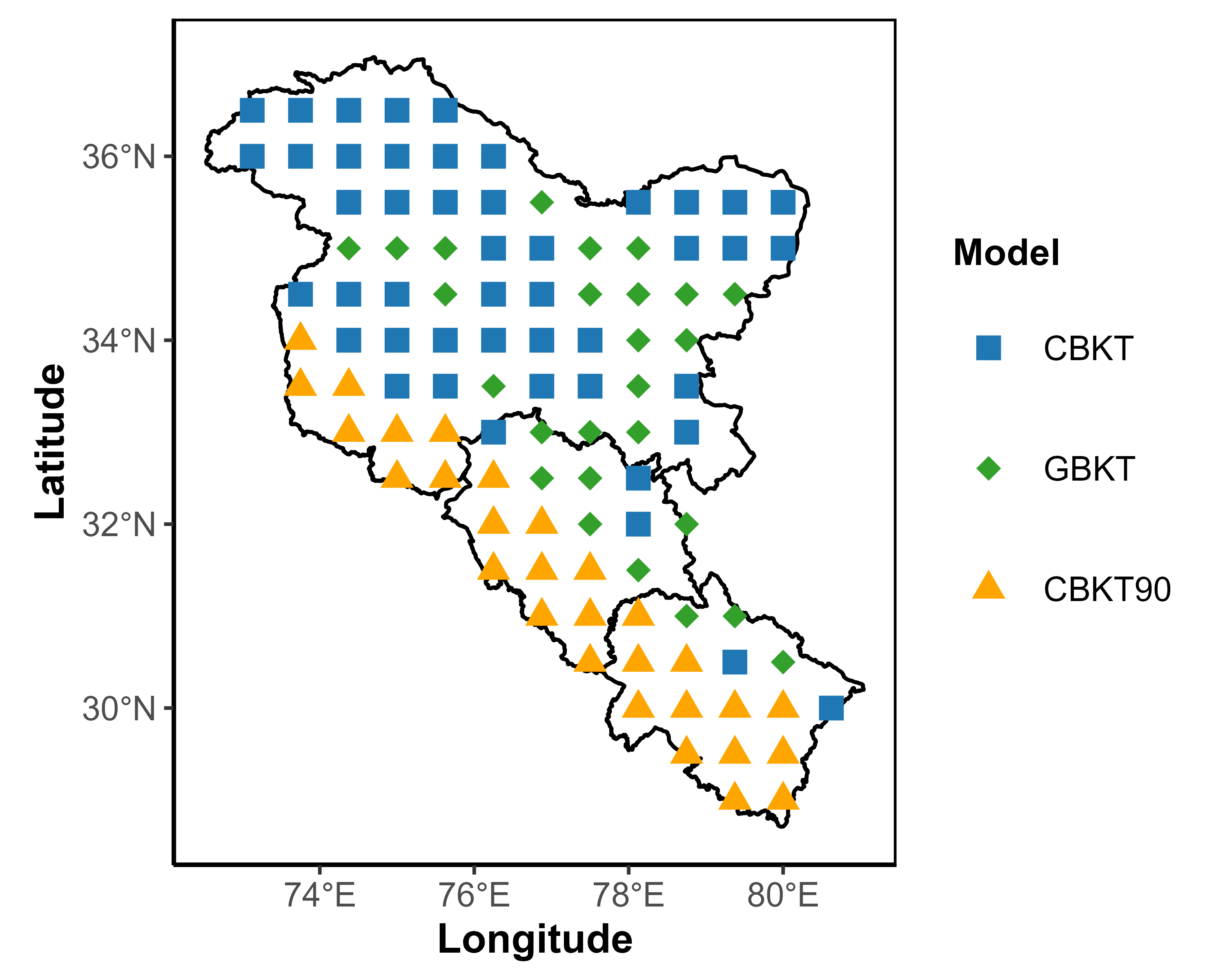}
		\caption*{(a)} 
	\end{minipage}\hfill
	\begin{minipage}[t]{0.49\textwidth}
		\centering
		\includegraphics[width=1\textwidth]{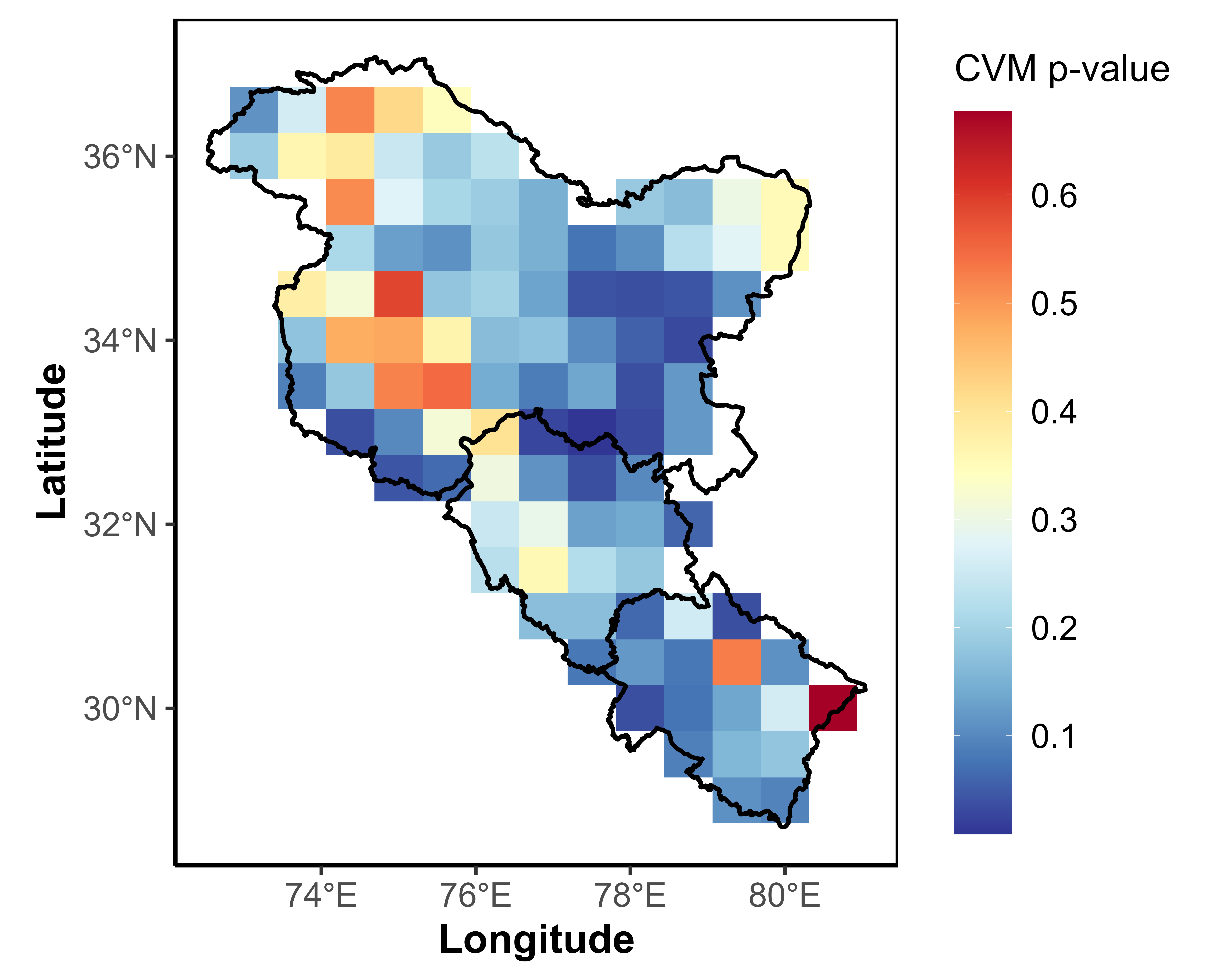}
		\caption*{(b)}
	\end{minipage}
	\caption{Spatial representation of (a) the selected best-fit bivariate model for joint summer temperature–rainfall data, (b) the Cram\'er–von Mises (CVM) p-value used to evaluate model adequacy across the NWH region.}
	\label{Figure 10}
\end{figure}

\begin{figure}[ht]
	\begin{minipage}[t]{0.49\textwidth}
		\centering
		\includegraphics[width=1\textwidth]{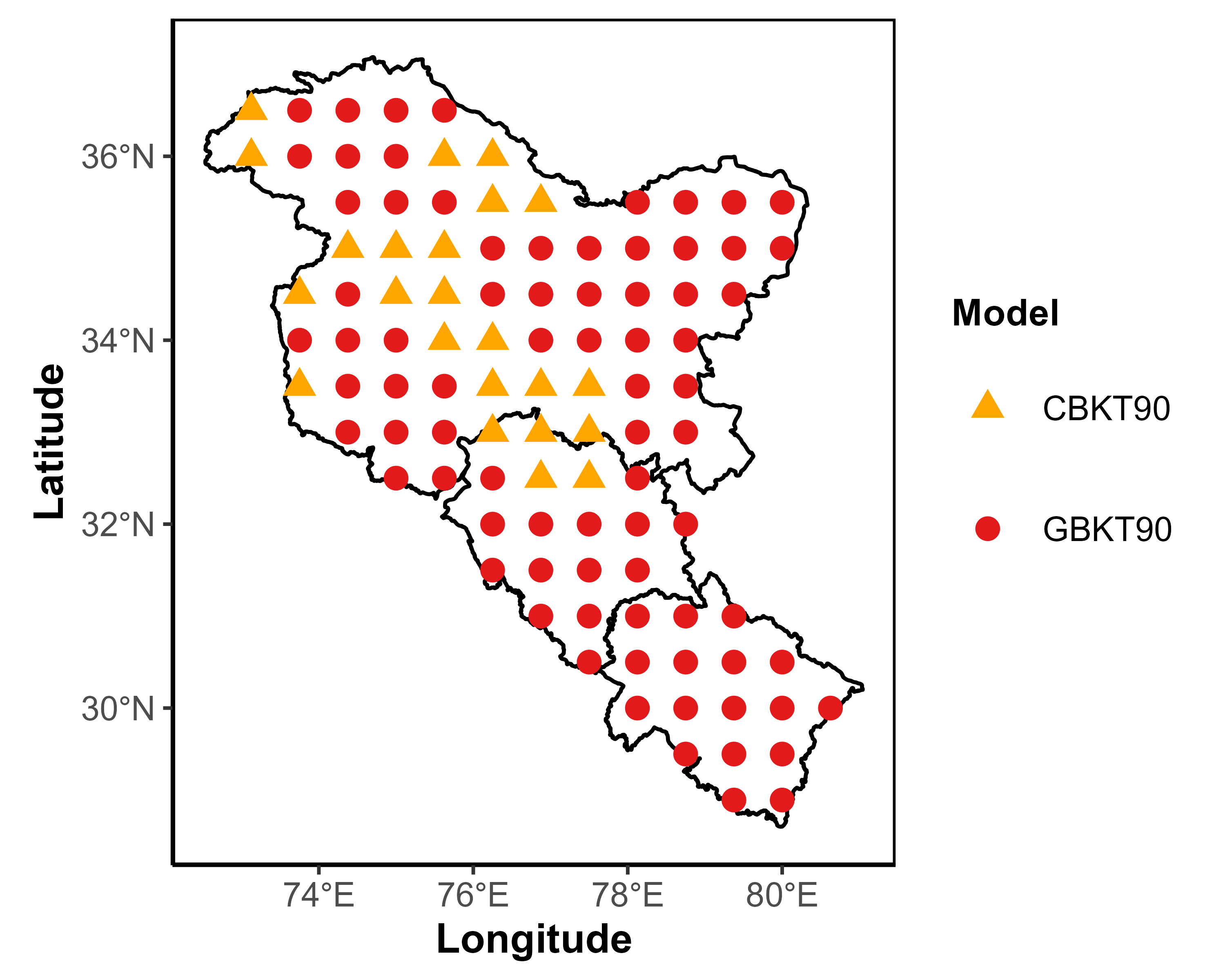}
		\caption*{(a)} 
	\end{minipage}\hfill
	\begin{minipage}[t]{0.49\textwidth}
		\centering
		\includegraphics[width=1\textwidth]{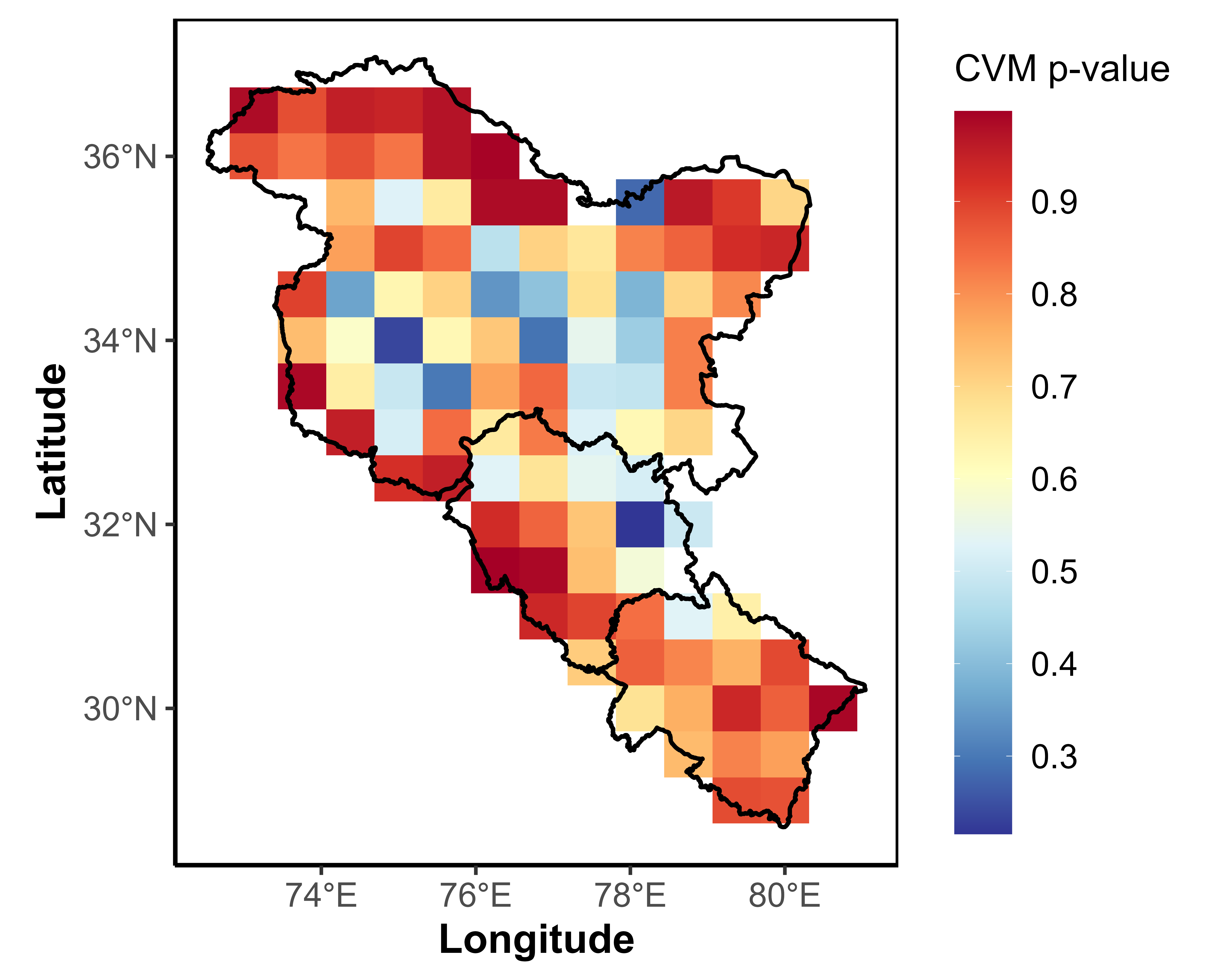}
		\caption*{(b)}
	\end{minipage}
	\caption{Spatial representation of (a) the selected best-fit bivariate model for joint winter temperature–rainfall data, (b) the Cram\'er–von Mises (CVM) p-value used to evaluate model adequacy across the NWH region.}
	\label{Figure 12}
\end{figure}

Table~\ref{table3} presents the model comparison results for a representative grid however, the same estimation and model selection framework is systematically applied to all 101 grids across the NWH region for both summer and winter seasons. For each grid, the optimal copula model is selected based on the maximized log-likelihood, AIC and BIC. The resulting spatial distribution of the best-fitting copula models is displayed in Figures~\ref{Figure 10}(a) and~\ref{Figure 11}(a) for summer and winter, respectively. During the summer season, grids along the LH region are predominantly best described by the rotated CBKT90 copula, reflecting negative dependence between temperature and rainfall in these areas. In contrast, grids exhibiting positive dependence are primarily fitted by either the CBKT or GBKT copula. Model adequacy is further assessed using the CVM goodness-of-fit test. The spatial distribution of the corresponding p-values is shown in Figure~\ref{Figure 10}(b). Across all grid locations, the p-values exceed the 5\% significance level, indicating that the null hypothesis—that the empirical copula belongs to the selected copula family—cannot be rejected at any location. For the winter season, the dependence structure shows a different spatial configuration. The central region of JK and the upper parts of HP are best fitted by the rotated CBKT90 copula, while the remaining areas are predominantly characterized by the rotated GBKT90 copula. The CVM goodness-of-fit p-values, presented in Figure~\ref{Figure 12}(b), similarly exceed the 5\% significance threshold for all grids. Thus, the null hypothesis is not rejected anywhere in the study domain, providing strong statistical evidence that the selected copula models adequately capture the dependence structure between temperature and rainfall across the NWH region in both seasons.

\subsection{Estimated Tail Dependence from the Best-Fitted Copula}
Correlation quantifies the linear association and its statistical significance; however, it does not distinguish whether dependence arises from joint lower extremes, joint upper extremes, or cross-extreme interactions. In this context, copula-based tail dependence provides deeper insight into extremal co-movement. The fitted copulas—CBKT, GBKT, CBKT90, and GBKT90—correspond respectively to the tail coefficients $\lambda_{LL}$, $\lambda_{UU}$, $\lambda_{UL}$, and $\lambda_{LU}$ (as described in Section \ref{tailsec}). Figure \ref{Figure 11} illustrates the spatial distribution of the tail dependence coefficient ($\lambda$) across NWH region for summer and winter seasons. The color gradient represents the magnitude of $\lambda$, while marker shapes denote the associated tail type. 

\begin{figure}[ht]
	\begin{minipage}[t]{0.49\textwidth}
		\centering
		\includegraphics[width=1\textwidth]{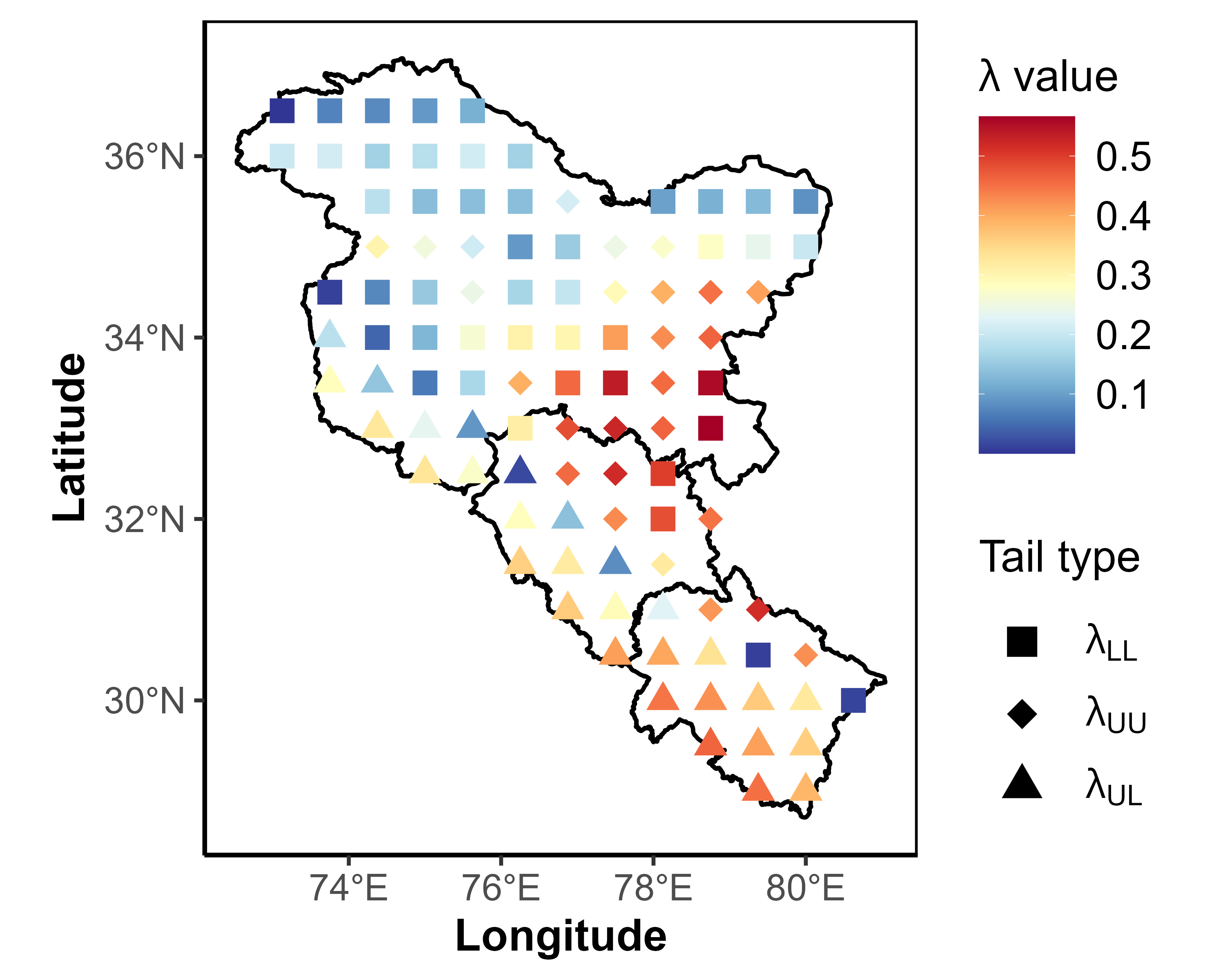}
		\caption*{(a)} 
	\end{minipage}\hfill
	\begin{minipage}[t]{0.49\textwidth}
		\centering
		\includegraphics[width=1\textwidth]{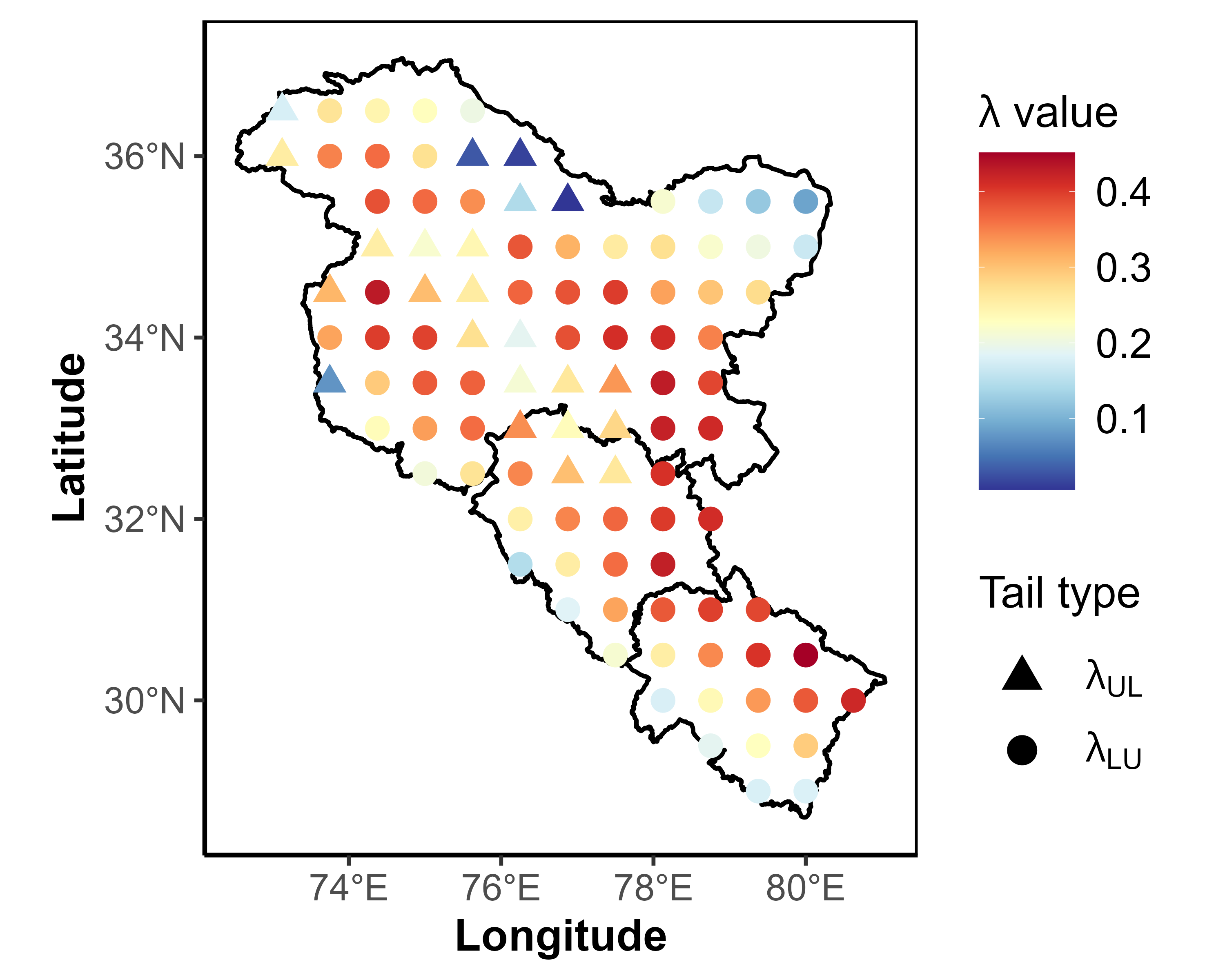}
		\caption*{(b)}
	\end{minipage}
	\caption{Estimated tail dependence coefficients ($\lambda$) obtained from the selected copula models across the NWH region for (a) summer and (b) winter season.}
	\label{Figure 11}
\end{figure}
During summer, the LH region is predominantly characterized by the rotated CBKT90 copula, indicating upper–lower tail dependence ($\lambda_{UL}$). This suggests the co-occurrence of high-temperature extremes with low-rainfall events, with relatively stronger dependence observed along the boundary areas. In contrast, the eastern parts of HP and JK are mainly fitted by the GBKT copula, reflecting upper–upper tail dependence ($\lambda_{UU}$), where high temperature and high rainfall extremes tend to occur jointly. The remaining portions of JK and adjoining areas of HP are best described by the CBKT copula, implying lower–lower tail dependence ($\lambda_{LL}$), corresponding to simultaneous low-temperature and low-rainfall extremes.

In winter, the dependence structure shifts notably. The middle Himalayan regions of JK and HP are primarily fitted by the rotated CBKT90 copula, indicating upper–lower tail dependence, although the magnitude of the tail coefficient is comparatively weak. Conversely, large parts of UK, HP, and the remaining areas of JK are best characterized by the rotated GBKT90 copula, corresponding to lower–upper tail dependence ($\lambda_{LU}$). This reflects the co-occurrence of low-temperature extremes with high-rainfall events, particularly pronounced in eastern UK and JK, where stronger negative dependence is observed, suggesting an increase in winter rainfall across much of the NWH Himalayan region.

\subsection{Return Periods of Temperature and Rainfall}
\subsubsection{Univariate Return Periods}
Following the joint modeling and tail dependence analysis, the extremal behavior of temperature $(X_1)$ and rainfall $(X_2)$ is further evaluated using univariate and bivariate return periods. The marginal return periods corresponding to temperature and rainfall are defined as
\begin{equation}
	T_{X_1} = \frac{1}{P(X_1 \geq x_1)} 
	= \frac{1}{1 - F_{X_1}(x_1)}
\end{equation}
\begin{equation}
	T_{X_2} = \frac{1}{P(X_2 \geq x_2)} 
	= \frac{1}{1 - F_{X_2}(x_2)}
\end{equation}
where $F_{X_1}$ and $F_{X_2}$ denote the marginal distribution functions of temperature and rainfall, respectively. 
\begin{figure}[ht]
	\begin{minipage}[t]{0.45\textwidth}
		\centering
		\includegraphics[width=1\textwidth]{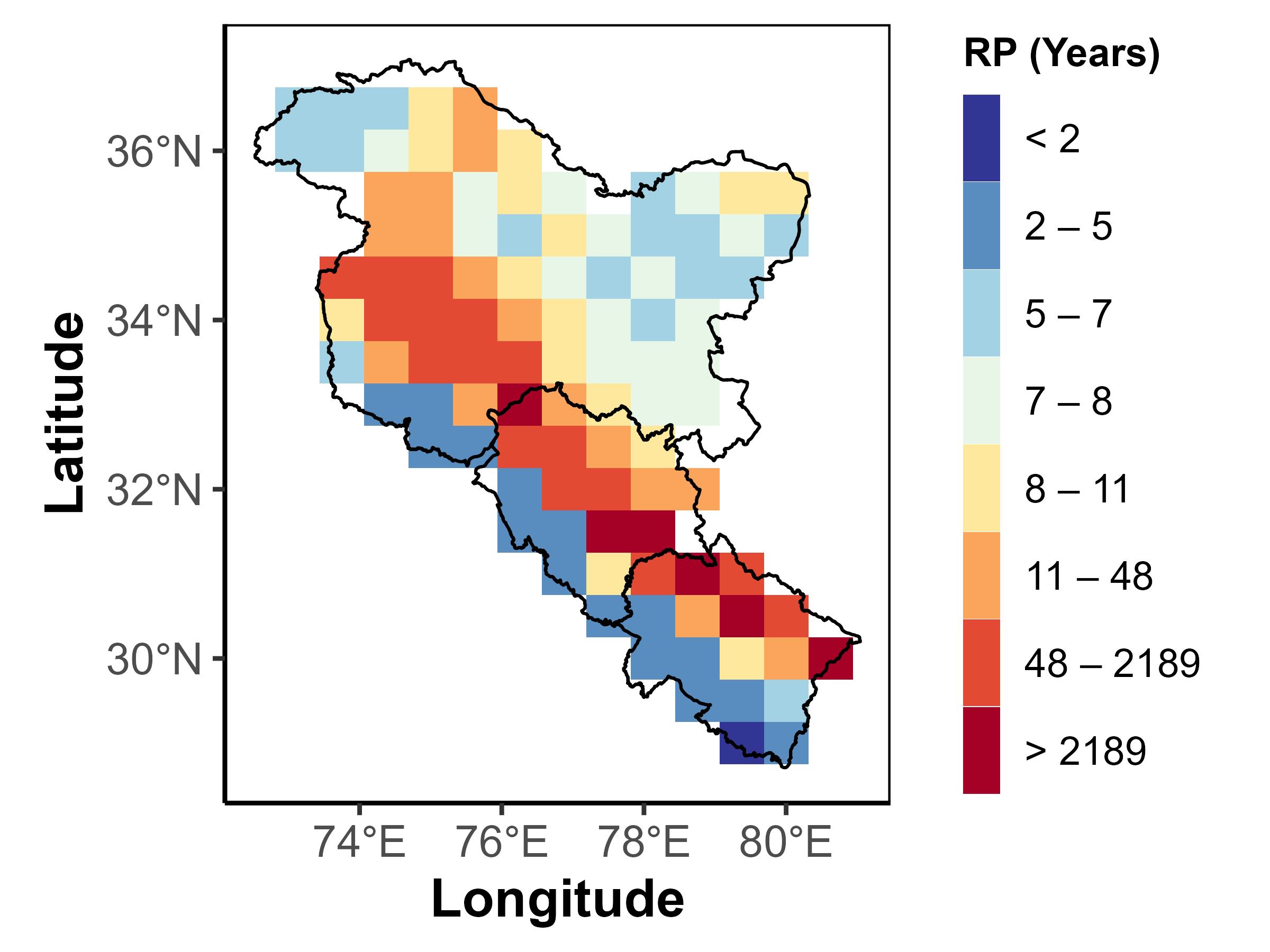}
		\caption*{(a)} 
	\end{minipage}\hfill
	\begin{minipage}[t]{0.45\textwidth}
		\centering
		\includegraphics[width=1\textwidth]{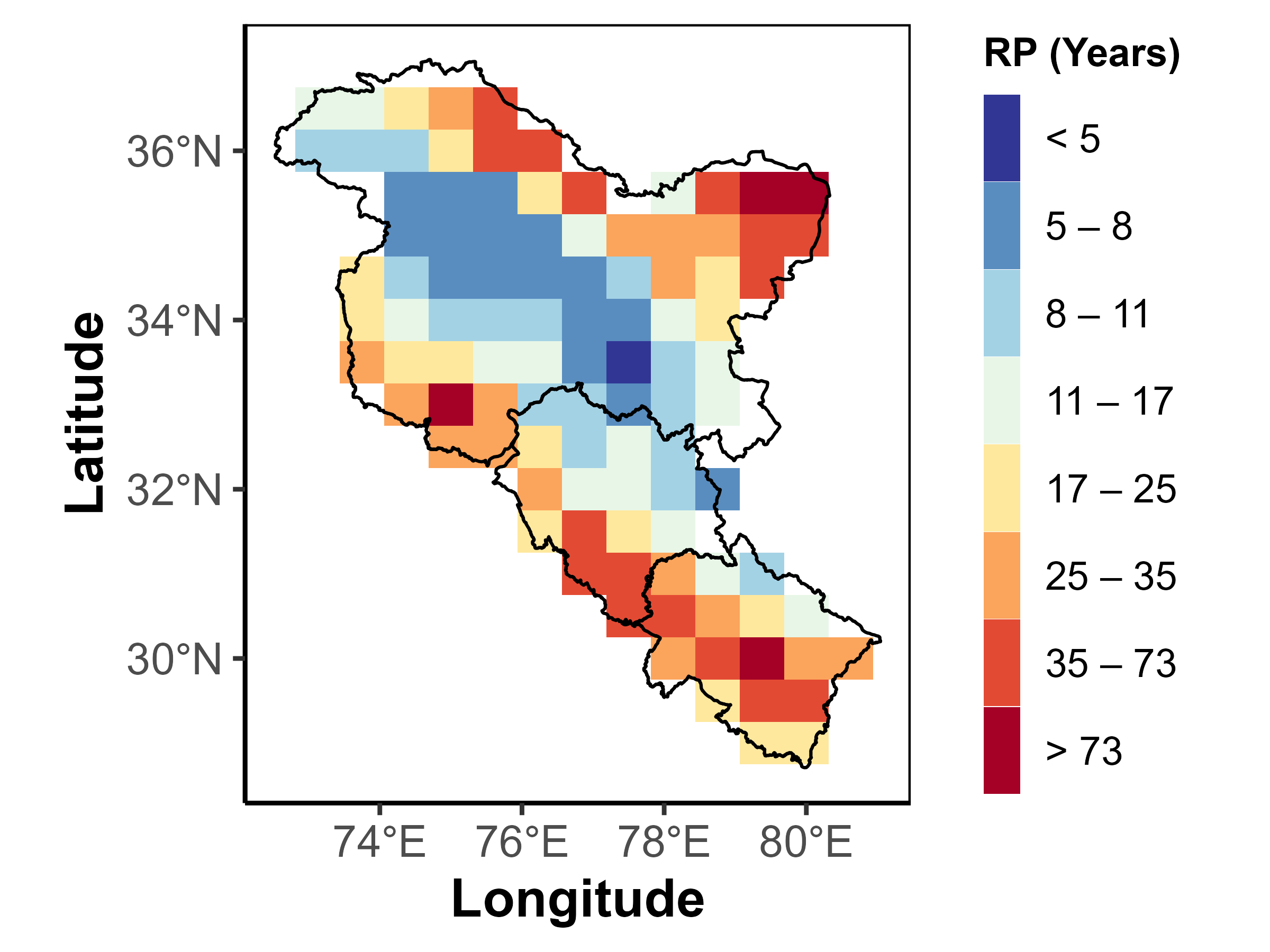}
		\caption*{(b)} 
	\end{minipage}
	\caption{Univariate return periods calculated from monthly temperature data: (a, b) summer and winter maximum temperatures surpassing the mean temperature by $4^\circ$C, respectively.}
	\label{Figure 13}
\end{figure}

As shown in Figure \ref{Figure 13}, the univariate return periods are derived from monthly temperature and rainfall data. Panels (a) and (b) depict the return periods of summer and winter maximum temperatures exceeding the mean temperature by $4^\circ$C, respectively. From Figure \ref{Figure 13} (a) and (b), it is evident that exceedances of summer maximum temperature above the mean are more frequent (i.e., associated with shorter return periods) than in winter over the LH and eastern parts of JK. In contrast, this pattern reverses across the MH region, where winter exceedances occur more often, reflected by comparatively smaller return periods than those observed during summer. 

\begin{figure}[ht]
	\begin{minipage}[t]{0.45\textwidth}
		\centering
		\includegraphics[width=1\textwidth]{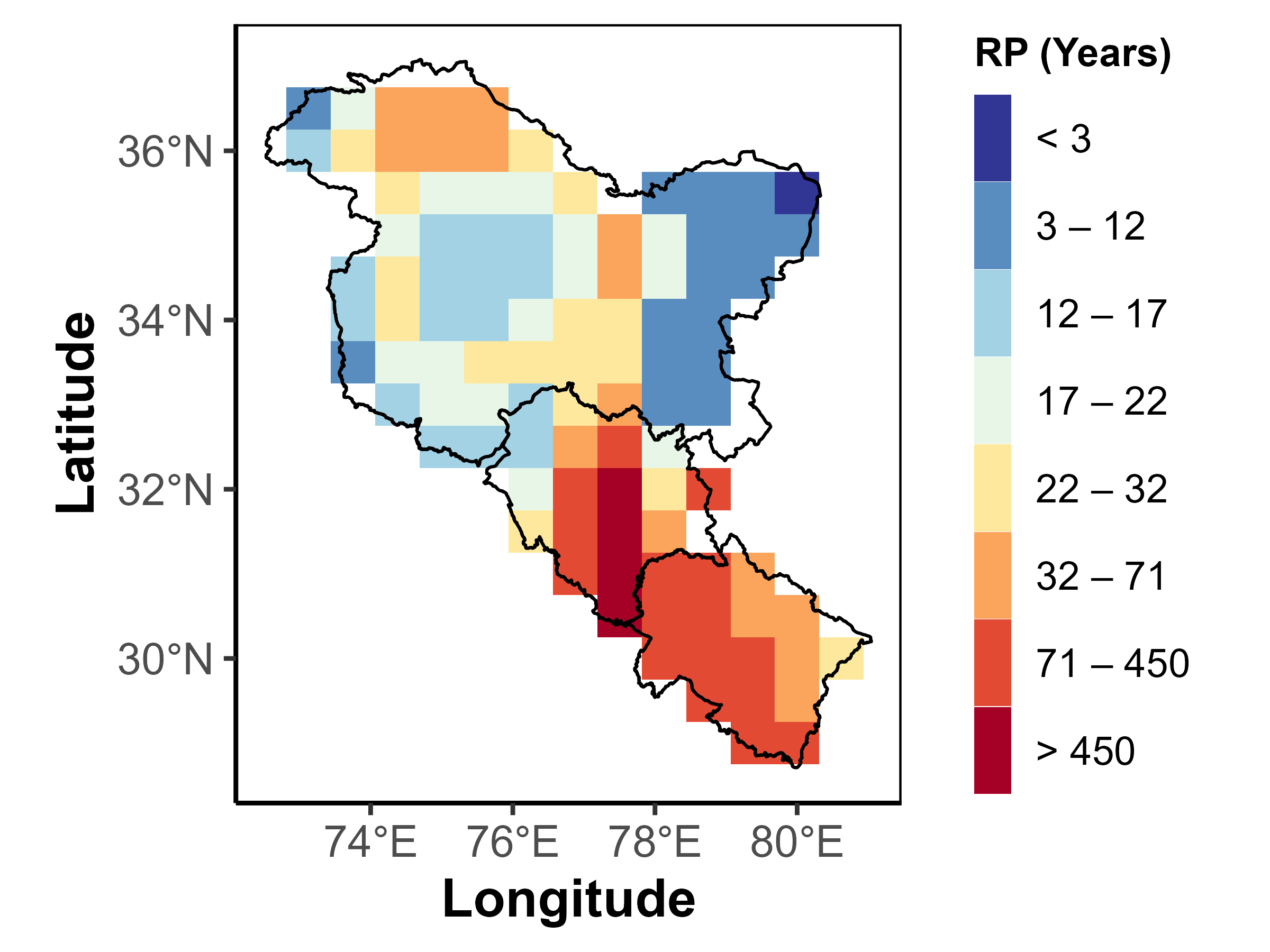}
		\caption*{(a)}
	\end{minipage} \hfill
	\begin{minipage}[t]{0.45\textwidth}
		\centering
		\includegraphics[width=1\textwidth]{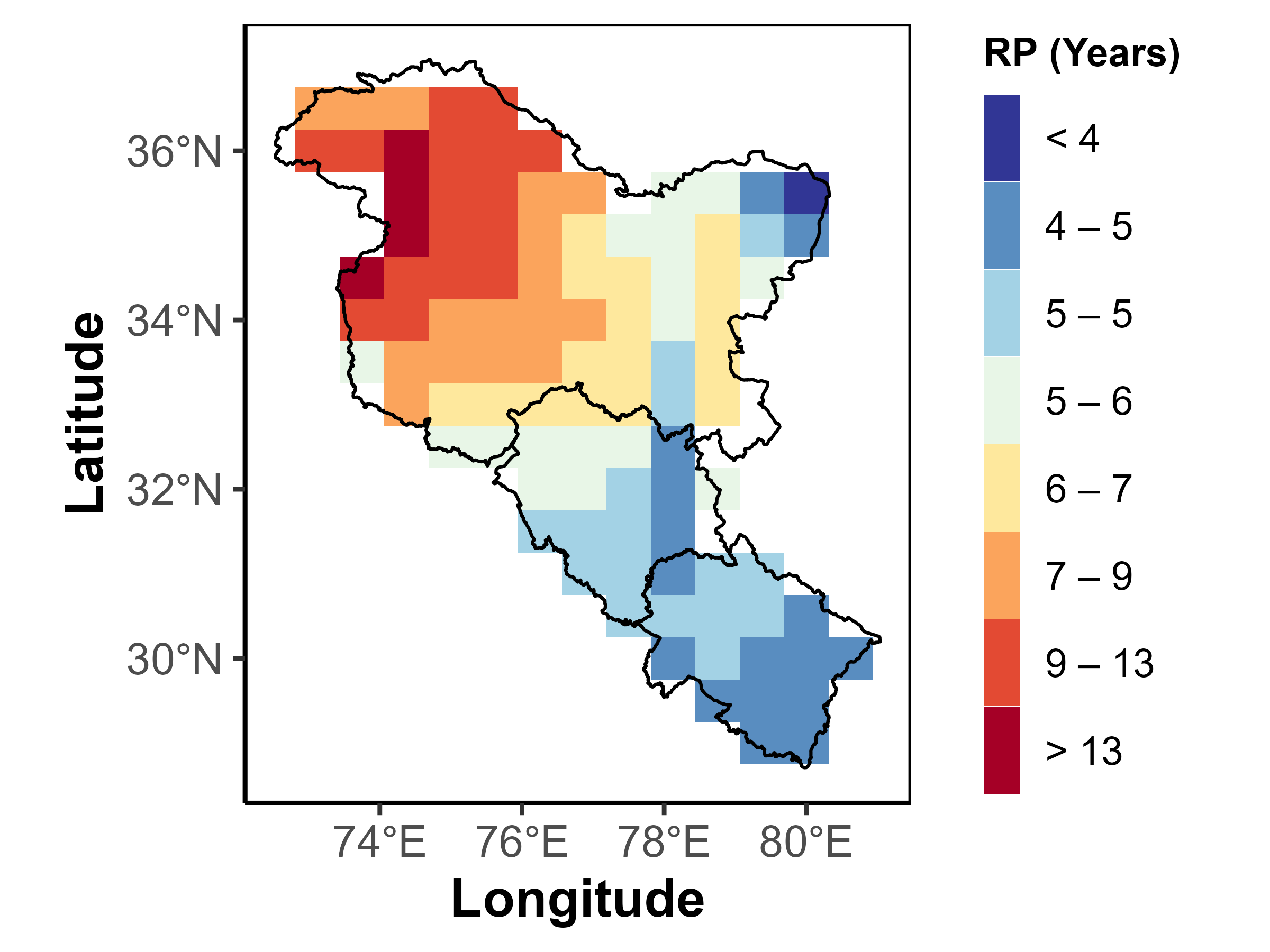}
		\caption*{(b)}
	\end{minipage}
	\caption{Univariate return periods calculated from monthly rainfall data: (a, b) summer and winter maximum rainfall reaching three times the mean rainfall, respectively.}
	\label{Figure 16}
\end{figure}

As shown in Figure \ref{Figure 16}, the univariate return periods are derived from monthly rainfall data. Panels (a) and (b) present the return periods of summer and winter maximum rainfall reaching three times the mean rainfall, respectively. From Figure \ref{Figure 16} (a) and (b), it is evident that across the entire NWH, the return periods associated with summer maximum rainfall exceeding three times the mean are consistently shorter than those observed in winter, indicating a higher likelihood of such extreme rainfall events during the summer season. Within the summer period, JK exhibits comparatively lower return periods than UK and HP, suggesting a greater frequency of extreme rainfall exceedances in this region. This spatial variability can be attributed to differences in baseline climatology: regions with relatively lower mean rainfall tend to have smaller thresholds for the “three times mean” criterion, making exceedances more probable and thereby resulting in shorter return periods, whereas regions with higher mean rainfall require substantially larger absolute amounts to reach the same threshold, leading to comparatively longer return periods.

\subsubsection{Joint Bivariate Return Periods}
In estimating univariate return periods, the interdependence between variables is not accounted for. To incorporate this dependence, joint bivariate return periods are considered, which can be evaluated under the following two cases:
\begin{enumerate}
	\item \textbf{OR case:} When at least one of the variables exceeds its threshold,
	\begin{equation}
		T_{\text{OR}} 
		= \frac{1}{P(X_1 \geq x_1 \text{ or } X_2 \geq x_2)}
		= \frac{1}{1 - C\left(F_{X_1}(x_1), F_{X_2}(x_2)\right)}
	\end{equation}
		\item \textbf{AND case:} When both variables simultaneously exceed their respective thresholds,
	\begin{equation}
		T_{\text{AND}} 
		= \frac{1}{P(X_1 \geq x_1 \text{ and } X_2 \geq x_2)}
		= \frac{1}{1 - F_{X_1}(x_1) - F_{X_2}(x_2) 
			+ C\left(F_{X_1}(x_1), F_{X_2}(x_2)\right)}
	\end{equation}
\end{enumerate}
where $C(\cdot)$ denotes the fitted copula function describing the dependence structure between temperature and rainfall. 
\begin{figure}[ht]
	\begin{minipage}[t]{0.45\textwidth}
		\centering
		\includegraphics[width=1\textwidth]{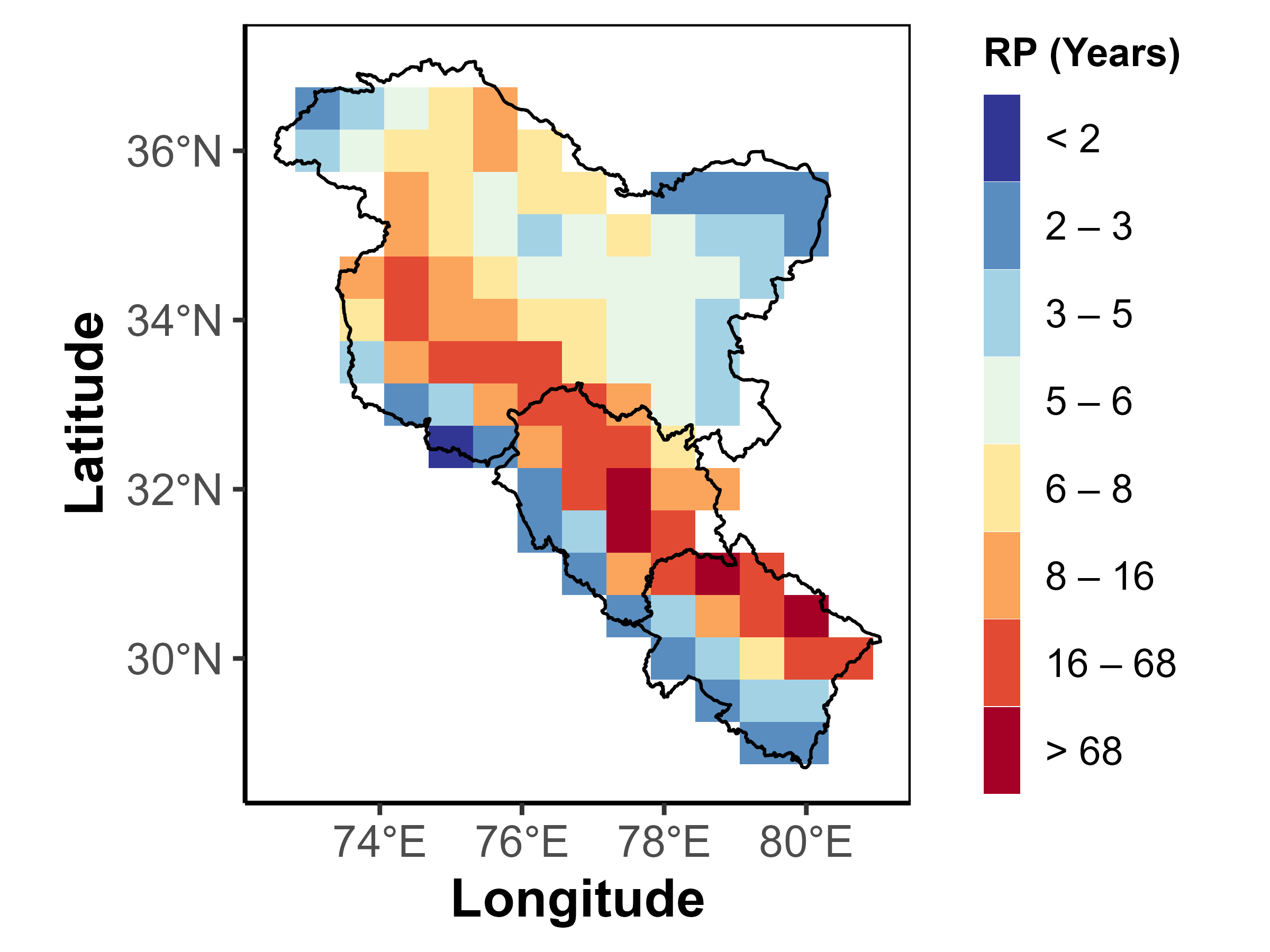}
		\caption*{(a)} 
	\end{minipage}\hfill
	\begin{minipage}[t]{0.45\textwidth}
		\centering
		\includegraphics[width=1\textwidth]{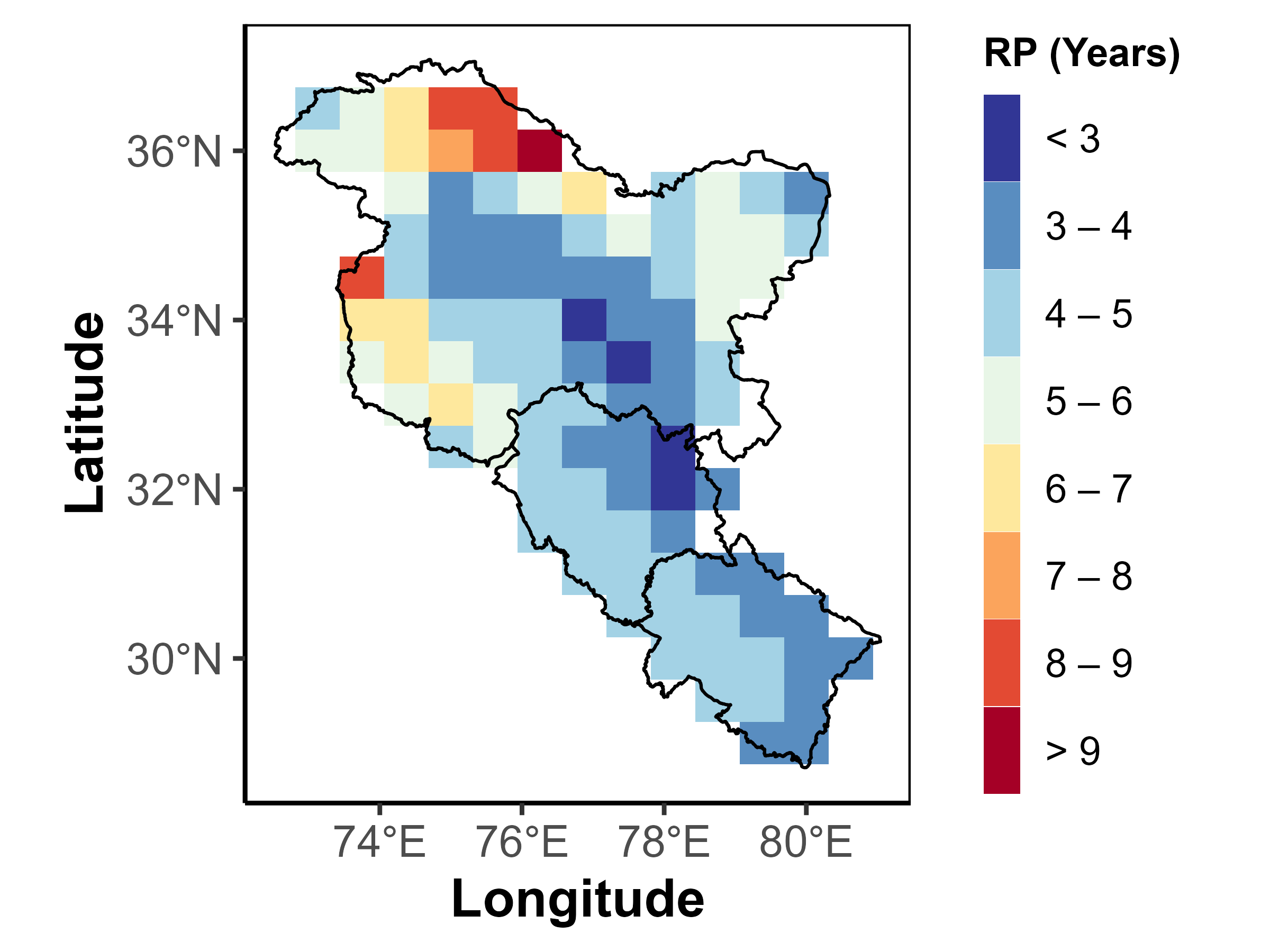}
		\caption*{(b)} 
	\end{minipage}	
	\caption{Bivariate joint return period: (a, b) summer and winter maximum temperatures exceeding the mean temperature by $4^\circ$C or maximum rainfall attaining three times the mean rainfall, respectively.}
	\label{Figure 14}
\end{figure}

Figure \ref{Figure 14} represents the return periods associated with either maximum temperature exceeding the mean by $4^\circ$C or maximum rainfall reaching three times the mean rainfall during the summer and winter seasons, respectively. Consistent with the univariate analysis, summer return periods are generally longer than those in winter, and this pattern persists in the joint OR framework. During summer as shown in Figure \ref{Figure 14} (a), LH and eastern parts of JK exhibit relatively shorter joint return periods—similar to the behavior observed for univariate temperature—whereas the MH region shows comparatively longer return periods. In contrast, during winter as represented in Figure \ref{Figure 14} (b), the entire NWH has return periods of less than 10 years, indicating that within a decade, at least one of the variables—temperature or rainfall—is likely to exceed the specified thresholds.

\begin{figure}[ht]
	\begin{minipage}[t]{0.45\textwidth}
		\centering
		\includegraphics[width=1\textwidth]{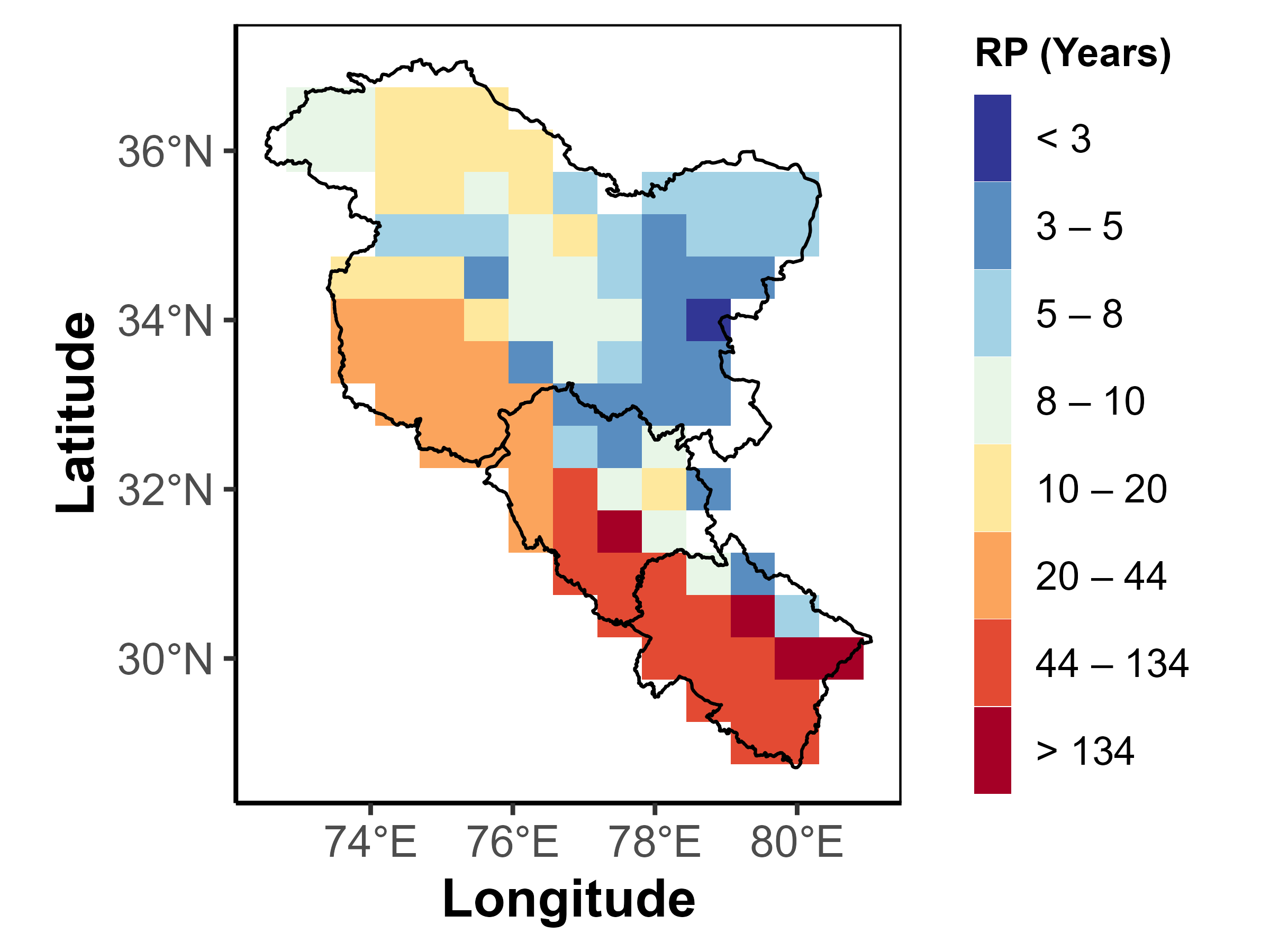}
		\caption*{(a)}
	\end{minipage}\hfill
	\begin{minipage}[t]{0.45\textwidth}
		\centering
		\includegraphics[width=1\textwidth]{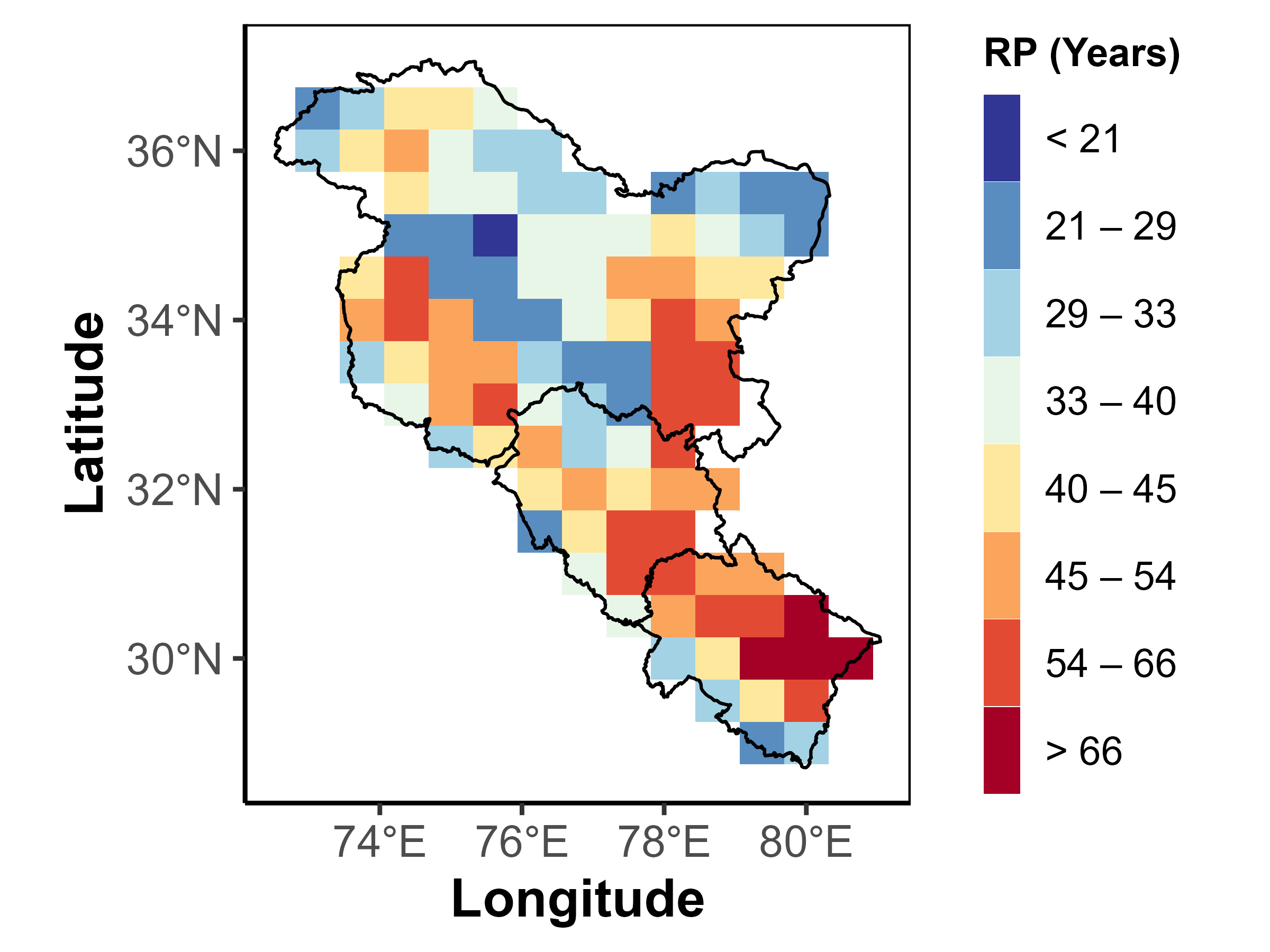}
		\caption*{(b)}
	\end{minipage}
	\caption{Bivariate joint return period: (a, b) summer and winter maximum temperatures exceeding the mean temperature by $2^\circ$C and maximum rainfall attaining two times the mean rainfall, respectively.}
	\label{Figure 17}
\end{figure}

Figure \ref{Figure 17} depicts the bivariate joint return periods associated with the simultaneous occurrence of maximum temperature exceeding the mean by 
$2^\circ$C and maximum rainfall reaching twice the mean during the summer and winter seasons. During summer (Figure \ref{Figure 17} (a)), LH and MH regions, covering most parts of UK and HP, exhibit relatively higher return periods, indicating that the concurrent exceedance of both temperature and rainfall thresholds is less frequent. In contrast, UH, particularly over JK, shows comparatively lower return periods, suggesting a higher likelihood of joint extremes. In winter (Figure \ref{Figure 17} (b)), LH, especially along the western boundary, displays lower return periods compared to other parts of UK and HP, implying more frequent joint exceedance. Overall, JK shows consistently lower return periods in both seasons, indicating that simultaneous exceedance of temperature and rainfall thresholds are more likely to occur in this region.

\subsubsection{Conditional Bivariate Return Periods}
After estimating the joint return periods, we want to quantify how the occurrence of one variable modifies the likelihood of extremes in the other. This dependence is captured through conditional return periods, which explicitly account for the effect of one variable given that the other has already reached or exceeded a specified threshold which can be defined as:
\begin{equation}
	T (X_1 \mid X_2) 
	= \frac{1}{1-\frac{F_{X_1}(x_1) - C\left(F_{X_1}(x_1), F_{X_2}(x_2)\right)}{1 - F_{X_2}(x_2)}}
\end{equation}
\begin{equation}
	T (X_2 \mid X_1) 
	= \frac{1}{1-\frac{F_{X_2}(x_2) - C\left(F_{X_1}(x_1), F_{X_2}(x_2)\right)}{1 - F_{X_1}(x_1)}}
\end{equation}

\begin{figure}[ht]
	\begin{minipage}[t]{0.45\textwidth}
		\centering
		\includegraphics[width=1\textwidth]{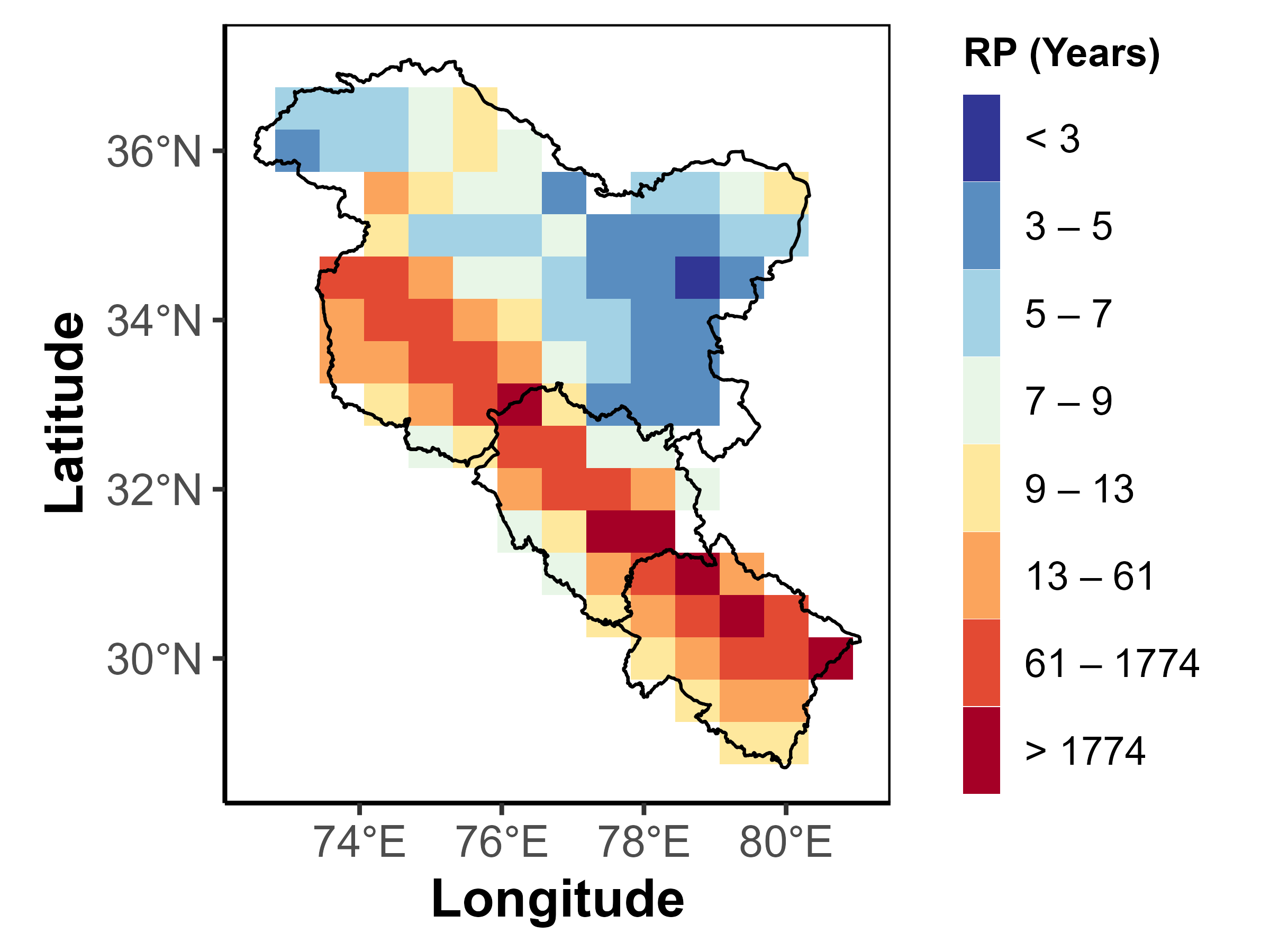}
		\caption*{(a)}
	\end{minipage}\hfill
	\begin{minipage}[t]{0.45\textwidth}
		\centering
		\includegraphics[width=1\textwidth]{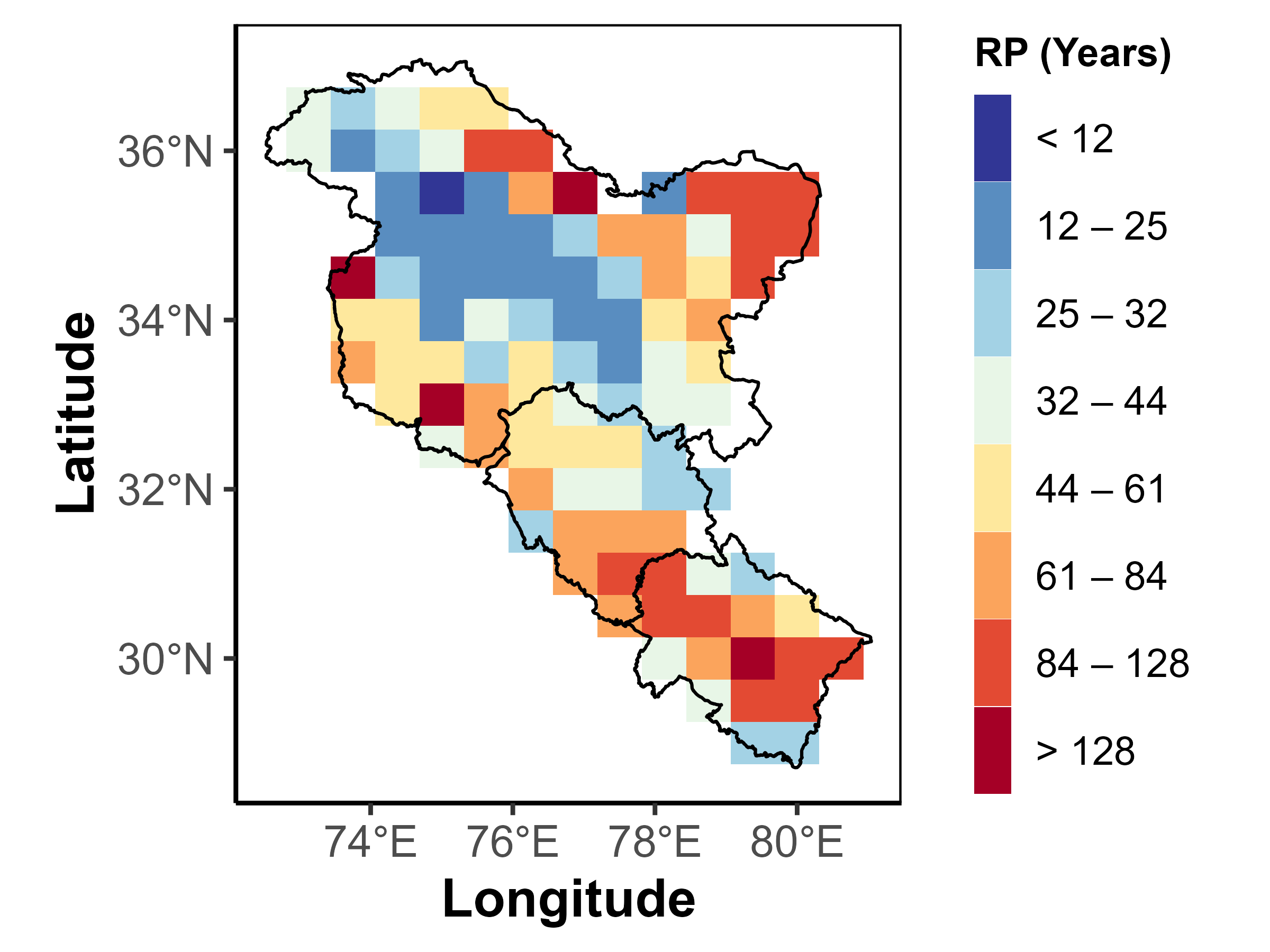}
		\caption*{(b)}
	\end{minipage}
	\caption{Conditional return periods: (a, b) summer and winter maximum temperatures surpassing the mean temperature by $4^\circ$C given maximum rainfall is equal to the mean rainfall, respectively.}
	\label{Figure 15}
\end{figure}

Figure \ref{Figure 15} illustrates the conditional return periods associated with maximum temperature exceeding the mean by $4^\circ$C, conditioned on rainfall being fixed at its mean, for summer and winter, respectively. In Figure \ref{Figure 15}(a), the summer patterns largely resemble the univariate case in both spatial distribution and magnitude, with the notable exception of LH, where higher conditional return periods are observed. This increase indicates the influence of negative dependence between temperature and rainfall, leading to reduced likelihood of temperature exceedance when rainfall is held constant. In winter (Figure \ref{Figure 15}(b)), the overall spatial structure remains comparable to the univariate case; however, return periods are generally elevated. This reflects the moderating effect of rainfall on temperature, consistent with negative dependence, resulting in less frequent temperature extremes under the given condition.
\begin{figure}[ht]
	\begin{minipage}[t]{0.45\textwidth}
		\centering
		\includegraphics[width=1\textwidth]{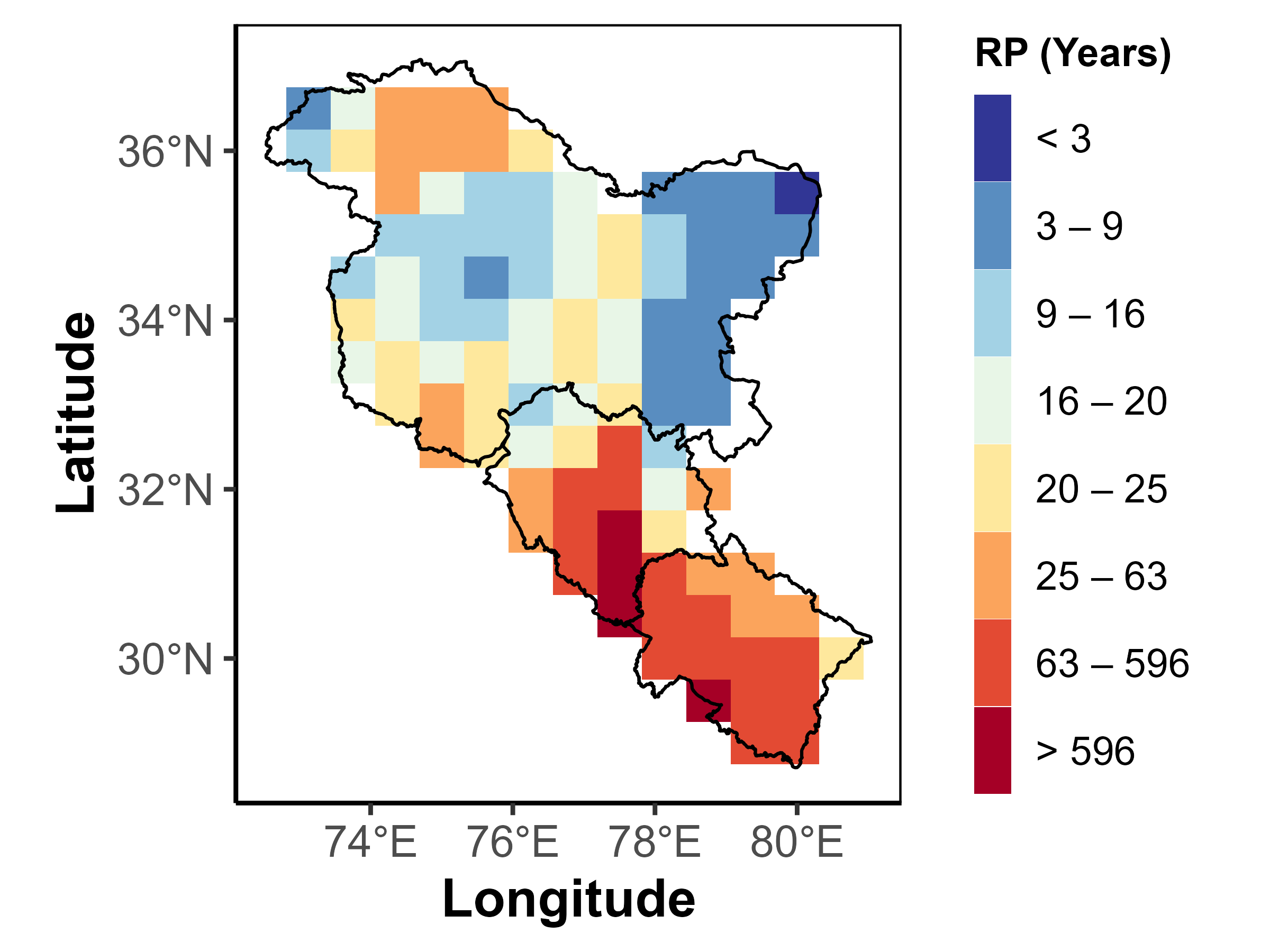}
		\caption*{(a)} 
	\end{minipage}\hfill
	\begin{minipage}[t]{0.45\textwidth}
		\centering
		\includegraphics[width=1\textwidth]{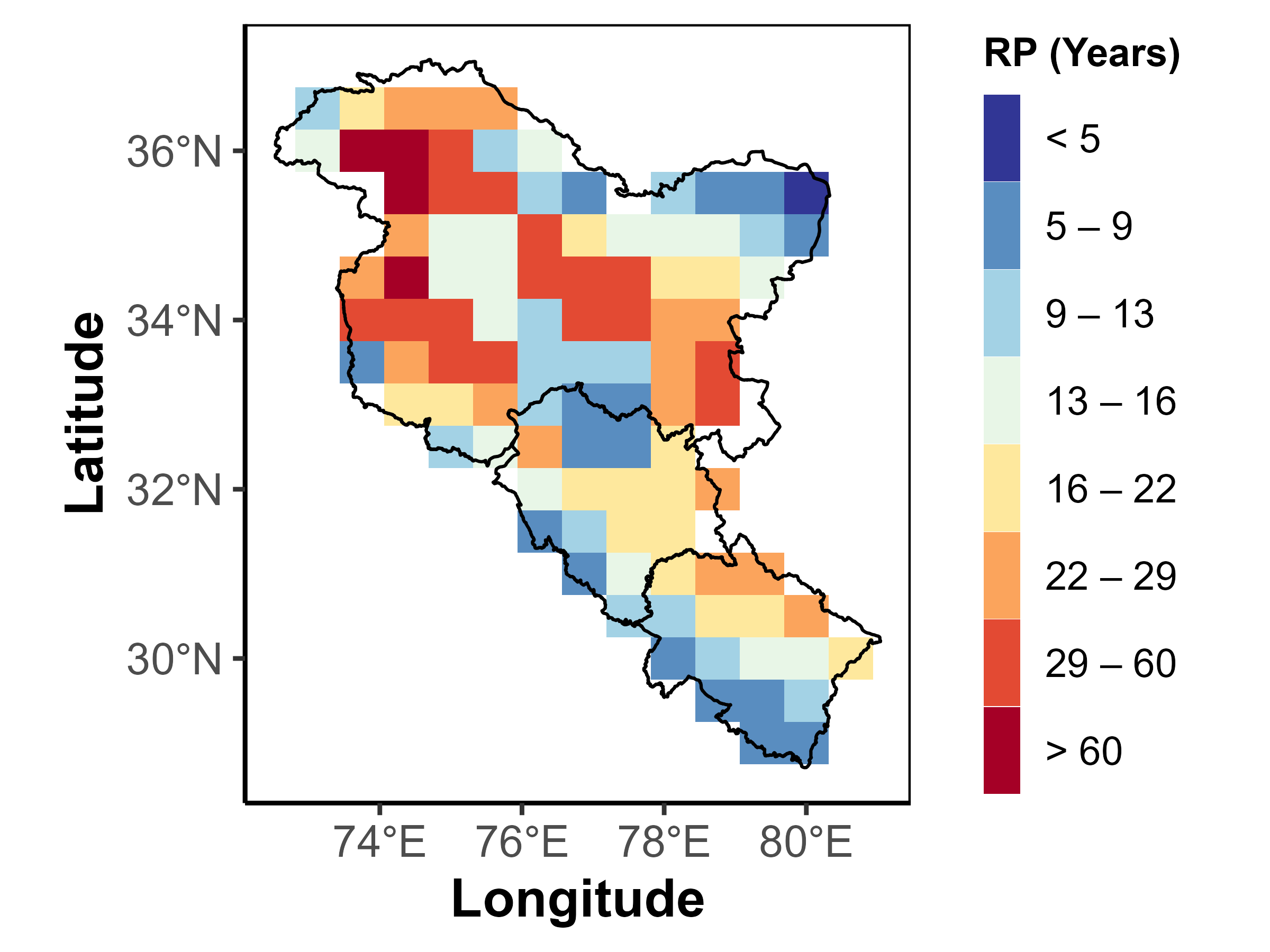}
		\caption*{(c)} 
	\end{minipage}
	\caption{Conditional return periods: (a, b) summer and winter maximum rainfall attaining three times the mean rainfall given summer and winter maximum temperatures is equal to the mean temperature respectively.}
	\label{Figure 18}
\end{figure}

Figure \ref{Figure 18} presents the conditional return periods for rainfall exceeding three times the mean, given temperature is fixed at its mean level, for both seasons. In summer (Figure \ref{Figure 18}(a)), the spatial pattern remains consistent with the univariate case, while the overall increase in return periods suggests a constraining influence of temperature on rainfall extremes. In winter (Figure \ref{Figure 18}(b)), distinct spatial variability is observed, with LH and eastern JK exhibiting relatively lower return periods, indicating a higher likelihood of extreme rainfall occurrences in these regions compared to MH and UH.

\section{Conclusion}
In this study, bivariate Clayton and Gumbel Kumaraswamy–Teissier distributions are developed with seven-parameter framework to flexibly model dependence structures between variables. To accommodate both positive and negative dependence, rotated copula forms (90°, 180°, and 270°) are incorporated, enabling a comprehensive characterization of tail dependence. Parameter estimation is carried out using MLE and IFM, and their finite-sample performance is assessed through a Monte Carlo simulation study, demonstrating accuracy and stability of the estimators. The practical applicability of the proposed models is demonstrated using monthly gridded rainfall and temperature data over NWH. The analysis indicates that positively correlated grids are well captured by the standard Clayton and Gumbel KTD, whereas negatively correlated structures are effectively modeled by the rotated $90^\circ$ copula. The adequacy of the proposed models across all grids is supported by CVM test results, along with improved goodness-of-fit measures based on $\mathcal{L}$, AIC, and BIC in comparison to several existing bivariate models. The selected models successfully capture the underlying tail dependence structures, including lower, upper, and cross-tail behaviour specific to each grid. Furthermore, to assess the occurrence of extreme events, uni return periods based on KTD, as well as joint and conditional return periods using the copula framework, are estimated. These results provide insight into the frequency and co-occurrence of extreme rainfall and temperature events. Overall, the derived tail dependence measures and return period estimates offer a robust framework for understanding and quantifying extreme climate events. The introduced modeling approach can be utilized for hydrological applications and for reliable assessment of flood risk and extreme temperature events over NWH.

\subsection*{Acknowledgments}
We express our gratitude to all the data providers for their contributions to the statistical data analysis, duly acknowledged through appropriate citations.

\appendix
\subsection*{Appendix A}\label{appendixA}
\begin{eqnarray*}
\frac{\partial \mathcal{L}(\pv_1)}{\partial \a_1} &=& \frac{\ss}{\a_1} + \sum_{\i=1}^{\ss} \log y_{1i} \left[1- \frac{ (\b_1-1)}{y_{1i}^{-\a_1}-1} \right] - \sum_{\i=1}^{\ss} \frac{b_1\ y_{1i}^{a_1} \ \log(y_{1i}) \ \left(1-y_{1i}^{a_1} \right)^{b_1-1}  }{ \left( 1 - \left(1- y_{1i} ^{a_1} \right) ^{\b_1} \right)}  \nonumber \\ && \left[ (\delta_1+1) -  
\frac{(2\delta_1+1) \left( 1 - \left(1- y_{1i} ^{a_1} \right) ^{\b_1} \right)^{-\delta_1}}{\left( 1 - \left(1- y_{1i} ^{a_1} \right) ^{\b_1} \right)^{-\delta_1} + \left( 1 - \left(1- y_{1i} ^{a_1} \right) ^{\b_1} \right)^{-\delta_1} - 1}\right] = 0 \\
\frac{\partial \mathcal{L}(\pv_1)}{\partial \a_2} &=& \frac{\ss}{\a_2} + \sum_{\i=1}^{\ss} \log y_{2i} \left[1- \frac{ (\b_2-1)}{y_{2i}^{-\a_2}-1} \right] - \sum_{\i=1}^{\ss} \frac{b_2\ y_{2i}^{a_2} \ \log(y_{2i}) \ \left(1-y_{2i}^{a_2} \right)^{b_2-1}  }{ \left( 1 - \left(1- y_{2i} ^{a_2} \right) ^{\b_2} \right)}  \nonumber \\ && \left[ (\delta_1+1) -  
\frac{(2\delta_1+1) \left( 1 - \left(1- y_{2i} ^{a_2} \right) ^{\b_2} \right)^{-\delta_1}}{\left( 1 - \left(1- y_{2i} ^{a_2} \right) ^{\b_2} \right)^{-\delta_1} + \left( 1 - \left(1- y_{2i} ^{a_2} \right) ^{\b_2} \right)^{-\delta_1} - 1}\right] = 0\\
\frac{\partial \mathcal{L}(\pv_1)}{\partial \b_1} &=& \frac{\ss}{\b_1} + \sum_{\i=1}^{\ss} \log\left(1- y_{1i} ^{\a_1} \right) + \sum_{\i=1}^{\ss} \frac{ \ \left(1-y_{1i}^{a_1} \right)^{b_1} \log\left(1-y_{1i}^{a_1} \right)   }{ \left( 1 - \left(1- y_{1i} ^{a_1} \right) ^{\b_1} \right)}  \nonumber \\ && \left[ (\delta_1+1) -  
\frac{(2\delta_1+1) \left( 1 - \left(1- y_{1i} ^{a_1} \right) ^{\b_1} \right)^{-\delta_1}}{\left( 1 - \left(1- y_{1i} ^{a_1} \right) ^{\b_1} \right)^{-\delta_1} + \left( 1 - \left(1- y_{1i} ^{a_1} \right) ^{\b_1} \right)^{-\delta_1} - 1}\right] = 0\\
\frac{\partial \mathcal{L}(\pv_1)}{\partial \b_2} &=& \frac{\ss}{\b_2} + \sum_{\i=1}^{\ss} \log\left(1- y_{2i} ^{\a_2} \right)  + \sum_{\i=1}^{\ss} \frac{ \ \left(1-y_{2i}^{a_2} \right)^{b_2} \log\left(1-y_{2i}^{a_2} \right)   }{ \left( 1 - \left(1- y_{2i} ^{a_2} \right) ^{\b_2} \right)}  \nonumber \\ && \left[ (\delta_1+1) -  
\frac{(2\delta_1+1) \left( 1 - \left(1- y_{2i} ^{a_2} \right) ^{\b_2} \right)^{-\delta_1}}{\left( 1 - \left(1- y_{2i} ^{a_2} \right) ^{\b_2} \right)^{-\delta_1} + \left( 1 - \left(1- y_{2i} ^{a_2} \right) ^{\b_2} \right)^{-\delta_1} - 1}\right] = 0\\
\frac{\partial \mathcal{L}(\pv_1)}{\partial \th_1} &=& \frac{\ss}{\th_1} + \sum_{\i=1}^{\ss} \frac{e^{\th_1 x_{1i}} \ x_{1i}}  {\left(e^{\th_1 x_{1i}}-1 \right)} - \sum_{\i=1}^{\ss} x_{1i} \left(e^{\th x_{1i}}-1 \right)+ \sum_{\i=1}^{\ss} \frac{x_{1i} e^{\phi(\x_{1i}; \th_1) } (e^{\th x_{1i}} -1)}{y_{1i}} \left[ \a_1-1 - \frac{\a_1 (\b_1-1)}{y_{1i}^{-\a_1}-1} \right]  \nonumber \\ && -\sum_{\i=1}^{\ss} \frac{a_1 \ b_1\ x_{1i} \ e^{\phi(\x_{1i}; \th_1) } \ (e^{\th_1 x_{1i}} -1) y_{1i}^{a_1-1} \ \left(1-y_{1i}^{a_1} \right)^{b_1-1}  }{ \left( 1 - \left(1- y_{1i} ^{a_1} \right) ^{\b_1} \right)}  \nonumber \\ && \left[ (\delta_1+1) -  \frac{(2\delta_1+1) \left( 1 - \left(1- y_{1i} ^{a_1} \right) ^{\b_1} \right)^{-\delta_1}}{\left( 1 - \left(1- y_{1i} ^{a_1} \right) ^{\b_1} \right)^{-\delta_1} + \left( 1 - \left(1- y_{1i} ^{a_1} \right) ^{\b_1} \right)^{-\delta_1} - 1}\right] = 0\\
\frac{\partial \mathcal{L}(\pv_1)}{\partial \th_2} &=& \frac{\ss}{\th_2} + \sum_{\i=1}^{\ss} \frac{e^{\th_2 x_{2i}} \ x_{2i}}  {\left(e^{\th_2 x_{2i}}-1 \right)} - \sum_{\i=1}^{\ss} x_{2i} \left(e^{\th x_{2i}}-1 \right)+ \sum_{\i=1}^{\ss} \frac{x_{2i} e^{\phi(\x_{2i}; \th_2) } (e^{\th_2 x_{2i}} -1)}{y_{2i}} \left[ \a_2-1 - \frac{\a_2 (\b_2-1)}{y_{2i}^{-\a_2}-1} \right]  \nonumber \\ && -\sum_{\i=1}^{\ss} \frac{a_2 \ b_2\ x_{2i} \ e^{\phi(\x_{2i}; \th_2) } \ (e^{\th_2 x_{2i}} -1) y_{2i}^{a_2-1} \ \left(1-y_{2i}^{a_2} \right)^{b_2-1}  }{ \left( 1 - \left(1- y_{2i} ^{a_2} \right) ^{\b_2} \right)}  \nonumber \\ && \left[ (\delta_1+1) -  \frac{(2\delta_1+1) \left( 1 - \left(1- y_{2i} ^{a_2} \right) ^{\b_2} \right)^{-\delta_1}}{\left( 1 - \left(1- y_{2i} ^{a_2} \right) ^{\b_2} \right)^{-\delta_1} + \left( 1 - \left(1- y_{2i} ^{a_2} \right) ^{\b_2} \right)^{-\delta_1} - 1}\right] = 0\\
\frac{\partial \mathcal{L}(\pv_1)}{\partial \delta_1} &=& \frac{n}{\delta_1+1} -\sum_{\i=1}^{\ss} \log \left( 1 - \left(1- y_{1i} ^{a_1} \right) ^{\b_1} \right) - \sum_{\i=1}^{\ss} \log \left( 1 - \left(1- y_{2i} ^{a_2} \right) ^{\b_2} \right) \nonumber \\ &&+ \frac{1}{\delta_1^2} \sum_{\i=1}^{\ss} \log \left( \left( 1 - \left(1- y_{1i} ^{a_1} \right) ^{\b_1} \right)^{-\delta_1} + \left( 1 - \left(1- y_{2i} ^{a_2} \right) ^{\b_2} \right)^{-\delta_1} - 1 \right) - \frac{(2\delta_1+1)}{\delta_1} \times \nonumber \\ &&  \sum_{\i=1}^{\ss} \frac{ \left( 1 - \left(1- y_{1i} ^{a_1} \right) ^{\b_1} \right)^{-\delta_1} \log\left( 1 - \left(1- y_{1i} ^{a_1} \right) ^{\b_1} \right) + \left( 1 - \left(1- y_{2i} ^{a_2} \right) ^{\b_2} \right)^{-\delta_1} \log\left( 1 - \left(1- y_{2i} ^{a_2} \right) ^{\b_2} \right)}{\left( 1 - \left(1- y_{1i} ^{a_1} \right) ^{\b_1} \right)^{-\delta_1}  + \left( 1 - \left(1- y_{2i} ^{a_2} \right) ^{\b_2} \right)^{-\delta_1} -1}\nonumber \\ && = 0
\end{eqnarray*}

\subsection*{Appendix B }\label{appendixB}

\begin{eqnarray*}
\frac{\partial \mathcal{L}(\pv_2)}{\partial \a_1} &=& \frac{\ss}{\a_1} + \sum_{\i=1}^{\ss} \log y_{1i} \left[1- \frac{ (\b_1-1)}{y_{1i}^{-\a_1}-1} \right] +  \sum_{\i=1}^{\ss} \frac{ b_1\ y_{1i}^{a_1} \ \log(y_{1i}) \ \left(1-y_{1i}^{a_1} \right)^{b_1-1}}{z_i \left\{-\log \left( 1 - \left(1- y_{1i} ^{a_1} \right) ^{\b_1} \right) \right\}^{1-{\delta_2}}}
\nonumber \\ && \left[ (2{\delta_2}-1) - \frac{1}{1+({\delta_2}-1) z_i^{-1/{\delta_2}}} + z_i^{1/{\delta_2}} \right] - \sum_{\i=1}^{\ss} \frac{b_1\ y_{1i}^{a_1} \ \log(y_{1i}) \ \left(1-y_{1i}^{a_1} \right)^{b_1-1}}{\left( 1 - \left(1- y_{1i} ^{a_1} \right) ^{\b_1} \right)}\nonumber \\ &&  \left[ \frac{1-{\delta_2}}{\log \left( 1 - \left(1- y_{1i} ^{a_1} \right) ^{\b_1} \right) } -1 \right] =0\\
\frac{\partial \mathcal{L}(\pv_2)}{\partial \a_2} &=& \frac{\ss}{\a_2} + \sum_{\i=1}^{\ss} \log y_{2i} \left[1- \frac{ (\b_2-1)}{y_{2i}^{-\a_2}-1} \right]+  \sum_{\i=1}^{\ss} \frac{ b_2\ y_{2i}^{a_2} \ \log(y_{2i}) \ \left(1-y_{2i}^{a_2} \right)^{b_2-1}}{z_i \left\{-\log \left( 1 - \left(1- y_{2i} ^{a_2} \right) ^{\b_2} \right) \right\}^{1-{\delta_2}}}
\nonumber \\ && \left[ (2{\delta_2}-1) - \frac{1}{1+({\delta_2}-1) z_i^{-1/{\delta_2}}} + z_i^{1/{\delta_2}} \right] - \sum_{\i=1}^{\ss} \frac{b_2\ y_{2i}^{a_2} \ \log(y_{2i}) \ \left(1-y_{2i}^{a_2} \right)^{b_2-1}}{\left( 1 - \left(1- y_{2i} ^{a_2} \right) ^{\b_2} \right)}\nonumber \\ &&  \left[ \frac{1-{\delta_2}}{\log \left( 1 - \left(1- y_{2i} ^{a_2} \right) ^{\b_2} \right) } -1 \right] =0\\
\frac{\partial \mathcal{L}(\pv_2)}{\partial \b_1} &=& \frac{\ss}{\b_1} + \sum_{\i=1}^{\ss} \log\left(1- y_{1i} ^{\a_1} \right) -  \sum_{\i=1}^{\ss} \frac{ \left(1-y_{1i}^{a_1} \right)^{b_1} \log\left(1-y_{1i}^{a_1} \right)  }{z_i  \left\{-\log \left( 1 - \left(1- y_{1i} ^{a_1} \right) ^{\b_1} \right) \right\}^{1-{\delta_2}}} \Bigg[ (2{\delta_2}-1) 
\nonumber \\ && - \frac{1}{1+({\delta_2}-1) z_i^{-1/{\delta_2}}} + z_i^{1/{\delta_2}} \Bigg] + \sum_{\i=1}^{\ss} \frac{\left(1-y_{1i}^{a_1} \right)^{b_1} \log\left(1-y_{1i}^{a_1} \right)  }{\left( 1 - \left(1- y_{1i} ^{a_1} \right) ^{\b_1} \right)}  \left[ \frac{1-{\delta_2}}{\log \left( 1 - \left(1- y_{1i} ^{a_1} \right) ^{\b_1} \right) } -1 \right] \nonumber \\ && =0\\
\frac{\partial \mathcal{L}(\pv_2)}{\partial \b_2} &=& \frac{\ss}{\b_2} + \sum_{\i=1}^{\ss} \log\left(1- y_{2i} ^{\a_2} \right)-  \sum_{\i=1}^{\ss} \frac{ \left(1-y_{2i}^{a_2} \right)^{b_2} \log\left(1-y_{2i}^{a_2} \right)  }{z_i  \left\{-\log \left( 1 - \left(1- y_{2i} ^{a_2} \right) ^{\b_2} \right) \right\}^{1-{\delta_2}}}\Bigg[ (2{\delta_2}-1) 
\nonumber \\ && - \frac{1}{1+({\delta_2}-1) z_i^{-1/{\delta_2}}} + z_i^{1/{\delta_2}} \Bigg] + \sum_{\i=1}^{\ss} \frac{\left(1-y_{2i}^{a_2} \right)^{b_2} \log\left(1-y_{2i}^{a_2} \right)  }{\left( 1 - \left(1- y_{2i} ^{a_2} \right) ^{\b_2} \right)}  \left[ \frac{1-{\delta_2}}{\log \left( 1 - \left(1- y_{2i} ^{a_2} \right) ^{\b_2} \right) } -1 \right] \nonumber \\ &&=0 \\
\frac{\partial \mathcal{L}(\pv_2)}{\partial \th_1} &=& \frac{\ss}{\th_1} + \sum_{\i=1}^{\ss} \frac{e^{\th_1 x_{1i}} \ x_{1i}}  {\left(e^{\th_1 x_{1i}}-1 \right)} - \sum_{\i=1}^{\ss} x_{1i} \left(e^{\th x_{1i}}-1 \right)+ \sum_{\i=1}^{\ss} \frac{x_{1i} e^{\phi(\x_{1i}; \th_1) } (e^{\th x_{1i}} -1)}{y_{1i}} \left[ \a_1-1 - \frac{\a_1 (\b_1-1)}{y_{1i}^{-\a_1}-1} \right]  \nonumber \\ && +  \sum_{\i=1}^{\ss} \frac{a_1 \ b_1\ x_{1i} \ e^{\phi(\x_{1i}; \th_1) } \ (e^{\th_1 x_{1i}} -1) y_{1i}^{a_1-1} \ \left(1-y_{1i}^{a_1} \right)^{b_1-1}}{z_i  \left\{-\log \left( 1 - \left(1- y_{1i} ^{a_1} \right) ^{\b_1} \right) \right\}^{1-{\delta_2}}} \left[ (2{\delta_2}-1) - \frac{1}{1+({\delta_2}-1) z_i^{-1/{\delta_2}}} + z_i^{1/{\delta_2}} \right]
\nonumber \\ &&  - \sum_{\i=1}^{\ss} \frac{a_1 \ b_1\ x_{1i} \ e^{\phi(\x_{1i}; \th_1) } \ (e^{\th_1 x_{1i}} -1) y_{1i}^{a_1-1} \ \left(1-y_{1i}^{a_1} \right)^{b_1-1}}{\left( 1 - \left(1- y_{1i} ^{a_1} \right) ^{\b_1} \right)}  \left[ \frac{1-{\delta_2}}{\log \left( 1 - \left(1- y_{1i} ^{a_1} \right) ^{\b_1} \right) } -1 \right] = 0\\
\frac{\partial \mathcal{L}(\pv_2)}{\partial \th_2} &=& \frac{\ss}{\th_2} + \sum_{\i=1}^{\ss} \frac{e^{\th_2 x_{2i}} \ x_{2i}}  {\left(e^{\th_2 x_{2i}}-1 \right)} - \sum_{\i=1}^{\ss} x_{2i} \left(e^{\th x_{2i}}-1 \right)+ \sum_{\i=1}^{\ss} \frac{x_{2i} e^{\phi(\x_{2i}; \th_2) } (e^{\th_2 x_{2i}} -1)}{y_{2i}} \left[ \a_2-1 - \frac{\a_2 (\b_2-1)}{y_{2i}^{-\a_2}-1} \right]  \nonumber \\ && +  \sum_{\i=1}^{\ss} \frac{a_2 \ b_2\ x_{2i} \ e^{\phi(\x_{2i}; \th_2) } \ (e^{\th_2 x_{2i}} -1) y_{2i}^{a_2-1} \ \left(1-y_{2i}^{a_2} \right)^{b_2-1} }{z_i  \left\{-\log \left( 1 - \left(1- y_{2i} ^{a_2} \right) ^{\b_2} \right) \right\}^{1-{\delta_2}}} \left[ (2{\delta_2}-1) - \frac{1}{1+({\delta_2}-1) z_i^{-1/{\delta_2}}} + z_i^{1/{\delta_2}} \right] \nonumber \\ &&  - \sum_{\i=1}^{\ss} \frac{a_2 \ b_2\ x_{2i} \ e^{\phi(\x_{2i}; \th_2) } \ (e^{\th_2 x_{2i}} -1) y_{2i}^{a_2-1} \ \left(1-y_{2i}^{a_2} \right)^{b_2-1} }{\left( 1 - \left(1- y_{2i} ^{a_2} \right) ^{\b_2} \right)}  \left[ \frac{1-{\delta_2}}{\log \left( 1 - \left(1- y_{2i} ^{a_2} \right) ^{\b_2} \right) } -1 \right] = 0\\
\frac{\partial \mathcal{L}(\pv_2)}{\partial {\delta_2}} &=& \sum_{\i=1}^{\ss} \log \left[ \left\{\log \left( 1 - \left(1- y_{1i} ^{a_1} \right) ^{\b_1} \right) \right\}  \left\{\log \left( 1 - \left(1- y_{2i} ^{a_2} \right) ^{\b_2} \right) \right\} \right] -\frac{1}{{\delta_2}^2} \sum_{\i=1}^{\ss} \log(z_i) \nonumber \\ &&- (2-1/{\delta_2}) \sum_{\i=1}^{\ss} \frac{1}{z} \biggl[ \left\{-\log \left( 1 - \left(1- y_{1i} ^{a_1} \right) ^{\b_1} \right) \right\}^{\delta_2} \log\left\{-\log \left( 1 - \left(1- y_{1i} ^{a_1} \right) ^{\b_1} \right) \right\} \nonumber \\ &&+ \left\{-\log \left( 1 - \left(1- y_{2i} ^{a_2} \right) ^{\b_2} \right) \right\}^{\delta_2} \log\left\{-\log \left( 1 - \left(1- y_{2i} ^{a_2} \right) ^{\b_2} \right) \right\} \biggr]  + \sum_{\i=1}^{\ss} \frac{1}{z_i^{1/{\delta_2}} + {\delta_2}-1}\nonumber \\ &&  \sum_{\i=1}^{\ss} \frac{z^{1/{\delta_2}} \log(z)}{{\delta_2}^2} \biggl[ \left\{-\log \left( 1 - \left(1- y_{1i} ^{a_1} \right) ^{\b_1} \right) \right\}^{\delta_2} \log\left\{-\log \left( 1 - \left(1- y_{1i} ^{a_1} \right) ^{\b_1} \right) \right\} \nonumber \\ &&+ \left\{-\log \left( 1 - \left(1- y_{2i} ^{a_2} \right) ^{\b_2} \right) \right\}^{\delta_2} \log\left\{-\log \left( 1 - \left(1- y_{2i} ^{a_2} \right) ^{\b_2} \right) \right\} \biggr] \left(1 - \frac{1}{z_i^{1/{\delta_2}} + {\delta_2}-1} \right)
\end{eqnarray*}
where $z_i = \left[ \left\{-\log \left( 1 - \left(1- y_{1i} ^{a_1} \right) ^{\b_1} \right) \right\}^{\delta_2} + \left\{-\log \left( 1 - \left(1- y_{2i} ^{a_2} \right) ^{\b_2} \right) \right\}^{\delta_2} \right]$

\bibliographystyle{apalike}
\bibliography{test}

\end{document}